\let\emph\undefined
\newcommand{\emph}[1]{\textsl{#1}}
\numberwithin{equation}{section}
\numberwithin{equation}{section}
\newtheoremstyle{style1}
  {13pt}
  {13pt}
  {}
  {}
  {\normalfont\bfseries}
  {.}
  {.5em}
  {}
\theoremstyle{style1}
\newtheorem{definition}[equation]{Definition}
\newtheorem{example}[equation]{Example}
\newtheorem{remark}[equation]{Remark}
\newtheoremstyle{style2}
  {13pt}
  {13pt}
  {\slshape}
  {}
  {\normalfont\bfseries}
  {.}
  {.5em}
  {}
\theoremstyle{style2}
\newtheorem{lemma}[equation]{Lemma}
\newtheorem{theorem}[equation]{Theorem}
\newtheorem{proposition}[equation]{Proposition}
\newcommand{\R}{\mathbb{R}}
\newcommand{\C}{\mathbb{C}}
\newcommand{\Z}{\mathbb{Z}}
\newcommand{\Ca}{\mathcal{C}}
\newcommand{\Gscr}{\mathscr{G}}
\newcommand{\End}{\mathsf{End}}
\newcommand{\Aut}{\mathsf{Aut}}
\newcommand{\Hom}{\mathsf{Hom}}
\newcommand{\Obj}{\mathsf{Obj}}
\newcommand{\im}{\mathrm{im}}
\newcommand{\id}{\text{id}}
\newcommand{\tr}{\text{tr}}
\newcommand{\pr}{\text{pr}}
\newcommand{\Ds}{/\hspace{-0.12cm}/}
\newcommand{\DS}{\text{/\hspace{-0.12cm}/}}
\newcommand{\iu}{\mathrm{i}}
\newcommand{\dd}{\mathrm{d}}
\newcommand{\BunG}{\mathsf{Bun}_G}
\newcommand{\BunD}{\mathsf{Bun}_D}
\newcommand{\fvs}{{\mathsf{Vect}}}
\newcommand{\Tvs}{{\mathsf{2Vect}}}
\newcommand{\Cob}{{\mathsf{Cob}}}
\newcommand{\CobF}{{{\mathsf{Cob}}^{\mathscr{F}}_{n}}}
\newcommand{\Hilb}{{\mathsf{Hilb}}}
\newcommand{\Fund}{\mathsf{Fund}}
\newcommand{\GCob}{G\text{-}\Cob_n}
\newcommand{\EGCob}{G\text{-}\Cob_{n,n-1,n-2}}
\newcommand{\ECob}{\Cob_{n,n-1,n-2}}
\newcommand{\DCob}{D\text{-}\Cob_n}
\newcommand{\EDCob}{D\text{-}\Cob_{n,n-1,n-2}}
\newcommand{\DTFT}{D\text{-}\mathsf{TFT}_n}
\newcommand{\Cat}{\mathsf{Cat}}
\newcommand{\Grp}{{\mathsf{Grp}}}
\newcommand{\Grpd}{{\mathsf{Grpd}}}
\DeclareMathSymbol{\Phiit}{\mathalpha}{letters}{"08} 
\DeclareMathSymbol{\Psiit}{\mathalpha}{letters}{"09}
\DeclareMathSymbol{\Sigmait}{\mathalpha}{letters}{"06}
\DeclareMathSymbol{\Xiit}{\mathalpha}{letters}{"04}
\DeclareMathSymbol{\Piit}{\mathalpha}{letters}{"05}
\DeclareMathSymbol{\Gammait}{\mathalpha}{letters}{"00}
\DeclareMathSymbol{\Omegait}{\mathalpha}{letters}{"0A}
\DeclareMathSymbol{\Upsilonit}{\mathalpha}{letters}{"07}
\DeclareMathSymbol{\thetait}{\mathalpha}{letters}{"02}
\let\Phi\undefined\newcommand{\Phi}{\Phiit}
\let\Sigma\undefined\newcommand{\Sigma}{\Sigmait}
\let\Psi\undefined\newcommand{\Psi}{\Psiit}
\begin{document}

\begin{flushright}
\small
{\sf EMPG--18--23}
\end{flushright}

\vspace{10mm}

\begin{center}
	\textbf{\LARGE{'t Hooft anomalies of discrete gauge theories \\[4pt] and non-abelian group cohomology}}\\
	\vspace{1cm}
	{\large Lukas Müller} \ \ and \ \ {\large Richard J. Szabo}

\vspace{5mm}

{\em Department of Mathematics\\
Heriot-Watt University\\
Colin Maclaurin Building, Riccarton, Edinburgh EH14 4AS, U.K.}\\
and {\em Maxwell Institute for Mathematical Sciences, Edinburgh, U.K.}\\
and {\em The Higgs Centre for Theoretical Physics, Edinburgh, U.K.}\\
Email: \ {\tt lm78@hw.ac.uk \ , \ R.J.Szabo@hw.ac.uk}

\vspace{2cm}

\end{center}

\begin{abstract}
\noindent
We study discrete symmetries of Dijkgraaf-Witten theories and their
gauging in the framework of (extended) functorial quantum field theory.
Non-abelian group cohomology is used to describe discrete symmetries
and we derive concrete conditions for such a symmetry to admit 't~Hooft 
anomalies in terms of the Lyndon-Hochschild-Serre spectral sequence.  
We give an explicit realization of a discrete gauge theory with 't~Hooft anomaly as
a state on the boundary of a higher-dimensional Dijkgraaf-Witten theory.
This allows us to calculate the 2-cocycle twisting the projective
representation of physical symmetries via transgression.    
We present a general discussion of the bulk-boundary correspondence at the level of 
partition functions and state spaces, 
which we make explicit for discrete gauge theories.     
\end{abstract}

\newpage

\tableofcontents

\section{Introduction and overview}

A Dijkgraaf-Witten theory \cite{DijkgraafWitten} is a topological 
gauge theory with a finite gauge group $D$.
These theories are mathematically well-defined 
and hence provide an interesting toy model for the
mathematical study of quantum gauge theory. 
Recently they have received a lot of attention 
in the physics literature
due to their relevance to topological
phases, see e.g.~\cite{Wang:2014oya,PhysRevB.92.045101,Yoshida:2017xqa,He:2016xpi,Cong2017,Cong:2017ffh,Delcamp2017,Tiwari2017,Wen2017}. In this context 
properties of topologically protected 
boundary states can be described by quantum field 
theories with anomalies, which are compensated by 
an anomaly inflow from the bulk. 
Anomalies in discrete gauge theories show 
interesting new features that are absent in the continuous
case~\cite{Kapustin:Symmetries,Tachikawa:2017gyf}.

The purpose of the present paper is two-fold: 
\begin{itemize}
\item 
We present a 
detailed study of 't Hooft anomalies of Dijkgraaf-Witten theories
in the framework of functorial quantum field theory.
This is based on the framework originally discussed in~\cite{Kapustin:Symmetries}. 
Our formulation further enables the
consideration of non-abelian gauge groups in terms of 
non-abelian group cohomology.     

\item 
We explicitly describe discrete gauge theories with anomalies 
as boundary phases in the language of extended functorial 
quantum field theory~\cite{FreedAnomalies,MonnierHamiltionianAnomalies}.
This provides a mathematically rigorous toy model for this geometric approach to anomalies.
These boundary theories can be regarded as functorial reformulations
and extensions of the gapped boundary states originally 
proposed in~\cite{Witten:2016cio}. 
\end{itemize}      
We begin with some physical background and motivation. 
 
\subsection{Anomalies and symmetry-protected topological phases}

At the end of the last century it was realized that quantum phases of matter
exist which cannot be described by Landau's theory of symmetry breaking. 
Instead these phases can be distinguished by `topological order'
parameters which prevent them from being deformed to a trivial system
whose ground state is a factorized state. 
Since their discovery over 30 years ago, immense progress in understanding and 
classifying these topological phases has been made. 
For instance, there exists a classification for non-interacting gapped
systems in 
terms of twisted equivariant K-theory~\cite{FreedMoore} (see also~\cite{Bunk2017}). 

A fruitful approach to the study of gapped interacting systems is to 
consider the effective low-energy (long-range) continuum theory of 
a lattice Hamiltonian model. Usually these field theories are topological.
A famous example is the effective 
description of the integer quantum Hall effect in terms of 
Chern-Simons gauge theory. In this sense a gapped quantum phase may be thought of as a path-connected
component of the moduli stack of topological quantum field theories.
However, in the interacting case no complete classification exists. For this reason 
one usually restricts to tractable subclasses. In this paper we focus on 
`short-range entangled' phases \cite{Chen:2010gda}. 
A gapped phase $\Psi$ is short-range entangled if there exists a 
phase $\Psi^{-1}$ such that the `stacked' phase $\Psi \otimes \Psi^{-1}$
can be deformed by an adiabatic transformation of the Hamiltonian to a
trivial product state without closing the energy gap between the
ground state and the first excited state. 
The stacking operation of topological phases corresponds to the tensor product   
of their low-energy effective topological field theories. 
A topological field theory is called invertible if it admits an inverse with respect to 
the tensor product. This observation motivates a classification of
short-range entangled phases in terms of invertible topological field 
theories~\cite{Freed:2016rqq}. 

Let $G$ be a group.
In the case of an additional global $G$-symmetry, a non-trivial short-range entangled
phase may be trivial when the symmetry is ignored. 
Such a phase is called 
`$G$-symmetry-protected'~\cite{Chen:2010gda,Chen:2011pg}.   
A $G$-symmetry-protected phase can be understood by studying its topological
response to non-trivial background $G$-gauge fields, which is called
`gauging' the $G$-symmetry. 
For a finite symmetry group $G$, the low-energy effective field theories 
are $G$-equivariant topological field theories~\cite{Kapustin2017}. 
Classical Dijkgraaf-Witten theories provide a particularly tractable
class of invertible $G$-equivariant topological field theories. The corresponding lattice Hamiltonian
models have been constructed in e.g.~\cite{PhysRevB.92.045101,Cong2017,Cong:2017ffh,Wang2018}. 
They are classified by group cohomology. The corresponding classification 
of topological phases is called group cohomological classification~\cite{Levin2012}.
However, this is not a 
complete classification of symmetry-protected phases and more refined
classifications have been proposed, see e.g.~\cite{Kapustin2015,Freed:2016rqq,Gaiotto:2017zba, PhysRevX.8.011055}.

An essential feature of symmetry-protected topological phases is that they exhibit
`topologically protected' boundary states. These boundary states can 
be effectively described by anomalous quantum field theories. 
Under the bulk-boundary correspondence the anomaly is cancelled
by an anomaly inflow from the bulk theory. 
The presence of non-trivial global anomalies forces the boundary theory 
to be non-trivial and topologically protected.
The modern geometric point of view on field theories with 
anomaly is that they should be considered as a theory `relative to' a 
higher-dimensional invertible field theory, as is naturally suggested by their
appearance in condensed matter physics and quantum information theory. 
Reversing this logic, it follows that $n{+}1$-dimensional invertible field 
theories should classify the possible anomalies in $n$ dimensions~\cite{Wen:2013oza}.
A class of gapped boundary states for the topological phases 
described by group cohomology are quantum Dijkgraaf-Witten theories
based on a different gauge group $D$ with an anomalous $G$-symmetry~\cite{Kapustin2014}.
Anomaly in this context means 't~Hooft anomaly, i.e. an obstruction 
to gauging the $G$-symmetry~\cite{tHooft1979}, and the bulk-boundary
correspondence implements the 't~Hooft anomaly matching conditions.
     
The purpose of this paper is to study the appearance of 't Hooft anomalies in the
mathematically rigorous framework of functorial field theories and to realize 
anomalous field theories as gauge theories relative to higher-dimensional topological
phases. 
Before outlining precisely what we do, let us
first informally review some of the main mathematical background.              

\subsection{Anomalies in functorial quantum field theory}

We formulate our results in the framework of functorial field theories. 
The idea is to give an axiomatic framework for the partition function of a
quantum field theory. In physics, the partition function $Z(M)$ on an 
$n$-dimensional manifold $M$ is calculated by
the Feynman path integral of an exponentiated action functional over the space of
dynamical field configurations on $M$; so far there is no mathematically 
well-defined theory of such path integration in general. 
The axioms of functorial field theories are derived from the properties that
such an integration would satisfy in the case that the action
functional is an integral of a local Lagrangian density on $M$.\footnote{In the case of discrete gauge theory there exists
a well-defined integration 
theory (see for example~\cite[Appendix A]{OFK}) which satisfies the axioms of a functorial field theory.}
A quantum field theory should also assign 
a Hilbert space of states $Z(\Sigma)$ 
to every $n{-}1$-dimensional manifold $\Sigma$. They satisfy
$Z(\Sigma\sqcup\Sigma')\cong Z(\Sigma)\otimes_\C Z(\Sigma')$, i.e. the
state space of non-interacting systems is given by the tensor product
of the corresponding Hilbert spaces.
In any quantum field theory there exists a time evolution 
operator (propagator)
\begin{align}
Z([t_0,t_1]\times \Sigma)\colon Z(\Sigma) \longrightarrow Z(\Sigma)
\end{align}         
from time $t_0$ to $t_1$. We think of this operator as associated to the 
cylinder $[t_0,t_1]\times \Sigma$. They satisfy 
$Z([t_1,t_2]\times \Sigma) \circ Z([t_0,t_1]\times \Sigma)=Z([t_0,t_2]\times \Sigma)$. 
The path integral should also allow for the construction of a more general operator 
$Z(M)\colon Z(\Sigma_-)\longrightarrow Z(\Sigma_+)$ for every manifold $M$
with boundary $\Sigma_-\sqcup \Sigma_+$, such that the gluing of
manifolds corresponds to the composition of linear maps. Such a manifold is called a cobordism 
from $\Sigma_-$ to $\Sigma_+$.

These considerations motivate the definition of a functorial field theory,
generalising Atiyah's definition of topological field 
theories~\cite{Atiyah1988} and Segal's definition of conformal field
theories~\cite{Segal1988}, as a symmetric monoidal functor 
\begin{align}
Z \colon \Cob_n^\mathscr{F} \longrightarrow \Hilb \ ,
\end{align}    
where $\Cob_n^\mathscr{F}$ is a category modelling physical spacetimes with
non-dynamical background fields $\mathscr{F}$ and $\Hilb$ is the category of
complex Hilbert spaces. 
Roughly speaking, $\Cob_n^\mathscr{F}$ contains closed $n{-}1$-dimensional manifolds 
with background fields as objects, 
and as morphisms the $n$-dimensional cobordisms as well as additional limit morphisms corresponding
to diffeomorphisms which are compatible with the background fields.  
The additional morphisms encode symmetries. 
Evaluating $Z$ on a closed $n$-dimensional manifold $M$ gives rise to a linear
map $\C \cong Z(\varnothing)\longrightarrow Z(\varnothing) \cong \C$ which 
can be identified with a complex number $Z(M)$, the partition function of 
$Z$ on $M$. This definition can be thought of as a prescription for computing a manifold invariant $Z(M)$ by cutting manifolds into simpler pieces and studying the quantum field theory
on these pieces.

We now turn our attention to the description of anomalies.
The partition function of an $n{-}1$-dimensional 
quantum field theory $Z$ with anomaly
described by an invertible field theory $L\colon \Cob_n^\mathscr{F} \longrightarrow \Hilb$ 
on an $n{-}1$-dimensional manifold $\Sigma$ takes values in the one-dimensional vector space $L(\Sigma)$, instead of $\C$. 
It is possible to pick a non-canonical isomorphism $L(\Sigma)\cong \C$
to identify the partition function with a complex number. The group of
symmetries acts non-trivially on $L(\Sigma)$ encoding the breaking of 
the symmetry in the quantum field theory $Z$. 

To also incorporate the description of the state space of $Z$ on an 
$n{-}2$-dimensional manifold $S$ we need to promote $L$ to an extended field
theory $E$ which assigns $\C$-linear categories to $n{-}2$-dimensional manifolds
such that $Z(S)$ can be considered as an object of $E(S)$. 
In other words, $E$ should be an extended functorial field theory, i.e. a 
symmetric monoidal 2-functor 
\begin{align}
E \colon \Cob_{n,n-1,n-2}^\mathscr{F} \longrightarrow \Tvs \ ,
\end{align}  
where $\Cob_{n,n-1,n-2}^\mathscr{F}$ is a suitable extension of
$\Cob_{n}^\mathscr{F}$; see Section \ref{Sec: DW} for details. 
There are different possible choices for the target bicategory. For simplicity we 
restrict ourselves to Kapranov-Voevodsky 2-vector spaces \cite{KV94}. 

Requiring that $E$ is an invertible field theory implies that there is 
a non-canonical equivalence of categories $E(S)\cong \fvs$ which
allows one
to identify the state space of the anomalous theory with a vector space. 
We can subsume the ideas sketched above in the following concise definition:
A {quantum field theory with anomaly} is a natural symmetric monoidal 2-transformation
\begin{align}
Z\colon 1 \Longrightarrow \tr\, E
\end{align}
between a trivial field theory and a certain truncation of $E$; see 
Section~\ref{Sec: Boundary theories} for details.
This formalism allows one to compute the 2-cocycle twisting the projective
representation of the symmetry group on the state space completely in 
terms of the extended field theory $E$~\cite{MonnierHamiltionianAnomalies}.
Anomalous theories formulated in this way are a special case of relative field
theories~\cite{RelativeQFT} and are closely related to twisted quantum field 
theories~\cite{Stolz:2011zj,Johnson-Freyd:2017ykw}. 
The present paper describes 't Hooft anomalies of discrete gauge theories 
as relative field theories. We shall now give an overview of our constructions
and findings. 

\subsection{Summary of results and outline}

One of the main achievements of this paper is to give a mathematical description of 
symmetries of Dijkgraaf-Witten theories and their gauging 
in the framework of functorial 
field theory which is motivated by physical considerations. 
Let $D$ be a finite group and $n$ a natural number.
The possible topological actions for $n$-dimensional Dijkgraaf-Witten
theories with gauge group $D$ are classified by the group cohomology
of $D$ or equivalently by the singular cohomology of 
the classifying space $BD$ with coefficients 
in $U(1)$ \cite{DijkgraafWitten,FreedQuinn}. Let $\omega \in Z^n(BD;U(1))$ be an $n$-cocycle
and $M$ an $n$-dimensional manifold. 
Let $P$ be a $D$-gauge field on $M$ with classifying map 
$\psi_P\colon M \longrightarrow BD$.
The action of the Dijkgraaf-Witten theory $L_\omega$ evaluated at $P$
is given by
\begin{align}
\exp(2\pi \,\iu\, S_{\text{DW}})\coloneqq \int_M\, \psi_P^*\, \omega \ .
\end{align} 
The quantum theory can be defined by appropriately summing 
over isomorphism classes of $D$-bundles. We review Dijkgraaf-Witten theories
in detail in Section \ref{Sec: DW}. 

In general, a physical symmetry group $G$ acts on gauge fields only up to gauge transformations. 
Since for finite gauge groups, gauge transformations can be naturally identified with 
homotopies of classifying maps, we define such an action as a
homotopy coherent action of $G$ on 
$BD$ (Definition~\ref{Def: homotopy Coherent action}). We show that,
up to equivalence, homotopy 
coherent actions on $BD$ are described by non-abelian group 2-cocycles. If $D$ is abelian,
this description agrees with the description in \cite{Kapustin:Symmetries}. 
Non-abelian 2-cocycles classify extensions of $G$ by $D$:
\begin{align}
1\longrightarrow D \overset{\iota}{\longrightarrow} \widehat{G} \overset{\lambda}{\longrightarrow} G \longrightarrow 1 \ ,
\end{align} 
as we review in Section \ref{Sec: Non-abelian group cohomology}. 
This extension has a natural physical interpretation: It describes how to 
combine $D$- and $G$-gauge fields into a single $\widehat{G}$-gauge field. When the extension is non-trivial, i.e. $\widehat{G}$ is not a product group $D\times G$, one says that the $G$-symmetry is `fractionalized'~\cite{Wang:2017loc}.

A homotopy coherent action on $BD$ induces a homotopy coherent action on the 
collection of classical $D$-gauge theories. 
Homotopy fixed points of this action are defined to be classical 
field theories with $G$-symmetry 
(Definition~\ref{Def: Field theory with k. symmetry}).
An essential feature of homotopy fixed points is that they are a
structure, not a property.  
In Proposition~\ref{Prop: Classical symmetry}
we show that if the topological action is preserved by the action of $G$ 
(Definition~\ref{Def: Preserved}), 
then the corresponding Dijkgraaf-Witten theory can be equipped with 
a homotopy fixed point structure.

An internal symmetry of a quantum field theory acts on its Hilbert space
of states. This motivates the definition of a functorial quantum
  field theory with {internal $G$-symmetry} as a functor 
\begin{align}
\Cob_n^\mathscr{F} \longrightarrow G\text{-}\mathsf{Rep}
\end{align}  
to the category $G\text{-}\mathsf{Rep}$ of representations 
of $G$.   
We show in Proposition \ref{Prop: symmetry extends to quantum} that 
classical symmetries of Dijkgraaf-Witten theories induce internal symmetries 
of the quantized theory. 
This shows that discrete gauge theories are anomaly-free in the sense that all symmetries 
extend to the quantum level. 
A discussion of symmetries of discrete classical and quantum 
gauge theories is the content of 
Section~\ref{Sec: Finite symmetries}.  

Anomalies appear 
as an obstruction to gauging the $G$-symmetry, 
i.e. to coupling it to 
non-trivial background gauge fields (Definition \ref{Def: Gauging}). Anomalies of this 
type are called {'t Hooft anomalies}.   
Gauging the $G$-symmetry can be achieved by finding a 
topological action $\widehat{\omega}$ for a $\widehat{G}$-gauge theory which restricts to $\omega$ and 
performing a path integral over
$D$-gauge fields. 
Mathematically, this can be described by       
the partial orbifold (pushforward) construction of \cite{OFK}. 
In Theorem \ref{Theorem: Gauging} we prove that 
the partial orbifold construction of the classical
gauge theory corresponding to $\widehat{\omega}$ gauges
the $G$-symmetry. We discuss the gauging of discrete symmetries
in Section \ref{Sec: Gauging}.

However, in general it might be impossible to find a
topological action which restricts correctly. In this 
case we say that the corresponding symmetry has a 
't Hooft anomaly. 
The obstructions for $\widehat{\omega}$ to exist are encoded
in the Lyndon-Hochschild-Serre spectral sequence.
For an $n$-dimensional field theory there are $n$ obstructions
which need to vanish. 
In Proposition \ref{Prop: Obstructions} we show that if all 
obstructions except the last one vanish then there exists an $n{+}1$-dimensional     
topological action $\theta $ for a discrete $G$-gauge theory,
together with an $n$-cochain $\omega'$ in $C^n(B\widehat{G}; U(1))$ satisfying
$\iota^* \omega' = \omega$ and $\delta \omega' = \lambda^* \theta$;
physically, this is interpreted as saying that the corresponding
symmetry-protected topological state becomes trivial when its fields
are regarded as $n$-dimensional $\widehat{G}$-gauge fields rather than
as $n{+}1$-dimensional $G$-gauge fields.
These obstructions are studied in Section \ref{Sec: Obstructions}.

Based on this result we construct a boundary 
quantum field theory $Z_{\omega'}$ encoding the anomaly in Section~\ref{Sec: Realisation on boundary}. 
Let us give an informal description of $Z_{\omega'}$ here.
The fact that $\omega'$ is not closed implies that 
$\int_M\, \psi_{\widehat P}^*\,{\omega'}$
is not gauge-invariant for a general $\widehat{G}$-bundle $\widehat P $ on $M$. 
Under a gauge transformation 
$\widehat{h}\colon \widehat P \longrightarrow \widehat P'$
the value of $\int_M\, \psi_{\widehat P}^*\,{\omega'}$
changes by multiplication with\footnote{This integral is not actually
well-defined as a complex number, see Section \ref{Sec: DW} for details.
We ignore this subtlety in the present section.} 
$\int_{[0,1]\times M}\, \widehat{h}^* \delta {\omega'}$, 
where we consider $\widehat{h}$ as a homotopy $[0,1]\times M \longrightarrow
B\widehat G$.
We can rewrite this integral as $\int_{[0,1]\times M}\, (\lambda_*\widehat{h}\,)^* {\theta}$.
This is exactly the value of $L_\theta$ evaluated on $\lambda_*\widehat{h}$,
which shows that the anomaly is controlled by the bulk classical gauge theory 
$L_\theta$. 
The rough idea for the construction of $Z_{\omega'}$ is to modify the partial 
orbifold construction used in Section \ref{Sec: Gauging} in a way suited
to the construction of boundary states. 
Let us fix a $G$-bundle $P$ on $M$. To define the partition function 
we want to perform an integration over the preimage of $P$ under $\lambda_*$.
However, in the presence of gauge transformations, requiring two bundles to be
the same is not natural. Hence we use the homotopy fibre 
$\lambda_*^{-1}[P]$ as a groupoid with objects the pairs $(\widehat P, h)$ of a 
$\widehat G$-bundle $\widehat P$ and a gauge transformation 
$h\colon \lambda_*\widehat{P} \longrightarrow P $. Morphisms are gauge 
transformations $\widehat{h} \colon \widehat{P} \longrightarrow \widehat{P}'$
which are compatible
with $h$ and $h'$.
We show that 
\[
L_{\omega'}(M)\coloneqq \int_M\, \psi_{\widehat{P}}^*\,\omega' \ \int_{[0,1]\times M}\, h^*\theta
\]           
is gauge-invariant with respect to morphisms in $\lambda_*^{-1}[P]$. 
Hence we define the partition function of $Z_{\omega'}$ on $M$ as\footnote{We recall integration over essentially finite groupoids in Section \ref{Sec: DW}.} 
\[
Z_{\omega'}(M) := \int_{\lambda_*^{-1}[P]}\, L_{\omega'}(M) \ .
\]  
Similar adaptations of the partial orbifold construction lead to 
the definition of state space in Section~\ref{Sec: State space}. 
The groupoid of symmetries acts only projectively on this state space.
Using a result of~\cite{EHQFT} we show
that the 2-cocycle twisting this
projective representation is the transgression of $\theta$ to the groupoid
of $G$-bundles. With this construction we provide an explicit
demonstration of the anomaly inflow mechanism at the level of both
partition functions and state spaces, which renders the composite bulk-boundary 
field theory free from anomalies.

\subsubsection*{Related work}

Section \ref{Sec:Finite symmetries and 't Hooft anomalies} can be regarded as 
a formulation of the main ideas of \cite{Kapustin:Symmetries} in the 
framework of functorial field theories and homotopy theory. 
The abstract formulation allows us to also treat non-abelian gauge groups 
$D$, which are not considered in \cite{Kapustin:Symmetries}. 
A large part of our discussion should generalise and provide
functorial descriptions of higher-form
symmetries, as in
e.g.~\cite{Kapustin2013,Kapustin:2014gua,Gaiotto2014,Thorngren2015,Tachikawa:2017gyf,Delcamp2018,Benini2018},
and of invertible topological sigma-models, as in~\cite{Thorngren:2018wwt,Thorngren:2017vzn}. It should also be straightforward to include time-reversal symmetries using the techniques of~\cite{Young2018}.

The relative field theory constructed in Section 
\ref{Sec: Realisation on boundary} can be regarded as a formulation of the 
gapped boundary states constructed in \cite{Witten:2016cio}, together with the
explicit lattice gauge theory and Hamiltonian constructions of~\cite{Wang:2017loc}, in the framework 
of relative field theories. We extend this construction to give the state space 
of the quantum field theory explicitly; our boundary field theory $Z_{\omega'}$
formally realises the new boundary degrees of freedom of the
$\widehat{G}$-symmetry extended boundary states from~\cite{Wang:2017loc} in this language. 
Proposition \ref{Prop: Obstructions} provides a clear relation between the 
works~\cite{Kapustin:Symmetries} and \cite{Witten:2016cio}:
We explicitly show that if all obstructions except the last one vanish
in the spectral sequence, then
one is in the set-up of \cite{Witten:2016cio}.
Similar considerations appear in~\cite{Thorngren2015}, see
also~\cite{Wang:2017loc,Tachikawa:2017gyf} for different perspectives.

Recently the two-dimensional Ising model 
has been formulated as a field theory relative to a three-dimensional discrete 
gauge theory with trivial topological action~\cite{Freed:2018cec}. It is 
conjectured that the low-energy effective field theory of the Ising model
is topological, and topological boundary field theories are constructed using the Cobordism Hypothesis. Our formalism should provide an 
explicit construction of these field theories. 

\subsubsection*{Conventions and notation}

For the convenience of the reader, we summarise here our notation and
conventions which are used throughout this paper.
For 2-categories we use the definitions outlined in~\cite[Appendix~B]{Parity}. For integration over groupoids we use~\cite[Appendix~A]{OFK}.

\begin{itemize}
	
\item Let $G$ be a group.
We denote by $B G$ the classifying space of $G$, which for $G$ finite
is an Eilenberg-MacLane space $K(G,1)$, i.e. its only non-trivial
homotopy group is $\pi_1(BG)=G$.
Let $P$ be a principal $G$-bundle on a manifold $M$. 
We denote by $\psi_P\colon M \longrightarrow BG$ the corresponding
classifying map. 

\item Let $T$ be a topological space, $n$ a positive integer 
and $A$ an abelian group. 
We denote the pairing of chains and cochains on $T$ by 
$\langle \,\cdot\, , \,\cdot\, \rangle \colon C^n(T;A)\times C_n(T) \longrightarrow A$.

\item 
Let $G$ be a group.
The groupoid of $G$-bundles on a manifold $M$ is denoted by $\BunG(M)$. 

\item For most constructions in this paper we fix a positive integer $n$. 
After fixing such an integer, we use $M$, $\Sigma$ and $S$ to denote manifolds of dimensions $n$, $n-1$ and $n-2$, respectively.

\item 
Let $M$ be an oriented manifold. We denote by $-M$ the same manifold
equipped with the opposite orientation. 
 	
\item If $\Ca$ is a category we define $\underline{\Ca}$ to be the (strict)
$2$-category obtained by adding identity 2-morphisms to $\Ca$.

\item 
Let $\Ca$ be a monoidal category. We denote by $*\Ds \Ca$ the bicategory 
with one object and $\Ca$ as endomorphisms. 

\item 
Let $F \colon \mathcal{G}\longrightarrow \mathcal{G}'$ be a functor 
between groupoids and $g'\in \mathcal{G}'$. We denote by $F^{-1}[g'\,]$
the homotopy fibre of $g'$. Explicitly, the groupoid $F^{-1}[g'\,]$ has as objects the pairs $(g,h')$, with $g\in\mathcal{G}$ and $h':F(g)\longrightarrow g'$ an isomorphism, and morphisms $m:(g_1,h_1')\longrightarrow(g_2,h_2')$ comprising of a morphism $m:g_1\longrightarrow g_2$ such that the diagram
\begin{equation}
\begin{tikzcd}
F(g_1) \ar[rr,"F(m)"]\ar[dr, "h_1'", swap] & & F(g_2) \ar[dl,"h_2'"] \\
 & g' & 
\end{tikzcd}
\end{equation} 
commutes.

\item 
Let $\lambda \colon G\longrightarrow G'$ be a homomorphism of groups. 
We denote the induced maps $BG\longrightarrow BG'$ and $*\Ds G \longrightarrow *\Ds G'$ again by $\lambda$. 
 
\item Let $F\colon \Ca \longrightarrow \Ca'$ be a functor. We write the 
limit of $F$ as an end $\int_\Ca\, F$. 

\item Let $G$ be a finite group. We denote by $\GCob$ the category and by 
$\EGCob$ the bicategory of cobordisms equipped with maps into $BG$.	

\item We denote by $\Grp$ the category of groups. 

\item We denote by $\fvs$ the category of finite-dimensional $\C$-vector spaces. 

\item We denote by $\Tvs$ the bicategory of Kapranov-Voevodsky 2-vector spaces.
	
\item We denote by $\Cat$ the bicategory of (small) categories.
	
\end{itemize}

\subsubsection*{Acknowledgments}

We thank Simon Lentner, Ehud Meir, Christoph Schweigert and Lukas Woike for helpful discussions.
This work was supported by the COST Action MP1405 QSPACE, funded by the
European Cooperation in Science and Technology (COST). The work of L.M. was
supported by the Doctoral Training Grant ST/N509099/1 from the UK Science and Technology
Facilities Council (STFC).
The work of
R.J.S. was supported in part by the STFC
Consolidated Grant ST/P000363/1. 

\section{Dijkgraaf-Witten theory}\label{Sec: DW}

Dijkgraaf-Witten theories are topological gauge theories with finite gauge group $D$. In this section we introduce the framework of (extended) functorial quantum field theory, and subsequently construct classical and quantum Dijkgraaf-Witten theories. 

\subsection{Equivariant functorial field theories}

The idea of describing field theories as functors from a geometric category to a category of vector spaces goes back at least to Atiyah's definition of topological quantum field theories~\cite{Atiyah1988} and Segal's definition of conformal field theories~\cite{Segal1988}.   
The general framework allows for arbitrary background fields. We denote the collection of all background fields by\footnote{For concreteness one can model background fields as a ($\infty$-)stack $\mathscr{F}$ on the category of smooth $n$-dimensional manifolds (with corners). The language of stacks is not used in this paper.}  $\mathscr{F}$.
Typical examples of background fields in physics are metrics, spin or
spin$^c$ structures, framings and principal bundles with connection. If the collection of background fields does not contain a metric, the corresponding field theory is called {topological}.   
We only consider theories involving principal bundles with finite gauge group $D$ as background fields in this paper.

An $n$-dimensional functorial field theory is a symmetric monoidal functor from the geometric category $\CobF$, modelling spacetimes with background fields, to the category of vector spaces. The category
$\CobF$ is roughly defined as follows: Objects are $n{-}1$-dimensional manifolds $\Sigma$ equipped with $\mathscr{F}$-background fields. A morphism $\Sigma_1 \longrightarrow \Sigma_2$ is an $n$-dimensional cobordism, i.e. a manifold $M$ with boundary $\Sigma_1 \sqcup \Sigma_2$ equipped with compatible $\mathscr{F}$-background fields. Composition is defined by gluing along common boundaries (Figure \ref{Fig: Composition}). The disjoint union of manifolds makes $\CobF$ into a symmetric monoidal category.

\begin{figure}[hbt]
\small
\begin{center}
\begin{overpic}[scale=0.3]{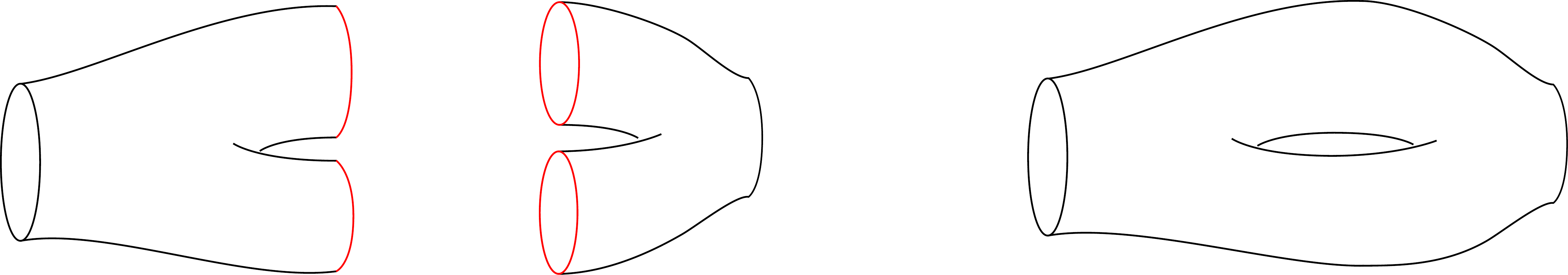}
\put(10,8){$f_2$}
\put(43,8){$f_1$}
\put(27,7){\huge$\circ$} 
\put(55,7){\huge$=$}
\put(82,12){$f_2 \cup f_1$}
\end{overpic}
\end{center}
\caption{\small Composition of two cobordisms equipped with background fields $f_1$ and $f_2$.}
\normalsize
\label{Fig: Composition}
\end{figure}

An {$n$-dimensional functorial field theory with $\mathscr{F}$-background fields} is a symmetric monoidal functor 
\begin{align}
Z \colon \CobF \longrightarrow \fvs_{\rm top}
\end{align}   
with target the category of topological vector spaces. All vector spaces appearing in this paper will be finite-dimensional and hence have a unique topology. We denote by $\fvs$ the category of finite-dimensional $\C$-vector spaces.
 
This definition ensures that $Z(\varnothing)\cong \C$ for every
functorial field theory $Z$. For this reason, $Z$ assigns to a closed
$n$-dimensional manifold $M$ with background fields $f$ a linear map
$Z(M,f)\colon Z(\varnothing)\cong\C\longrightarrow \C \cong
Z(\varnothing)$, which can be identified with a complex number, the
partition function of $Z$ on $M$.
The vector space associated to an $n{-}1$-dimensional manifold $\Sigma$ can be interpreted as the state space of the quantum field theory $Z$ on $\Sigma$.  
Making the definition precise is quite involved, see e.g. \cite{Stolz:2011zj}. A detailed discussion of this approach to quantum field theory can be found in \cite{FelixKleinSegal}. 

\begin{remark}
The functorial framework does not only describe quantum field theories, but also classical field theories. The invariant assigned to a closed manifold with background fields by a classical field theory is the exponentiated action. 
A classical field theory has no Hilbert space of states, however the exponentiated action on a manifold with boundary might not be well-defined without making additional choices on the boundary. In this case it is more natural to consider the exponentiated action not as a complex number, but as an element in a complex line associated to the additional choices. 
A famous example is Chern-Simons theory \cite{FreedCS}, where the
action functional on a manifold with boundary is not gauge-invariant
and the gauge variation is controlled by the Wess-Zumino term. The
$n$-dimensional functorial field theory corresponding to a classical field theory assigns these complex lines to $n{-}1$-dimensional manifolds.  
\end{remark}  

All theories appearing in this paper are homotopy quantum field theories \cite{turaev2010homotopy} with aspherical target. For these the background fields consist of principal bundles with finite gauge group $D$ and an orientation. 
We describe principal $D$-bundles on a manifold $M$ by their classifying map $M\longrightarrow BD$. 
However, by a slight abuse of notation we will freely switch between classifying maps and principal bundles without explicitly mentioning so. 
In particular, we will pull back cochains on $BD$ along bundles, which should always be understood as a pullback along the classifying map.        
We denote the 
corresponding cobordism category by $\DCob$, for which we can make the
sketch above precise~\cite{turaev2010homotopy}. 

\begin{definition}\label{Def: D-Cob}
For an oriented manifold $M$, we denote by $-M$ the same manifold
equipped with the opposite orientation. 
Let $D$ be a finite group. We denote by $\DCob$ the symmetric monoidal category defined by:
\begin{itemize}
\item[(a)] An object is a closed oriented $n{-}1$-dimensional manifold $\Sigma$ equipped with a continuous map $\varphi \colon \Sigma \longrightarrow BD$.

\item[(b)] A morphism $(\Sigma_1,\varphi_1)\longrightarrow (\Sigma_2,\varphi_2)$ is an equivalence class of triples $(M,\psi,\phi)$, where $M$ is a compact oriented manifold with boundary, $\phi\colon-\Sigma_1 \sqcup\Sigma_2 \longrightarrow \partial M$ is an orientation-preserving diffeomorphism, and $\psi \colon M \longrightarrow BD$ is a continuous map such that $\varphi_i= \psi \circ \phi|_{\Sigma_i}$ for $i=1,2$. 

Two such triples are equivalent if the underlying manifolds differ by a diffeomorphism $ M \longrightarrow M'$ relative to the boundary such that $M\overset{\psi}{\longrightarrow} BG$ and $M{\longrightarrow} M' \overset{\psi'}{\longrightarrow} BD$ differ by a homotopy relative to the boundary.    

\end{itemize}
The composition is defined by gluing manifolds and continuous maps along boundaries. The monoidal structure is given by the disjoint union of manifolds.
\end{definition}

\begin{definition}
An {$n$-dimensional $D$-equivariant field theory} is a symmetric monoidal functor 
\[
\DCob \longrightarrow \fvs \ . 
\]
\end{definition} 

We denote by $\DTFT$ the category of $n$-dimensional $D$-equivariant
field theories and natural symmetric monoidal transformations. Let $\Cob_n$ be the category $e\text{-}\Cob_n$, where $e$ is the unique group with one element. Objects of this category are oriented spacetimes without background gauge fields. 

Later on we will also need extended field theories which assign Kapranov-Voevodsky 2-vector spaces, i.e.~semi-simple $\C$-linear categories with finitely-many simple objects~\cite{KV94}, to closed $n{-}2$-dimensional manifolds $S$ equipped with maps $\xi \colon S\longrightarrow BD$, functors to $n{-}1$-dimensional cobordisms equipped with compatible maps into $BD$, and natural transformations to $n$-dimensional cobordisms between cobordisms (manifolds with corners) equipped with compatible maps into $BD$.
To define extended field theories properly one has to introduce a symmetric monoidal bicategory $\EDCob$ similar to the category of Definition~\ref{Def: D-Cob}. For a concrete definition we refer to~\cite{EOFK}. An {$n$-dimensional extended $D$-equivariant field theory} is a symmetric monoidal 2-functor
\begin{align}
\EDCob \longrightarrow \Tvs
\end{align}
where $\Tvs$ is the symmetric monoidal 2-category of Kapranov-Voevodsky 2-vector spaces. The only extended field theory we need in this paper is the classical Dijkgraaf-Witten theory explicitly described in Section~\ref{Sec: Calssical gauge theory}.

The symmetric monoidal structure of $\Tvs$, i.e. the Deligne tensor product $\boxtimes$, defines a tensor product of field theories. A field theory which has an inverse with respect to this tensor product is called {invertible}. Classical field theories regarded as functorial field theories are always invertible. 
Invertible field theories are physically interesting, since they describe the low-energy effective field theories of short-range entangled topological phases and field theories with anomalies can be described as boundary states of invertible field theories (see Section~\ref{Sec: Realisation on boundary}). 

\subsection{Classical Dijkgraaf-Witten theories}\label{Sec: Calssical gauge theory}

We provide a brief summary of the construction of an invertible extended field theory $E_\omega$ depending on an $n$-cocycle $\omega \in Z^n(BD;U(1))$ which is given in \cite{EHQFT}, to which we refer for more details and all proofs. We denote the corresponding unextended field theory by $L_\omega$.

The possible actions of $n$-dimensional topological gauge theories with finite gauge group $D$ are classified by the $n$-th cohomology group of the classifying space $BD$ with coefficients in $\R / \Z $ \cite{DijkgraafWitten}. For a fixed representative $\omega'\in Z^n(BD;\R / \Z )$ of a cohomology class, the action for a $D$-bundle $ \psi \colon M \longrightarrow BD$ on an oriented closed $n$-dimensional manifold $M$ is given by 
\begin{align}
S_{\omega'}(M,\psi)= \int_M \, \psi^* \omega' \ .
\end{align}
As a real number, this action is only well-defined modulo $\Z$ by definition. The quantity with physical relevance is the exponentiated action $\exp(2\pi \,\iu\, S_{\omega'}(M,\psi))$ which takes values in $U(1)$. 
For simplicity we work from now on with a cocycle $\omega \in Z^n(BD;U(1))$ and interpret its integration over the manifold as the exponentiated action. 
Formally, we can define integration as the evaluation of $\psi^* \omega$ on a representative $\sigma_M$ of the fundamental class $[\sigma_M] \in H_n(M)$ of $M$.
This defines the partition function
\begin{align}
L_\omega (M,\psi) \coloneqq \langle \psi^*\omega , \sigma_M  \rangle \ .
\end{align}
Here the brackets denote the evaluation of cochains on chains.

On manifolds with boundary the exponentiated action is not well-defined as a complex number:
To naively extend the exponentiated action to an $n$-dimensional manifold $M$ with boundary $\partial M=\Sigma$ and $D$-bundle $\psi \colon M \longrightarrow BD$ we would pick a representative $\sigma_M \in C_n(M)$ of the fundamental class $[\sigma_M]\in H_n(M,\Sigma)$ and evaluate $\psi^*\omega$ on this representative. 
The problem is that $[\sigma_M]$ is an element in relative homology, but $\psi^*\omega$ is an element of ordinary cohomology. As a representative of a relative homology class, $\sigma_M$ has a boundary $\partial \sigma_M \in C_{n-1}(\Sigma)$ and $\langle \psi^* \omega , \sigma_M \rangle$ depends on $\partial \sigma_M$ in general. Hence the exponentiated action is not well-defined as a complex number, but rather it is an element of a complex line $L_\omega(\Sigma, \psi|_{\Sigma})$. In \cite{FreedQuinn} it is shown that the vector spaces $L_\omega (\,\cdot\,,\,\cdot\,)$ are part of a functorial field theory $L_\omega \colon \DCob \longrightarrow \fvs$. This field theory is the classical Dijkgraaf-Witten theory corresponding to the topological action $\omega$.

Now we sketch how to promote $L_\omega$ to an extended field theory $E_\omega \colon \EDCob \longrightarrow \Tvs$. 
It is convenient to introduce, for an arbitrary oriented manifold $S$ of dimension $k$, the groupoid $\Fund(S)$ whose objects are cycles $\sigma_S \in Z_k(S)$ representing the fundamental class of $S$ and morphisms $\tau \colon \sigma_S \longrightarrow \sigma'_S$ are given by chains $\tau \in C_{k+1}(S)$ satisfying $\partial \tau = \sigma_S'-\sigma_S$. 
To an object of $\EDCob$, i.e. an oriented $n{-}2$-dimensional manifold $S$ equipped with a continuous map $\xi \colon S \longrightarrow BD$, we assign the 2-vector space $E_\omega(S,\xi)$ whose objects are finite formal sums $\sum_{i=1}^p\, V_i * \sigma_i$ of representatives $\sigma_i$ of the fundamental class of $S$ and finite-dimensional vector spaces $V_i$. We will abbreviate $\C * \sigma$ with $\sigma$. The space of morphisms between $\sigma,\sigma' \in E_\omega(S,\xi)$ is given by
\begin{align} 
\Hom_{E_\omega(S,\xi)}(\sigma,\sigma') := \frac{\mathbb{C}[\Hom_{\Fund(S)}(\sigma,\sigma')]}{\sim_\omega} \ ,      \label{defmorphismon0cells}
\end{align} 
where $\mathbb{C}[\Hom_{\Fund(S)}(\sigma,\sigma')]$ is the complex vector space generated by the set $\Hom_{\Fund(S)}(\sigma,\sigma')$, and for two morphisms $\tau,\widetilde \tau : \sigma \longrightarrow \sigma'$ we make the identification
\begin{align} 
\widetilde \tau \sim_\omega \langle \xi^* \omega,\Lambda\rangle \, \tau \ ,  \label{defmorphismon0cells2} 
\end{align} 
whenever there exists $\Lambda \in C_n(S)$ such that $\widetilde \tau-\tau=\partial \Lambda$.  
In \eqref{defmorphismon0cells2} the particular choice of $\Lambda$ is immaterial. 
In order to obtain the morphism spaces between arbitrary objects in $E_\omega(S,\xi)$, we extend \eqref{defmorphismon0cells} bilinearly:
\begin{align}
\Hom_{E_\omega(S,\xi)} \Big( \sum\limits_{i=1}^p \, V_i * \sigma_i
  \ ,\ \sum\limits_{j=1}^m \, V_j * \sigma_j \Big) = \bigoplus_{i=1}^p \ \bigoplus_{j=1}^m \, \Hom_\fvs(V_i,V_j) \otimes_{\C} \Hom_{E_\omega(S,\xi)} (\sigma_i, \sigma_j) \ .
\end{align} 
Composition is defined by matrix multiplication together with composition in $\fvs$ and $\Fund (S)$. 
The category $E_\omega (S,\xi)$ carries a natural $\fvs$-module structure given by 
\begin{align}
\fvs \times E_\omega (S,\xi) & \longrightarrow E_\omega (S,\xi) \\
V\times (W*\sigma) & \longmapsto (V\otimes_\C W)*\sigma \ .
\end{align}

Given a 1-morphism $(\Sigma,\varphi) \colon (S_1,\xi_1)\longrightarrow (S_2,\xi_2)$ in $\EDCob $, together with representatives $\sigma_1$ and $\sigma_2$ for the fundamental classes of $S_1$ and $S_2$, respectively, we define the vector space 
\begin{align}
\Sigma^\varphi(\sigma_2,\sigma_1) = \frac{\C[\Fund^{\sigma_2}_{\sigma_1}(\Sigma)]}{\sim_\varphi}\ ,
\end{align}  
where $\Fund^{\sigma_2}_{\sigma_1}(\Sigma)$ is the set of cycles relative to $\partial \Sigma$ representing the fundamental class of $\Sigma$ with boundary $\sigma_2-\sigma_1$. The equivalence relation is defined by $\zeta' \sim_\varphi \langle \varphi^* \omega   ,\Lambda  \rangle \, \zeta $ for arbitrary $\Lambda\in C_n(\Sigma)$ with boundary $\zeta-\zeta'$. This equivalence relation is different from \eqref{defmorphismon0cells2}.  
With this notation we define $E_\omega (\Sigma,\varphi)$ via the coend 
\begin{align}
E_\omega (\Sigma,\varphi) (\sigma_1) = \int^{\sigma_2\in \Fund(S_2)} \, \Sigma^\varphi(\sigma_2,\sigma_1)*\sigma_2 
\end{align}
and linear extensions. 
These functors are compatible with the composition of cobordisms only up to natural isomorphisms. For 1-morphisms $(\Sigma_a,\varphi_a) \colon (S_1,\xi_1)\longrightarrow (S_2,\xi_2)$ and $(\Sigma_b,\varphi_b) \colon (S_2,\xi_2)\longrightarrow (S_3,\xi_3)$ the natural isomorphism is induced by the linear map
\begin{align}
 \Sigma_b^{\varphi_b}(\sigma_3,\sigma_2) \otimes_\C \Sigma_a^{\varphi_a}(\sigma_2,\sigma_1) &\longrightarrow \left( \Sigma_b\circ\Sigma_a \right)^{\varphi_b\cup \varphi_a}(\sigma_3,\sigma_1) \\ 
 \sigma_{\Sigma_b} \otimes_\C \sigma_{\Sigma_a} &\longmapsto \sigma_{\Sigma_b}+\sigma_{\Sigma_a} \ .
\end{align}
In~\cite{EHQFT} it is shown that these coherence morphisms are closely related to the transgression of $\omega$.
They encode interesting physical properties as we will see in Section \ref{Sec: Boundary theories}. 

There are further coherence isomorphisms
\begin{align}\label{eqnnatisocohuni}
\Phiit_{(S,\xi)}\colon \id_{E_\omega(S,\xi)} \Longrightarrow E_\omega(\id_{(S,\xi)}) 
\end{align}
encoding the compatibility with identities for all objects $(S,\xi)$ of $\EDCob$. They are defined as follows.
Using the enriched co-Yoneda Lemma we can write the identity as the coend
\begin{align}
\id_{E_\omega(S,\xi)}(\,\cdot\,) \cong \int^{\sigma \in {E_\omega(S,\xi)}} \, \Hom_{E_\omega(S,\xi)} (\sigma , \,\cdot\,)* \sigma \ .
\end{align} 
Without loss of generality we can evaluate this at a representative $\sigma_0$ of the fundamental class of $S$:
\begin{align}
\sigma_0 \cong \int^{\sigma \in {E_\omega(S,\xi)}} \, \Hom _{E_\omega(S,\xi)} ( \sigma , \sigma_0)* \sigma \cong \int^{\sigma \in {\Fund(S)}} \, \Hom _{E_\omega(S,\xi)}( \sigma , \sigma_0)* \sigma \ .\label{cohid1eqn}
\end{align}  
On the other hand we have
\begin{align}
E_\omega(\id_{(S,\xi)})(\sigma_0) = \int ^{\sigma \in {\Fund(S)}} \, (S\times [0,1])^{ \xi \times \id_{[0,1]}} (\sigma , \sigma_0)* \sigma \ . \label{cohid2eqn}
\end{align}
There is a natural isomorphism
\begin{align}
(S\times [0,1])^{ \xi \times \id_{[0,1]}} (\sigma , \sigma_0) & \longrightarrow \Hom _{E_\omega(S,\xi)}( \sigma , \sigma_0) \\
 \mu & \longmapsto - {p_S}_* \mu\label{Eq: nat iso coh uni on integrands} 
\end{align}
defined by the projection $p_S:S \times [0,1]\longrightarrow S$, which induces the natural transformation~\eqref{eqnnatisocohuni}.

To a 2-morphism $(M,\psi)\colon (\Sigmait_a,\varphi_a) \Longrightarrow (\Sigma_b,\varphi_b)$ between 1-morphisms $(\Sigma_a,\varphi_a)$ and $(\Sigma_b,\varphi_b) $ from $(S_1,\xi_1)$ to $(S_2,\xi_2)$ in $\EDCob$, we assign the natural transformation
\begin{align}
E_\omega(M,\psi) \colon E_\omega(\Sigmait_a,\varphi_a) \Longrightarrow E_\omega(\Sigmait_b,\varphi_b) 
\end{align}
 between the functors $E_\omega(\Sigmait_a,\varphi_a), E_\omega(\Sigmait_b,\varphi_b) : E_\omega(S_1,\xi_1) \longrightarrow E_\omega(S_2,\xi_2)$ consisting of the natural maps
\begin{align}
E_\omega(\Sigmait_a,\varphi_a) (\sigma_1) \longrightarrow E_\omega(\Sigmait_b,\varphi_b) (\sigma_1)
\end{align} 
for $\sigma_1 \in \Fund(S_1)$, which are the maps between the respective coends induced by the linear maps
\begin{align}
E_\omega(M,\psi)_{\sigma_2,\sigma_1} : \Sigma_a^{\varphi_a}(\sigma_2,\sigma_1)  \longrightarrow \Sigma_b^{\varphi_b}(\sigma_2,\sigma_1) \label{Eq: Map between coends induce}  
\end{align} 
defined as follows: For $[\mu_a] \in \Fund_{\sigma_1}^{\sigma_2}(\Sigmait_a)$ and $[\mu_b] \in \Fund_{\sigma_1}^{\sigma_2}(\Sigmait_b)$ we can find a fundamental cycle 
$\nu$ of $M$ which is compatible with these fundamental cycles on the boundary~\cite[eq.~(3.11)]{EHQFT}:
\begin{align}
\partial\nu = \mu_b-\mu_a+(-1)^{n-2} \, \big(\sigma_1\times[0,1] - \sigma_2 \times[0,1]\big) \ .
\end{align}
Mapping $[\mu_a]$ to $\langle \psi^* \omega,\nu\rangle \, [\mu_b]$ yields a well-defined linear map
$\mathbb{C}[\Fund_{\sigma_1}^{\sigma_2}(\Sigmait_a)] \longrightarrow \Sigma_b^{\varphi_b}(\sigma_2,\sigma_1)$, which descends to $\Sigma_a^{\varphi_a}(\sigma_2,\sigma_1)$ and induces the map~\eqref{Eq: Map between coends induce}. 

The construction sketched above fits into an extended field theory~\cite[Theorem 3.19]{EHQFT}, the classical Dijkgraaf-Witten theory with action $\omega$. Restricting $E_\omega$ to the endomorphism category of the empty set reproduces the ordinary field theory 
\begin{align}
L_\omega \colon \End_{\EDCob}(\varnothing)= \DCob \longrightarrow \fvs = \End_\Tvs(\fvs)
\end{align}
mentioned above. The functor
$L_\omega$ admits the following concrete description:
\begin{itemize}
\item 
To a closed $n{-}1$-dimensional manifold $\Sigma$ equipped with a map $\varphi : \Sigma \longrightarrow BD$ it assigns the vector space $L_\omega(\Sigma, \varphi)=\Sigma^\varphi(\varnothing,\varnothing)= \C[\Fund(\Sigma)]/{\sim_\varphi}$.
\item 
To a morphism $(M,\psi): (\Sigma_a,\varphi_a) \longrightarrow (\Sigma_b,\varphi_b)$ it assigns the linear map 
\begin{align}
L_\omega(M,\psi): L_\omega(\Sigma_a,\varphi_a) & \longrightarrow L_\omega (\Sigma_b,\varphi_b) \\
[\sigma_{a}]& \longmapsto \langle \psi^*\omega, \sigma_M \rangle \, [\sigma_{b}] \ ,
\end{align}
with $\sigma_M \in \Fund^{\sigma_b}_{\sigma_a}(M)$.
\end{itemize}

\subsection{Quantum Dijkgraaf-Witten theories}\label{sec:quantumDWtheories}

Let $\omega$ be an $n$-cochain on $BD$ with values in $U(1)$.
We define the quantum Dijkgraaf-Witten theory $Z_\omega$ with topological action $\omega$ by performing a `path integral' of the classical field theory $L_\omega$ over the space of field configurations $\BunD(M)$.
One of the problems with finding a rigorous definition of general quantum field theories is the lack of a measure on the space of field configurations. 
In the case of Dijkgraaf-Witten theories the space of field
configurations is the essentially finite groupoid\footnote{A groupoid
  is {essentially finite} if all morphism sets are finite and
  there exists an equivalence to a groupoid with a finite number of
  objects. The field configurations form a groupoid, since $\BunD$ is a stack.} $\BunD(M)$ of $D$-bundles on $M$ and a well-defined integration theory exists. 
The groupoid cardinality~\cite{groupoidfication} induces a natural
counting measure on $\BunD(M)$. The {integral} of a gauge-invariant function $f\colon \Obj(\BunD(M))\longrightarrow \C$ over $\BunD(M)$ is
\begin{align}
\int_{\BunD(M)} \, f \coloneqq \sum_{\varphi \in \pi_0(\BunD(M))} \, \frac{f(\varphi)}{\big|\Aut_{\BunD(M)}(\varphi)\big|} \ ,
\end{align}  
where $ \pi_0(\BunD(M))$ is the set of isomorphism classes of
principal $D$-bundles on $M$, each of which is weighted by the inverse
of the order of its automorphism group. 

The existence of a well-defined path integral makes quantization in principal straightforward. A slight subtlety arises, since the action on a manifold with boundary is not a complex number, but rather an element of a complex line as we have seen in Section \ref{Sec: Calssical gauge theory}.
For this reason, we have to assign to an $n{-}1$-dimensional manifold $\Sigma$ the vector space \cite{FreedQuinn}
\begin{align}
\label{Definition on Objects}
Z_\omega (\Sigma) = \lim_{\BunD(\Sigma)} \, L_\omega = \int_{\BunD(\Sigma)} \, L_\omega \ ,
\end{align}  
writing the limit as an end in the second step. The functor $L_\omega$
can be evaluated on the groupoid $\BunD(\Sigma)$ by considering a gauge
transformation $h$ as a homotopy $h\colon [0,1]\times \Sigma
\longrightarrow BD$ between classifying maps, i.e. as a morphism in $\DCob$. The correspondence between homotopies of classifying maps and gauge transformations only holds for discrete gauge groups. 
A useful realisation of the limit is given by the vector space of
parallel sections of $L_\omega(\Sigma, \,\cdot\,)$. A {parallel
  section} $f$ of $L_\omega(\Sigma, \,\cdot\,)$ consists of an element
$f(\varphi)\in L_{\omega}(\Sigma,\varphi)$ for all $\varphi \in
\BunD(\Sigma)$ such that $L_\omega ([0,1]\times \Sigma, h)
\big(f(\varphi) \big)= f(\varphi')$ for all gauge transformations
$h\colon \varphi \longrightarrow \varphi'$. The space of parallel
sections can be regarded as the space of gauge-invariant functions on
the set of classical gauge field configurations.
For this reason the definition can be interpreted as an implementation of the Gauss Law in quantum gauge theory, which requires that physical states must be gauge-invariant.

Now consider a cobordism $M\colon \Sigmait_1 \longrightarrow \Sigmait_2$. To define $Z_\omega(M)$ we introduce for $\varphi_2 \in \BunD(\Sigma_2)$ the {homotopy fibre} $\BunD(M)|_{\varphi_2}$  as the homotopy pullback
\begin{equation}
\begin{tikzcd}
\BunD(M)|_{\varphi_2} \ar[r] \ar[d] & \BunD(M) \ar[d, "{\pr_{\Sigmait_2}}"] \\
* \ar[r, swap,"\varphi_2"] & \BunD(\Sigmait_2) 
\end{tikzcd} 
\end{equation} 
where $\pr_{\Sigmait_2}$ is the pullback functor induced by the inclusion
$\Sigmait_2\hookrightarrow M$. The groupoid
$\BunD(M)|_{\varphi_2}=\pr_{\Sigma_2}^{-1}[\varphi_2]$ of gauge fields on $M$ restricting, up to gauge transformations, to $\varphi_2$ on $\Sigma_2$ can be concretely described as follows:
Objects are pairs $(\psi,h)$ where $\psi$ is a bundle over $M$ and
$h\colon \psi|_{\Sigmait_2}\longrightarrow \varphi_2$ is a gauge
transformation. Morphisms $(\psi,h)\longrightarrow (\psi',h')$ are
gauge transformations $\psi \longrightarrow \psi'$ such that the
diagram 
\begin{equation}
\begin{tikzcd}
\psi|_{\Sigmait_2} \ar[rr]\ar[dr, "h", swap] & & \psi'|_{\Sigmait_2} \ar[dl,"h'"] \\
 & \varphi_2 & 
\end{tikzcd}
\end{equation} 
commutes. We fix representatives $\sigma_1$ and $\sigma_2$ of the fundamental classes of $\Sigmait_1$ and $\Sigmait_2$, respectively. This allows us to express the value of a parallel section $f\in Z_\omega(\Sigma_1)$ on a principal $D$-bundle $\varphi_1\in \BunD(\Sigma_1)$ as $f(\varphi_1)=\mathbf{f}(\varphi_1)\, [\sigma_1]$ with $\mathbf{f}(\varphi_1) \in \C$. Then we can define 
\begin{align}
\label{Definition on Morphisms}
Z_\omega (M)(f)(\varphi_2)= \Big( \int_{(\psi,h)\in
  \BunD(M)|_{\varphi_2}} \, \langle  h^* \omega, [0,1] \times
  \sigma_{2} \rangle \, \langle \psi^* \omega , \sigma_M  \rangle \,
  \mathbf{f}(\psi|_{\Sigmait_1}) \Big) \ [\sigma_2] \ ,
\end{align}  
where $\sigma_M$ is a representative for the fundamental class of $M$ satisfying $\partial \sigma_M = \sigma_2-\sigma_1$ and we consider the gauge transformation $h$ as a homotopy $h \colon [0,1]\times \Sigma_2 \longrightarrow BD$. This definition is independent of all choices involved.

The assignments \eqref{Definition on Objects} and \eqref{Definition on Morphisms} define a functorial field theory~\cite{FreedQuinn}
\begin{align}
Z_\omega \colon \Cob_n \longrightarrow \fvs \ .
\end{align}

\begin{remark}
A more systematic way of constructing $Z_\omega$ is given by applying
the orbifold construction of~\cite{OFK} to the classical field theory
$L_\omega$. In \cite{EHQFT} the extended orbifold construction
of~\cite{EOFK} is used to construct Dijkgraaf-Witten theories as
extended field theories. Three-dimensional extended Dijkgraaf-Witten
theories are also constructed in \cite{Morton}.  
\end{remark}

We conclude this section with a few examples, which we will consider throughout this paper.

\begin{example}\label{Ex: 2D coycles}
In two dimensions, for any finite group $D$ and a 2-cocycle $\omega\in
Z^2(BD;U(1))$, the vector space $Z_\omega(\mathbb{S}^1)$ associated to
the circle $\mathbb{S}^1$ is the space of $\omega$-twisted
characters on $D$, because the groupoid of principal $D$-bundles on
$\mathbb{S}^1$ is equivalent to the action groupoid $D\Ds D$ for the
adjoint action of $D$ on itself, and hence a parallel section is just a map $f:D\longrightarrow \C$ satisfying $f(d\,d'\,d^{-1}) = \omega(d,d')\, f(d')$ for all $d,d'\in D$. The invariant
\begin{align}
Z_\omega(\mathbb{T}^2) = \frac1{|D|} \ \sum_{\stackrel{\scriptstyle (d,d')\in D\times D}{\scriptstyle d\,d'=d'\,d}} \, \frac{\omega(d,d')}{\omega(d',d)}
\end{align}
associated to the two-dimensional torus $\mathbb{T}^2$ is the number of irreducible $\omega$-twisted representations of $D$.
We give a few concrete examples of 2-cocycles:
\begin{itemize}
\item[(a)] The group cohomology $H^2(\Z_N\times \Z_N;U(1))$ is $\Z_N$.
If we write the cyclic group $\Z_N$ additively then the non-trivial 2-cocycle corresponding to 
$k\in \{0,1,\dots , N-1\}$ is 
\begin{align}\label{EQ:Def 2 cocycle}
\omega_k\big((a_1,b_1)\,,\,(a_2,b_2)\big)= \exp\Big(\frac{2\pi \,\iu\, k}{N}\,a_1\, b_2 \Big)
\end{align}  
with $(a_1,b_1),(a_2,b_2)\in \Z_N\times \Z_N$.
For $N=2$, the partition function $Z_{\omega_1}(\mathbb{T}^2)$ on $\mathbb{T}^2$ for the non-trivial $\Z_2\times \Z_2$ cocycle is $1$ corresponding to the fact that there exists only one $\omega_1$-twisted irreducible representation of $\Z_2\times \Z_2$~\cite{TwistedDWandGerbs}. 

\item[(b)] The degree~2 group cohomology of the dihedral group $D_8=\langle a,b \mid a^4=b^2=1 \
  , \ b\,a\,b^{-1}=a^{-1}\rangle$ with values in $U(1)$ is $\Z_2$. The non-trivial 2-cocycle is given by~\cite[Section 3.7]{ProjectiveRep}
\begin{align}
\omega\big(a^i\, b^j, a^{i'}\,b^{j'}\big)=
\begin{cases}
1 &, \quad \mbox{$j=0$} \ , \\
\exp\big( \frac{2\pi \,\iu\, }{4} \, i' \big) &, \quad \mbox{$j=1$} \ .
\end{cases}
\end{align} 
\end{itemize}
\end{example}

\begin{example} 
In three dimensions, for any finite group $D$ and a 3-cocycle $\omega\in
Z^3(BD;U(1))$, the invariant $Z_\omega(\mathbb{T}^3)$ associated to
the three-dimensional torus $\mathbb{T}^3$ is the number of
irreducible representations of the $\omega$-twisted Drinfeld double of
the group algebra $\C[D]$. 
The cohomology group $H^3(\Z_N;U(1))$ is $\Z_N$. The 3-cocycles have the
concrete form~\cite[Proposition~2.3]{Huang2014}  
\begin{align}\label{Eq: Zn 3-cocycle}
\omega_k(a,b,c)= \exp\Big(\frac{2\pi \,\iu\, k}{N} \, a \, \Big\lfloor\frac{b+c}{N}\Big\rfloor \Big) 
\end{align}  
for $a,b,c,k\in \Z_N=\lbrace 0,1,\dots, N-1 \rbrace$, where $\lfloor
r\rfloor$ denotes the integer part of the real number $r\in \R$,
i.e.~the largest integer less than or equal to $r$. These theories are
studied in~\cite{Kapustin:2014gua}. They have been extended to a product
of an arbitrary number of cyclic groups $\Z_{N_i}$ (i.e. a generic
finite abelian group) in~\cite{Chen:2011pg,Wang:2014tia}.
\end{example} 

\section{Discrete symmetries and 't Hooft anomalies}\label{Sec:Finite
  symmetries and 't Hooft anomalies}

In this section we study the action of a finite symmetry group $G$ on a classical Dijkgraaf-Witten theory $L_\omega\colon \DCob \longrightarrow \fvs$ with gauge group $D$ and topological action $\omega\in Z^n(BD;U(1))$. General symmetries of abelian quantum Dijkgraaf-Witten theories are discussed in \cite{BrauerGroup}. 
We only consider symmetries arising from an action of $G$ on $D$-gauge
fields which preserve $\omega \in Z^n(BD; U(1))$. Following~\cite{BrauerGroup} we call these {kinematical} symmetries. 
We show that they extend to the quantum theory and study their gauging. 
Gauging these symmetries is not always possible and the obstructions are
encoded in a spectral sequence.

\subsection{Discrete symmetries of Dijkgraaf-Witten theories}\label{Sec: Finite symmetries}

For a symmetry to be compatible with cutting and gluing of manifolds,
we describe it as an endofunctor of $\DCob$ acting by pullback along
the inverse on a field theory. There is a natural way to construct
endofunctors of $\DCob$ from homeomorphisms of $BD$ which is described by a 2-functor 
\begin{align}\label{End(BG)=>End(D-Cob)}
 \mathcal{R} \colon * \DS \Piit_1 [BD,BD] \longrightarrow * \DS  \End (\DCob) \ , 
\end{align}
where $\Piit_1[BD,BD]$ is the category with continuous maps $BD\longrightarrow BD$ as objects and equivalence classes of homotopies as morphisms. 
Concretely, $\mathcal{R}$ sends a continuous map $\chi \colon BD \longrightarrow BD$ to the endofunctor
\begin{align}
\mathcal{R}(\chi) \colon \DCob &\longrightarrow \DCob \\
(\Sigma,\varphi \colon \Sigma \longrightarrow BD) & \longmapsto (\Sigma ,\chi \circ \varphi\colon \Sigma \longrightarrow BD) \\
\big( (M,\psi) \colon (\Sigma_1,\varphi_1) \longrightarrow 
  (\Sigma_2,\varphi_2) \big) & \longmapsto \big( (M,\chi \circ\psi) \colon (\Sigma_1,\chi \circ\varphi_1)\longrightarrow (\Sigma_2,\chi \circ \varphi_2) \big)
\end{align}
and a homotopy $h \colon \chi_1 \longrightarrow \chi_2$ to the natural transformation $\mathcal{R}(h)\colon \mathcal{R}(\chi_1)\Longrightarrow \mathcal{R}(\chi_2)$ with components
\begin{align}
\mathcal{R}(h)_{(\Sigma,\varphi)} = \left( [0,1] \times \Sigma , h \circ \id_\varphi \right)
\end{align}
where $h \circ \id_\varphi$ denotes the horizontal composition of homotopies. The naturality of $\mathcal{R}(h)$ follows from the proof of~\cite[Proposition 2.9]{OFK}.
By the bicategorical Yoneda Lemma, automorphisms of $BD$ correspond to automorphisms of the stack of principal $D$-bundles (which is represented by $BD$). Hence a symmetry corresponding to a homeomorphism of $BD$ acts on the space of field configurations. 
 
To define a symmetry, the group $G$ only has to act up to gauge transformations. For finite groups, gauge transformations and homotopies between classifying maps are in one-to-one correspondence. For this reason we expect a symmetry for every action of $G$ on $BD$ up to a `homotopy' which preserves $\omega$. Since $BD$ is a homotopy 1-type we can work with the following concrete description.

\begin{definition}\label{Def: homotopy Coherent action}
An {action of $G$ on $BD$ up to (coherent) homotopy} is a 2-functor 
\begin{align}
\alpha \colon \underline{* \DS G} \longrightarrow * \DS \Piit_1[BD,BD] \ ,
\end{align}
where $\underline{* \DS G}$ is the 2-category with one object, the group $G$ as 1-morphisms and only identity 2-morphisms. 
\end{definition}

\begin{remark}
To unpack this compact definition note that the category   
$\Piit_1[BD,BD]$ is equivalent to the action groupoid
\begin{align}
[\pi_1(BD),\pi_1(BD)]\DS D = \End_{\Grp}(D)\DS D \ ,
\end{align}
where the action of $D$ on a group homomorphism is by conjugation. Every action of $G$ up to homotopy takes values in the full subgroupoid $\Aut_{\Grp}(D)\DS D$ of automorphisms of $D$.
An arbitrary 2-functor $\underline{* \DS G} \longrightarrow *\DS (\Aut_{\Grp}(D)\DS D)$ is called a {non-abelian group cocycle} \cite{nCat=Cohomology,Blanco2005}. Hence homotopy coherent actions on $BD$ are classified by non-abelian group cocycles. 
Non-abelian cocycles also appear in the construction of equivariant Dijkgraaf-Witten theories~\cite{NMS} under the name weak 2-cocycles.
We discuss non-abelian group cohomology in more detail in Section~\ref{Sec: Non-abelian group cohomology}.  

If $D$ is abelian there are no morphisms between different objects in $\Aut_{\Grp}(D)\DS D$. This implies that an action up to homotopy of $G$ on $BD$ is given by a proper action of $G$ on $D$ and a group 2-cocycle in $H^2(BG;D)$ describing the coherence isomorphisms of the corresponding 2-functor. This agrees with the physical description in \cite{Kapustin:Symmetries}. 
\end{remark}
For every action $\alpha \colon \underline{* \DS G} \longrightarrow *\DS \Piit_1[BD,BD]$ up to homotopy the 2-functor \eqref{End(BG)=>End(D-Cob)} induces via pullbacks a 2-functor
\begin{align}\label{EQ: Action of kinematic symmetries on TFTs}
\rho \colon \underline{* \DS G}&\longrightarrow \DTFT\DS \End_{\Cat}(\DTFT)\hookrightarrow \Cat \\
g &\longmapsto \mathcal{R}\big(\alpha(g^{-1})\big)^* \ ,
\end{align}  
where we denote by $\Cat$ the 2-category of categories.

The (exponentiated) action of a gauge theory can be considered as a gauge-invariant map from the space of field configurations on an $n$-dimensional manifold $M$, in our case $\BunD(M)$, to $U(1)$. An action of $G$ on the space of field configurations induces an action via pullbacks on the set of gauge-invariant functions from the space of field configurations to $U(1)$.
A theory admits the symmetry $G$ if its (exponentiated) action is invariant under this action, i.e. it is a fixed point. 
By categorification we arrive at the following description.

\begin{definition}\label{Def: Field theory with k. symmetry}  
A $D$-equivariant field theory with {kinematical symmetry} described by  
\[
\rho\colon \underline{* \DS G}\longrightarrow \DTFT \DS \End_{\Cat}(\DTFT) \ ,
\]
as in~\eqref{EQ: Action of kinematic symmetries on TFTs}, is a homotopy fixed point of $\rho$, i.e. a natural 2-transformation $Z\colon 1 \Longrightarrow \rho$, 
where $1$ is the unique 2-functor sending $*$ to the category with one object and only identity morphisms. 
\end{definition}
\begin{remark}
Unpacking the definition, a $D$-equivariant field theory with kinematical symmetry consists of
\begin{itemize}
\item[(a)]
A functor $Z \colon 1 \longrightarrow \DTFT$; and

\item[(b)]
Natural transformations $\Upsilon_g \colon \rho(g) [Z] \Longrightarrow Z$ for all $g\in G$;
\end{itemize} 
satisfying natural coherence conditions. Since $1$ represents the identity 2-functor on $\Cat$ this is the same as a field theory $Z \in \DTFT$, together with coherent natural symmetric monoidal transformations $\Upsilon_g \colon  \mathcal{R}(\alpha(g^{-1}))^* Z \Longrightarrow Z$ for $g\in G$.
\end{remark}

An arbitrary Dijkgraaf-Witten theory with topological action $\omega\in Z^n(BD;U(1))$ does not admit a kinematical symmetry in general. On the other hand, there may be different ways to equip a given field theory with the structure of a homotopy fixed point. We give a sufficient condition for a kinematical symmetry to exist.
For this we need to introduce the following notion.

\begin{definition}\label{Def: Preserved}
A $n$-cocycle $\omega\in Z^n(BD;U(1))$ is {preserved by the action $\alpha$} if it can be equipped with the structure of a homotopy fixed point for the induced action of $G$ via the pullback along $\alpha(g^{-1}) $ on the category $Z^n(BD;U(1))$ whose morphisms are $n{-}1$-cochains up to coboundaries.  
\end{definition} 
In general there are non-isomorphic choices for the fixed point structure. A necessary condition for such a fixed point to exist is $\alpha(g)^*[\omega]=[\omega]$ for all $g\in G$. 

\begin{remark}\label{Rem: Preserved}
Concretely, the additional structure consists of an equivalence class of cochains $\Phi_{g}\in C^{n-1}(BD;U(1))$ up to coboundary satisfying\footnote{Throughout we switch freely between the additive and multiplicative notation for $U(1)$-valued cocycles.} $\delta \Phi_{g} = \omega- \alpha(g^{-1})^* \omega$. These cochains have to satisfy the coherence relations
\begin{align}
\Phi_{g_1}+\alpha(g_1^{-1})^*\Phi_{g_2} = \Phi_{g_1\,g_2}+\sigma_{g_1,g_2}[\omega]  \ , 
\end{align}
where $\sigma_{g_1,g_2}[\omega]$ is the $n{-}1$-cochain induced by the homotopy $\sigma_{g_1,g_2}\colon \alpha(g_2^{-1})\circ \alpha(g_1^{-1}) \longrightarrow \alpha(g_2^{-1}\,g_1^{-1})$. The difference between two homotopy fixed point structures can be described by a group homomorphism $G \longrightarrow H^{n-1}(BD;U(1))$.  
\end{remark}

\begin{proposition}\label{Prop: Classical symmetry}
Let $\omega \in Z^n(BD;U(1))$ be a topological action and $\alpha\colon \underline{* \DS G} \longrightarrow * \DS \Piit_1[BD,BD]$ a homotopy coherent action of $G$ on $BD$. If $\alpha$ preserves $\omega$, then the classical Dijkgraaf-Witten theory $L_\omega \colon \DCob \longrightarrow \fvs$ admits a kinematical symmetry described by $\alpha$.
\end{proposition}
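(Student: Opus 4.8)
The plan is to unwind the definition of a kinematical symmetry (Definition~\ref{Def: Field theory with k. symmetry}) into its concrete data, as spelled out in the subsequent remark, and then to construct each piece directly from the cochains $\Phi_g$ provided by the hypothesis that $\alpha$ preserves $\omega$ (Remark~\ref{Rem: Preserved}). Concretely, a homotopy fixed point of $\rho$ amounts to the field theory $L_\omega\in\DTFT$ itself, together with natural symmetric monoidal transformations
\begin{align}
\Upsilon_g \colon \mathcal{R}\big(\alpha(g^{-1})\big)^* L_\omega \Longrightarrow L_\omega
\end{align}
for each $g\in G$, subject to coherence conditions governed by the homotopies $\sigma_{g_1,g_2}$ that witness the 2-functoriality of $\alpha$. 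So the heart of the proof is to build these $\Upsilon_g$ out of the data $\Phi_g\in C^{n-1}(BD;U(1))$ satisfying $\delta\Phi_g=\omega-\alpha(g^{-1})^*\omega$, and to check the coherence relations against the $\Phi$-coherence identity already recorded in Remark~\ref{Rem: Preserved}.

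First I would make $\mathcal{R}(\alpha(g^{-1}))^* L_\omega$ explicit. Using the description of $\mathcal{R}$ and the concrete formula for $L_\omega$ at the end of Section~\ref{Sec: Calssical gauge theory}, the pulled-back theory sends $(\Sigma,\varphi)$ to $L_\omega(\Sigma,\alpha(g^{-1})\circ\varphi)=\C[\Fund(\Sigma)]/{\sim_{\alpha(g^{-1})\circ\varphi}}$ and acts on a cobordism $(M,\psi)$ by $[\sigma_a]\mapsto\langle(\alpha(g^{-1})\circ\psi)^*\omega,\sigma_M\rangle\,[\sigma_b]=\langle\psi^*\,\alpha(g^{-1})^*\omega,\sigma_M\rangle\,[\sigma_b]$. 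Thus the only difference from $L_\omega$ itself is that the cocycle $\omega$ is replaced everywhere by $\alpha(g^{-1})^*\omega$, on both objects and morphisms. Since $\alpha(g^{-1})^*\omega=\omega-\delta\Phi_g$, the two theories differ by a coboundary, and the natural candidate for the component of $\Upsilon_g$ at $(\Sigma,\varphi)$ is the linear map induced by multiplication by the boundary evaluation $\langle\varphi^*\Phi_g,\partial\sigma\rangle$ (equivalently $\langle\varphi^*\Phi_g,\sigma_\Sigma\rangle$ on the fundamental cycle $\sigma_\Sigma$ of $\Sigma$). I would verify that this descends to the quotient — i.e. respects both $\sim_{\alpha(g^{-1})\circ\varphi}$ on the source and $\sim_\varphi$ on the target — which is exactly the statement $\delta\Phi_g=\omega-\alpha(g^{-1})^*\omega$ applied on an $n$-chain $\Lambda$ via Stokes/the chain–cochain pairing, and that it is invertible (multiplication by a $U(1)$-valued scalar).

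Next I would check that the $\Upsilon_g$ assemble into natural transformations, which reduces to confirming that on each morphism $(M,\psi)$ the square commutes: both composites multiply $[\sigma_a]$ by a $U(1)$ scalar, and equality of the two scalars is precisely $\langle\psi^*\alpha(g^{-1})^*\omega,\sigma_M\rangle\,\langle(\varphi_b)^*\Phi_g,\sigma_{\Sigma_b}\rangle=\langle(\varphi_a)^*\Phi_g,\sigma_{\Sigma_a}\rangle\,\langle\psi^*\omega,\sigma_M\rangle$, which follows from $\delta\Phi_g=\omega-\alpha(g^{-1})^*\omega$ evaluated on $\sigma_M$ together with $\partial\sigma_M=\sigma_b-\sigma_a$. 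The symmetric monoidal compatibility is immediate because $\Phi_g$ pulls back additively under disjoint unions and the empty manifold contributes trivially. Finally comes the main coherence condition, which I expect to be the genuine obstacle: I must show that $\Upsilon_{g_1}\circ\big(\mathcal{R}(\alpha(g_1^{-1}))^*\Upsilon_{g_2}\big)$ agrees with $\Upsilon_{g_1 g_2}$, up to the coherence 2-morphism of the 2-functor $\rho$ induced by the homotopy $\sigma_{g_1,g_2}$. This is exactly where the $\Phi$-coherence relation $\Phi_{g_1}+\alpha(g_1^{-1})^*\Phi_{g_2}=\Phi_{g_1 g_2}+\sigma_{g_1,g_2}[\omega]$ from Remark~\ref{Rem: Preserved} enters: the left-hand composite multiplies by $\langle\varphi^*(\Phi_{g_1}+\alpha(g_1^{-1})^*\Phi_{g_2}),\sigma_\Sigma\rangle$ while $\Upsilon_{g_1g_2}$ multiplies by $\langle\varphi^*\Phi_{g_1g_2},\sigma_\Sigma\rangle$, and the discrepancy $\langle\varphi^*\sigma_{g_1,g_2}[\omega],\sigma_\Sigma\rangle$ is precisely the scalar by which the structural 2-morphism $\mathcal{R}(\sigma_{g_1,g_2})$ acts. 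The careful point here is bookkeeping: tracking how the coherence 2-morphisms of $\alpha$ transport through $\mathcal{R}$ (using the naturality from \cite[Proposition~2.9]{OFK}) and through the pullback 2-functor $\rho$, and matching signs and the cochain $\sigma_{g_1,g_2}[\omega]$ with the homotopy data. Once this identity is matched, the homotopy-fixed-point axioms hold and the proof concludes.
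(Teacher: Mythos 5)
Your proposal follows essentially the same route as the paper's proof: you define $\Upsilon_g$ componentwise as multiplication by $\langle\varphi^*\Phi_g,\sigma_\Sigma\rangle$, verify naturality on a cobordism $(M,\psi)$ via $\delta\Phi_g=\omega-\alpha(g^{-1})^*\omega$ and $\partial\sigma_M=\sigma_b-\sigma_a$, and reduce the coherence conditions to the identity $\Phi_{g_1}+\alpha(g_1^{-1})^*\Phi_{g_2}=\Phi_{g_1g_2}+\sigma_{g_1,g_2}[\omega]$ from the homotopy fixed point structure, exactly as the paper does (the paper simply states that last step rather than spelling it out). The only blemish is the phrase ``boundary evaluation $\langle\varphi^*\Phi_g,\partial\sigma\rangle$'', which is vacuous for closed $\Sigma$; your parenthetical correction to $\langle\varphi^*\Phi_g,\sigma_\Sigma\rangle$ is the right formula.
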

\begin{proof}
We set $Z=L_{\omega}$ and define a family of natural transformations $\Upsilon_g \colon  L_{\alpha(g^{-1})^* \omega} \Longrightarrow L_{\omega}$ by
\begin{align*}
{\Upsilon_g}_{\,(\Sigma, \varphi)}\colon L_{\alpha(g^{-1})^* \omega}(\Sigma, \varphi)&\longrightarrow L_{ \omega}(\Sigma, \varphi) \\
 [\sigma_\Sigma] &\longmapsto \left\langle \varphi^* \Phi_g, \sigma_\Sigma \right\rangle \, [\sigma_\Sigma] \ ,
\end{align*}  
where $\Phi_g$ is the $n{-}1$-cochain of Remark~\ref{Rem: Preserved} satisfying $\delta \Phi_{g} = \omega-\alpha(g^{-1})^* \omega $. For a morphism $(M,\psi) \colon (\Sigma, \varphi) \longrightarrow (\Sigma', \varphi')$ we have to show that the diagram
\begin{equation}
\begin{tikzcd}
L_{\alpha(g^{-1})^* \omega}(\Sigma, \varphi) \ar[rr,"{\Upsilon_g}_{\,(\Sigma, \varphi)}"]\ar[dd,"{L_{\alpha(g^{-1})^* \omega}(M,\psi)}",swap] & & L_{ \omega}(\Sigma, \varphi) \ar[dd,"{L_{\omega}(M,\psi)}"]\\
 & & \\
L_{\alpha(g^{-1})^* \omega}(\Sigma', \varphi') \ar[rr,swap,"{\Upsilon_g}_{\,(\Sigma', \varphi')}"] & & L_{ \omega}(\Sigma', \varphi')
\end{tikzcd}
\end{equation}
commutes.
We fix $\sigma_M \in \Fund(M)_{\sigma_\Sigma}^{\sigma_{\Sigma'}}$ and compute 
\begin{align*}
\left\langle  \psi^* \big(\omega-\alpha(g^{-1})^* \omega \big), \sigma_M  \right\rangle =
\left\langle  \psi^* (\delta \Phi_g ), \sigma_M  \right\rangle = \left\langle  \psi^*  \Phi_g, \partial \sigma_M  \right\rangle = \left\langle   \psi^*  \Phi_g ,\sigma_{\Sigma'} \right\rangle - \left\langle  \psi^*  \Phi_g , \sigma_{\Sigma} \right\rangle \ .
\end{align*}
This shows that the diagram commutes. 
The coherence conditions follow from the fact that the collection $\Phi_g$ corresponds to a homotopy fixed point structure.
\end{proof}

We describe the symmetries of quantum Dijkgraaf-Witten theory following~\cite[Section~2.4]{Freed:2014eja} by the following notion.
\begin{definition}\label{Def: Internal symmetry}
Let $G$ be a finite group and denote by $G\text{-}\mathsf{Rep}$ the category of finite-dimensional $G$-representations.
Let $Z\colon \Cob_n \longrightarrow \fvs$ be a topological field theory. An {internal $G$-symmetry} of $Z$ is a lift 
\begin{equation}
\begin{tikzcd}
 & G\text{-}\mathsf{Rep} \ar[dr] & \\
\Cob_n \ar[rr,swap,"Z"] \ar[ur,"Z_G"] & & \fvs
\end{tikzcd}
\end{equation}
of $Z$, where $G\text{-}\mathsf{Rep} \longrightarrow \fvs$ is the
forgetful functor. 
\end{definition}

\begin{remark}
This definition is equivalent to fixing a group homomorphism $G\longrightarrow \Aut_\otimes(Z)$ to the group of symmetric monoidal natural automorphisms of $Z$. 
\end{remark}

Kinematical symmetries of classical Dijkgraaf-Witten theories extend to the quantum theory: 
For a fixed manifold $\Sigma$, $\Upsilon_g$ induces a natural isomorphism $L_\omega \circ \mathcal{R}(\alpha(g^{-1}))|_{\BunD(\Sigma)}\longrightarrow L_\omega|_{\BunD(\Sigma)}$, which induces a linear map 
\[
 \int_{\BunD(\Sigma)} \, L_\omega \circ
 \mathcal{R}\big(\alpha(g^{-1})\big) \longrightarrow
 \int_{\BunD(\Sigma)} \, L_\omega = Z_\omega (\Sigma)\ .
\] 
The equivalence $\mathcal{R}(\alpha(g^{-1}))$ induces a morphism 
\begin{align}
Z_\omega(\Sigma) = \int_{\BunD(\Sigma)} \, L_\omega \longrightarrow
  \int_{\BunD(\Sigma)} \, L_\omega \circ \mathcal{R}\big(\alpha(g^{-1})\big) \ .
\end{align}
The action of $G$ consisting of $\varphi_g \colon Z_\omega(\Sigma) \longrightarrow Z_\omega(\Sigma) $ is defined as the composition of these two maps. 
Fixing a fundamental class $\sigma_\Sigma$ of $\Sigma$ as in \eqref{Definition on Morphisms} and a parallel section $f(\,\cdot\,)= \mathbf{f}(\,\cdot\,)\,[\sigma_\Sigma] \in Z_\omega(\Sigma) $, the action of $G$ on $Z_\omega(\Sigma)$ takes the concrete form  
\begin{align}\label{Eq: Concrete form of the quantum symmetry}
g\triangleright \mathbf{f}(\Sigma,
  \varphi)=\mathbf{f}\big(\mathcal{R}(\alpha(g^{-1}))[\Sigma,
  \varphi]\big) \, \langle \varphi^* \Phi_{g} , \sigma_\Sigma \rangle
\end{align} 
with $\delta \Phi_{g}= \omega-\alpha(g^{-1})^* \omega$ as in the proof of Proposition~\ref{Prop: Classical symmetry}. 
\begin{proposition}\label{Prop: symmetry extends to quantum}
The collection $\varphi_g$ define a representation of $G$ on $Z_\omega (\Sigma)$ such that $Z_\omega$ is a functor into the category $G\text{-}\mathsf{Rep}$ of finite-dimensional $G$-representations. 
\end{proposition}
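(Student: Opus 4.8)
The plan is to verify the two properties that, by Definition~\ref{Def: Internal symmetry}, witness a lift of $Z_\omega$ to $G\text{-}\mathsf{Rep}$: first, that for each fixed $\Sigma$ the family $\{\varphi_g\}_{g\in G}$ is a genuine representation of $G$ on $Z_\omega(\Sigma)$, i.e.\ the assignment $g\mapsto\varphi_g$ is a group homomorphism into the automorphisms of $Z_\omega(\Sigma)$; and second, that for every cobordism $M\colon\Sigma_1\longrightarrow\Sigma_2$ the linear map $Z_\omega(M)$ is $G$-equivariant. Since the forgetful functor $G\text{-}\mathsf{Rep}\longrightarrow\fvs$ is faithful and $Z_\omega$ is already a symmetric monoidal functor into $\fvs$, the lift is automatically a functor once these two statements hold, so nothing further about functoriality need be checked. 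Conceptually both statements are instances of one principle: the maps $\varphi_g$ are obtained by applying the limit construction $\int_{\BunD(-)}$ to the classical homotopy fixed point data $(\Upsilon_g,\sigma_{g_1,g_2})$ of Proposition~\ref{Prop: Classical symmetry}, and functoriality of this quantization transports the coherent classical $G$-action to a quantum representation. I would nevertheless verify everything concretely from the explicit formula~\eqref{Eq: Concrete form of the quantum symmetry}, which keeps the role of the coherence cochains transparent.

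For the representation property, writing $a_g\coloneqq\alpha(g^{-1})$, I would first note $\varphi_e=\id$ from the unital normalization of the homotopy fixed point structure ($a_e\simeq\id_{BD}$ and $\Phi_e$ cohomologically trivial). Applying~\eqref{Eq: Concrete form of the quantum symmetry} twice, the composite $\varphi_{g_1}\circ\varphi_{g_2}$ sends $\mathbf{f}(\Sigma,\varphi)$ to $\mathbf{f}(\Sigma,a_{g_2}\circ a_{g_1}\circ\varphi)$ weighted by the phase $\langle\varphi^*(\Phi_{g_1}+a_{g_1}^*\Phi_{g_2}),\sigma_\Sigma\rangle$, whereas $\varphi_{g_1 g_2}$ produces $\mathbf{f}(\Sigma,a_{g_1 g_2}\circ\varphi)$ weighted by $\langle\varphi^*\Phi_{g_1 g_2},\sigma_\Sigma\rangle$. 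The two classifying maps are related by the coherence homotopy $\sigma_{g_1,g_2}\colon a_{g_2}\circ a_{g_1}\longrightarrow a_{g_1 g_2}$; since $\mathbf{f}$ is a parallel section of $L_\omega(\Sigma,-)$, its values at these homotopic maps differ precisely by the cylinder factor $\langle\varphi^*\sigma_{g_1,g_2}[\omega],\sigma_\Sigma\rangle$, with $\sigma_{g_1,g_2}[\omega]$ the $n{-}1$-cochain of Remark~\ref{Rem: Preserved}. The coherence relation $\Phi_{g_1}+a_{g_1}^*\Phi_{g_2}=\Phi_{g_1 g_2}+\sigma_{g_1,g_2}[\omega]$ then shows that this extra factor is exactly the mismatch between the two phases, giving $\varphi_{g_1}\circ\varphi_{g_2}=\varphi_{g_1 g_2}$.

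For the equivariance of $Z_\omega(M)$, the starting point is that $\Upsilon_g$ is a natural transformation of functors $\DCob\longrightarrow\fvs$: the commuting square in the proof of Proposition~\ref{Prop: Classical symmetry} says precisely that $\Upsilon_g$ is compatible with $L_\omega$ on every cobordism. I would evaluate both composites $\varphi_g^{\Sigma_2}\circ Z_\omega(M)$ and $Z_\omega(M)\circ\varphi_g^{\Sigma_1}$ on a parallel section $f$ using the defining integral~\eqref{Definition on Morphisms}. The pullback equivalence $\mathcal{R}(\alpha(g^{-1}))$ reindexes the homotopy fibre $\BunD(M)|_{\varphi_2}$ onto $\BunD(M)|_{a_g\circ\varphi_2}$; because groupoid cardinality is invariant under equivalences of essentially finite groupoids, this reindexing leaves the integral unchanged, and it remains only to match the accumulated $\omega$-phases. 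Here the bulk factor $\langle\psi^*\omega,\sigma_M\rangle$ and the boundary cylinder factor $\langle h^*\omega,[0,1]\times\sigma_2\rangle$ of~\eqref{Definition on Morphisms} recombine, via $\delta\Phi_g=\omega-a_g^*\omega$, into the same telescoping of boundary terms $\langle\psi^*\Phi_g,\sigma_{\Sigma_2}\rangle-\langle\psi^*\Phi_g,\sigma_{\Sigma_1}\rangle$ seen in the classical computation, now integrated over the groupoid. This yields the equivariance square under the integral sign, hence $\varphi_g^{\Sigma_2}\circ Z_\omega(M)=Z_\omega(M)\circ\varphi_g^{\Sigma_1}$.

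The main obstacle is the bookkeeping in this last step: one must track simultaneously the reindexing of the integration groupoid by $\mathcal{R}(\alpha(g^{-1}))$, the boundary cylinder contributions in~\eqref{Definition on Morphisms}, and the transgression cochains, and confirm that every $\omega$-dependent phase not accounted for by the classical cobordism factor cancels with the correct orientation conventions. The cleanest insurance against sign and normalization errors is to keep in mind the abstract reason behind the computation, namely that $\varphi_g$ is the image of the natural transformation $\Upsilon_g$ under the limit-preserving quantization, so that the equivariance square over $M$ is forced by the naturality of $\Upsilon_g$ over $M$ together with the coherence data already verified at the classical level in Proposition~\ref{Prop: Classical symmetry} and Remark~\ref{Rem: Preserved}.
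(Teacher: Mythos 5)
Your proposal is correct and follows essentially the same route as the paper: the paper's proof simply cites the functoriality of the orbifold construction of \cite{OFK} together with the coherence conditions of the homotopy fixed point, which are precisely the two ingredients you identify (naturality of $\Upsilon_g$ under the limit-preserving quantization for equivariance of $Z_\omega(M)$, and the relation $\Phi_{g_1}+\alpha(g_1^{-1})^*\Phi_{g_2}=\Phi_{g_1 g_2}+\sigma_{g_1,g_2}[\omega]$ combined with parallel transport along $\sigma_{g_1,g_2}$ for the group law). Your version merely unpacks these explicitly, and the unpacking is accurate.
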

\begin{proof}
This is a direct consequence of the functoriality of the orbifold construction \cite[Remark~3.43]{OFK} and the coherence conditions for the homotopy fixed point. 
\end{proof}

\begin{example}
The trivial action of $G$ on $BD$ is always an internal $G$-symmetry. 
Any action of $G$ is an internal $G$-symmetry for a theory with trivial topological Lagrangian. 
We will provide some more profound examples in Sections~\ref{Sec: Non-abelian group cohomology} and~\ref{Sec: Gauging}.
\end{example}

\subsection{Non-abelian group cohomology}\label{Sec: Non-abelian group
  cohomology}

Following \cite{Blanco2005} we review non-abelian group 2-cocycles and show how they classify extensions. For simplicity we only discuss groups, which is enough for the study of anomalies in Dijkgraaf-Witten theories. The generalisation to groupoids is straightforward. Let $G$ and $D$ be finite groups. Recall from Section~\ref{Sec: Finite symmetries} that a non-abelian 2-cocycle on $G$ with coefficients in $D$ is a 2-functor $\alpha \colon \underline{* \DS G} \longrightarrow * \DS (\Aut_{\Grp}(D)\DS D) \subset \Grpd$, where $\Grpd$ is the 2-category of groupoids. 
The 2-category $* \DS (\Aut_{\Grp}(D)\DS D)$ can be considered as a sub-2-category of $\Grpd$ by sending the only object to $* \DS D$. We assume without loss of generality that $\alpha$ preserves identities strictly.  
Spelling out the definition, we see that $\alpha$ consists of  maps of sets $\alpha \colon G \longrightarrow \Aut_{\Grp}(D)$ and $\sigma_\alpha \colon G\times G \longrightarrow D$ satisfying
\begin{align}
\alpha(1)&=\id_D \ , \\[4pt]
\sigma_\alpha (1,1)&=1 \ , \\[4pt]
\alpha(g_1\,g_2)[d] &= \sigma_\alpha(g_1,g_2)^{-1} \, \alpha (g_1)
                      \big[\alpha(g_2) [d]\big] \,
                      \sigma_\alpha(g_1,g_2) \ , \\[4pt]
\sigma_\alpha(g_1,g_2) \, \sigma_\alpha(g_1\, g_2,g_3) & = \alpha
                                                         (g_1)
                                                         \big[\sigma_\alpha(g_2,g_3)\big]
                                                         \,
                                                         \sigma_\alpha(g_1,
                                                         g_2 \, g_3) \ . \label{Eq: Non abelian cocycle} 
\end{align}

Using the Grothendieck construction, the 2-functor $\alpha \colon \underline{BG}\longrightarrow \Grpd$ induces an op-fibration of groupoids
\begin{align}
\int \alpha \longrightarrow *\DS G \ 
\end{align}  
having the following concrete description: There is only one object which we denote by $*$, endomorphisms are given by pairs $(g,d)\in G\times D$ and composition is defined by 
\begin{align}
(g_2,d_2)\,(g_1,d_1)= \big(g_2\,g_1\,,\, d_2 \, \alpha (g_2)[d_1] \, \sigma_\alpha(g_2,g_1)\big) \ . 
\end{align} 
The op-fibration corresponds to an extension of $G$ by $D$, i.e. an exact sequence 
\begin{align} \label{Eq: Extension}
1\longrightarrow D \longrightarrow \widehat G \longrightarrow G
  \longrightarrow 1 \ ,
\end{align} 
with $\widehat{G}=\End_{\int \alpha}(*)$. 
Since the Grothendieck construction induces an equivalence of
categories, it is natural to believe that extensions of $G$ by $D$ are
classified by non-abelian 2-cocycles on $G$ with coefficients in
$D$. A proof of this expectation can be found in \cite{Blanco2005}. 

Let us conclude this discussion with a few explicit examples.
\begin{example}
For every pair of groups $(G,D)$ there is a trivial non-abelian 2-cocycle corresponding to the constant 2-functor $\underline{* \DS G}\longrightarrow * \DS (\Aut_{\Grp}( D) \DS D)$. The corresponding extension is
\begin{align*}
1\longrightarrow D \longrightarrow D\times G \longrightarrow G \longrightarrow 1 \ .
\end{align*}
\end{example}
\begin{example}\label{Ex: Abelian cohomology}
If $D$ is abelian and $\alpha \colon G\longrightarrow \Aut_{\Grp}(D)$ is trivial, then a non-abelian 2-cocycle reduces to an ordinary 2-cocycle $\sigma\in H^2(G;D)$ and the corresponding extensions are the usual central extensions classified by the abelian 2-cocycle. 
From a physical point of view such a 2-cocycle can appear if the $G$-action on matter fields only closes up to a $D$-gauge transformation.
We give two concrete examples for later use. Let $N$ and $M$ be positive integers.
Identifying the cyclic groups $\Z_N$ and $\Z_M$ with $\{0,1, \dots , N-1 \} $ and $\{0,1, \dots , M-1 \} $ we define the 2-cocycle 
\begin{align}
\sigma \colon \Z_M\times \Z_M &\longrightarrow \Z_N \\
(a, b)& \longmapsto \Big\lfloor \frac{a+b}{M} \Big\rfloor \text{ mod }N \ .
\end{align}  
The corresponding central extension is 
\begin{align}
0 \longrightarrow \Z_N \xrightarrow{M\,\cdot}  \Z_{N\,M}
  \longrightarrow \Z_M \longrightarrow 0 \ , 
\end{align}
where the first map is multiplication by $M$ and the second map is
reduction modulo $N$.
This example can be adapted to an arbitrary number of copies of $\Z_N$ and $\Z_M$. An example is the abelian 2-cocycle corresponding to
\begin{align}
(0,0) \longrightarrow \Z_N\times \Z_N \xrightarrow{(M ,M)\,
  \cdot}\Z_{N\,M} \times \Z_{N\,M} \longrightarrow \Z_M \times \Z_{M}
  \longrightarrow (0,0)
\end{align}
which is given by 
\begin{align}
 (\Z_M\times \Z_M)^2 &\longrightarrow \Z_N\times \Z_N \\
\big( (a_1, b_1)\,,\,(a_2,b_2)\big)& \longmapsto \Big(\Big\lfloor
                                     \frac{a_1+a_2}{M} \Big\rfloor
                                     \text{ mod }N \,,\, \Big\lfloor
                                     \frac{b_1+b_2}{M} \Big\rfloor
                                     \text{ mod }N \Big) \ .
\end{align}
\end{example}
\begin{example}
Given a group homomorphism $\alpha\colon G\longrightarrow
\Aut_{\Grp}(D)$, we can consider $\alpha$ as a non-abelian 2-cocycle
with trivial map $\sigma_\alpha$. The corresponding extension is the semi-direct product
\begin{align}
1 \longrightarrow D \longrightarrow G \ltimes_\alpha D \longrightarrow G \longrightarrow 1 \ .
\end{align}
\end{example}

\subsection{Gauging discrete symmetries and 't Hooft anomalies}\label{Sec: Gauging}

There is an inclusion of categories $i\colon \Cob_n \hookrightarrow G\text{-}\Cob_n$ for every group $G$ by equipping every manifold with the trivial $G$-bundle. The pullback $i^* Z_G \colon \Cob_n \longrightarrow \fvs$ of a $G$-equivariant field theory $Z_G:G\text{-}\Cob_n\longrightarrow\fvs$ carries additional structure: 
By evaluating $Z_G$ on gauge transformations of the trivial $G$-bundle
on an $n{-}1$-dimensional manifold $\Sigma$ we get a representation of
$G$ on $i^* Z_G(\Sigma)$ which is compatible with the definition on
cobordisms. Hence $i^* Z_G$ is a quantum field theory with internal
$G$-symmetry in the sense of Definition~\ref{Def: Internal symmetry}, i.e. a symmetric monoidal functor
\begin{align}
i^* Z_G \colon \Cob_n \longrightarrow G\text{-}\mathsf{Rep} \ .
\end{align}   
Recall that we considered a $G$-equivariant field theory as a field theory coupled to classical $G$-gauge fields. Given a field theory $Z \colon \Cob_n \longrightarrow G\text{-}\mathsf{Rep}$ with internal $G$-symmetry we can ask if the symmetry can be gauged.

\begin{definition}\label{Def: Gauging}
Let $Z\colon \Cob_n \longrightarrow G\text{-}\mathsf{Rep}$ be a topological quantum field theory with internal $G$-symmetry. A $G$-equivariant field theory $Z_G\colon \GCob \longrightarrow \fvs$ {gauges} the internal $G$-symmetry if $i^*Z_G=Z$ as functors $\Cob_n \longrightarrow G\text{-}\mathsf{Rep} $.
\end{definition}
In general it may be impossible to gauge a given symmetry due to cohomological obstructions. In this case we say that the symmetry has a 't Hooft anomaly. In the following we will study under which conditions the symmetries discussed in Section \ref{Sec: Finite symmetries} have 't Hooft anomalies.

\begin{remark}
In three dimensions the question of whether a given field theory can be gauged is related to an interesting algebraic problem.
A three-dimensional extended topological quantum field theory is described by a modular tensor category $\mathsf{M}$~\cite{bartlett2015modular}. An internal $G$-symmetry corresponds to a homotopy coherent action of $G$ on $\mathsf{M}$ via braided autoequivalences. The group of braided autoequivalences up to natural isomorphism is known as the Brauer-Picard group.   
The modular tensor category corresponding to the Dijkgraaf-Witten theory with gauge group $D$ and topological action $\omega\in Z^3(BD;U(1))$ is the category of finite-dimensional modules over the $\omega$-twisted Drinfeld double of the group algebra $\C[D]$ defined in~\cite{Twisted_Drinfeld_double}.
The corresponding Brauer-Picard group for $\omega=0$ is studied in detail in~\cite{Lentner:2015pla}. The more general case of the representation category of Hopf algebras which includes the case of non-trivial $\omega$ is studied in~\cite{2017arXiv170205133L}. The kinematical symmetries studied in this paper correspond to the subgroup of classical symmetries in~\cite{2017arXiv170205133L}.       
Three-dimensional $G$-equivariant extended field theories correspond to $G$-modular categories~\cite{TV14,EOFK}. 
The symmetry corresponding to a homotopy coherent action of $G$ on a modular tensor category $\mathsf{M}$ can be gauged if there exists a $G$-modular category $\mathsf{M}_G= \bigoplus_{g\in G}\, \mathsf{M}_g$ such that $\mathsf{M}_1=\mathsf{M}$ in a compatible way. The question of under which conditions such an extension exists is answered in~\cite{2009arXiv0909.3140E}, whereby the case relevant for Dijkgraaf-Witten theories is discussed in their appendix. 

In this algebraic framework the gauging of more complicated symmetries
of arbitrary three-dimensional extended topological field theories can
be addressed using the cobordism hypothesis and representation
theoretic techniques. A detailed study of this would be interesting. However, we refrain from doing so in this paper and focus instead on a largely dimension-independent discussion.   
\end{remark}

The non-abelian group 2-cocycle describing the action of $G$ on a Dijkgraaf-Witten theory with gauge group $D$ and topological action $\omega \in Z^n(BD;U(1))$ determines an (not necessarily central) extension 
\begin{align}
\label{Group extension}
1\longrightarrow D \overset{\iota}{\longrightarrow} \widehat{G} \overset{\lambda }{\longrightarrow} G \longrightarrow 1\ .
\end{align} 
The short exact sequence should be understood as a way to combine $D$- and $G$-gauge fields into a single $\widehat{G}$-gauge field. 
If there exists $\widehat{\omega} \in H^n(B\widehat{G};U(1))$ such that $\iota^* \widehat{\omega}= \omega$ we say that the symmetry $G$ is {anomaly-free}.
In this case we can use the pushforward construction \cite[Section~4]{OFK} along $\lambda$ to get a $G$-equivariant field theory gauging the symmetry. We review this construction in the concrete example we are interested in.

Let $\Sigma$ be an $n{-}1$-dimensional closed manifold.
The group homomorphism $\lambda$ induces an extension functor 
\begin{align}
\lambda_* \colon \mathsf{Bun}_{\widehat{G}}(\Sigma) \longrightarrow \mathsf{Bun}_G(\Sigma) \ . 
\end{align}  
This functor acts on classifying maps by post-composition with the map $B\widehat{G}\longrightarrow B G$ induced by $\lambda \colon \widehat{G}\longrightarrow G$, which by a slight abuse of notation we denote again by $\lambda$. 
Equivalently this functor associates to a $\widehat{G}$-bundle $\widehat{\varphi}$ on $\Sigma$ the induced $G$-bundle 
\[
\widehat{\varphi}\times_\lambda G = \widehat{\varphi}\times G \, / \sim \ ,
\] 
where $(p,g)\sim (p\triangleleft \widehat{g},\lambda(\widehat{g}^{\,-1})\, g)$ for all $p\in \widehat{\varphi}$, $g\in G$ and $\widehat{g} \in \widehat{G}$.
For a bundle $\varphi \in \BunG(\Sigma)$ we denote by $\lambda_*^{-1}[\varphi]$ the homotopy fibre 
\begin{equation}
\begin{tikzcd}
\lambda_*^{-1}[\varphi] \ar[r] \ar[d] & \mathsf{Bun}_{\widehat{G}}(\Sigma) \ar[d, "\lambda_*"] \\
* \ar[r,swap, "\varphi"] & \BunG(\Sigma)
\end{tikzcd}
\end{equation}
Concretely, objects of $\lambda_*^{-1}[\varphi]$ are pairs $(\widehat{\varphi},h)$ of a $\widehat{G}$-bundle $\widehat{\varphi}$ and a gauge transformation $h \colon \lambda_* \widehat{\varphi} \longrightarrow \varphi $. Morphisms are gauge transformations $\widehat{h} \colon \widehat{\varphi} \longrightarrow \widehat{\varphi}\,'$ such that the diagram
\begin{equation}
\begin{tikzcd}
\lambda_* \widehat{\varphi} \ar[rr, "\lambda_* \widehat{h}"] \ar[rd, " h ", swap] & & \lambda_* \widehat{\varphi}\,' \ar[ld, " h' "] \\
 & \varphi &
\end{tikzcd}
\end{equation}
commutes.
The pushforward of $L_{\widehat{\omega}}$ can now be defined on an object $(\Sigma, \varphi \colon \Sigma \longrightarrow BG)$ as
\begin{align}
\label{EQ:EDW on Objects}
\lambda_* L_{\widehat{\omega}}(\Sigma, \varphi)= \int_{\lambda_*^{-1}[\varphi]} \, \Big( \lambda_*^{-1}[\varphi] \longrightarrow \mathsf{Bun}_{\widehat{G}}(\Sigma) \xrightarrow{L_{\widehat{\omega}}} \fvs \Big) \ .
\end{align}
This should be regarded as a quantization of the $D$-gauge fields
while leaving the $G$-sector classical. 
We realise this limit again as the space of parallel sections. In this case a parallel section $f\in \lambda_* L_{\widehat{\omega}}(\Sigma, \varphi)$ consists of an element $f(\widehat{\varphi},h)\in L_{\widehat{\omega}}(\Sigma,\widehat{\varphi})$ for all $(\widehat{\varphi},h) \in \lambda_*^{-1}[\varphi] $. 

Let $(M, \psi)\colon (\Sigma_a, \varphi_a)\longrightarrow (\Sigma_b,\varphi_b)$ be a morphism in $\GCob$.
To define the pushforward on a parallel section $f(\,\cdot\,)\in \lambda_* L_{\widehat{\omega}}(\Sigma_a, \varphi_a)$ we fix fundamental cycles $\sigma_a$ and $\sigma_b$ of $\Sigma_a$ and $\Sigma_b$, respectively, and write $f$ as $f(\,\cdot\,)=\mathbf{f}(\,\cdot\,)\,[\sigma_a]$ as in Section~\ref{sec:quantumDWtheories}. 
We define
\begin{align}
\label{EQ:EDW on morphisms}
\hspace{-10mm} \lambda_* L_{\widehat{\omega}}(M, \psi)[f](\widehat{\varphi}_b,h_b) = \Big(\int_{(\widehat{\psi},h,\widehat{h}\,)\in \lambda_*^{-1}[\psi]|_{(\widehat{\varphi}_b,h_b)}} \, \big\langle \, \widehat{h}^* \widehat{\omega},[0,1] \times \sigma_b \big \rangle \, \big\langle  \widehat{\psi}^* \widehat{\omega}, \sigma_M \big \rangle \, \mathbf{f}\big(\widehat{\psi}|_{\Sigma_a}, h|_{\Sigma_a}\big)\Big) \ [\sigma_b]
\end{align}
with $\sigma_M \in \Fund^{\sigma_b}_{\sigma_a}(M)$; here the homotopy pullback
$\lambda_*^{-1}[\psi]|_{(\widehat{\varphi}_b,h_b)}$ is the groupoid
with objects $(\widehat{\psi},h,\widehat{h}\,)$ where $
\widehat{\psi} \colon M \longrightarrow B\widehat{G}$ is a
$\widehat{G}$-bundle, $h \colon \lambda_* \widehat{\psi}
\longrightarrow \psi$ is a gauge transformation, and
$\widehat{h}\colon \widehat{\psi}\,|_{\Sigma_b} \longrightarrow
\widehat{\varphi}_b$ is a gauge transformation
such that the diagram
\begin{equation}
\begin{tikzcd}
\lambda_*\widehat{\psi}\,\big|_{\Sigma_b} \ar[rr, "\lambda_*\widehat{h}"] \ar[rd,"h|_{\Sigma_b}",swap] & & \lambda_*\widehat{\varphi}_b \ar[ld, "h_b"] \\
 & \varphi_b &
\end{tikzcd}
\end{equation}
commutes. 
This defines a functorial field theory by~\cite[Proposition~4.2]{OFK}. The field theories constructed here are called $G$-equivariant Dijkgraaf-Witten theories. In the case of trivial 2-cocycle they have been studied as extended field theories in \cite{NMS}. Extended $G$-equivariant Dijkgraaf-Witten theories with non-trivial 2-cocycles are constructed in \cite{EHQFT}. 

Now we can formulate the central statement of this section.
   
\begin{theorem}\label{Theorem: Gauging}
Let $Z_{\omega}$ be a discrete gauge theory with topological action $\omega\in Z^n(BD;U(1))$ and kinematical $G$-symmetry described by an extension
\begin{align}
1\longrightarrow D \overset{\iota}{\longrightarrow} \widehat{G} \overset{\lambda }{\longrightarrow} G \longrightarrow 1
\end{align}
such that there exists $\widehat{\omega} \in Z^n(B\widehat{G};U(1))$ satisfying $\omega =\iota^* \widehat{\omega}$. Then the $G$-equivariant Dijkgraaf-Witten theory $\lambda_* L_{\widehat{\omega}} \colon \GCob \longrightarrow \fvs$ gauges this symmetry. 
\end{theorem}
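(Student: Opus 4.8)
The plan is to verify the two conditions packaged in Definition~\ref{Def: Gauging}: that the restriction $i^*\lambda_* L_{\widehat\omega}$ agrees with $Z_\omega$ as a plain functor $\Cob_n \longrightarrow \fvs$, and that the internal $G$-representation it carries — obtained by evaluating $\lambda_* L_{\widehat\omega}$ on gauge transformations of the trivial $G$-bundle — coincides with the symmetry of $Z_\omega$ recorded in~\eqref{Eq: Concrete form of the quantum symmetry}. The geometric input is the fibration $BD \to B\widehat G \to BG$ associated to~\eqref{Group extension}, which identifies the homotopy fibre $\lambda_*^{-1}[\varphi_0]$ over the trivial $G$-bundle $\varphi_0$ with the groupoid $\BunD$: a $\widehat G$-bundle together with a trivialisation of its induced $G$-bundle is the same as a lift of the constant classifying map to $BD$, i.e.\ a $D$-bundle.

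First I would establish the equality of underlying functors. On an object $(\Sigma,\varphi_0)$, the defining limit~\eqref{EQ:EDW on Objects} runs over $\lambda_*^{-1}[\varphi_0] \simeq \BunD(\Sigma)$, and along this identification the classifying map of the lifted $D$-bundle factors through $\iota\colon BD\to B\widehat G$, so that the integrand $L_{\widehat\omega}$ restricts to $L_{\iota^*\widehat\omega}=L_\omega$. Hence $i^*\lambda_* L_{\widehat\omega}(\Sigma,\varphi_0) = \int_{\BunD(\Sigma)} L_\omega = Z_\omega(\Sigma)$, matching~\eqref{Definition on Objects}. On a morphism $(M,\psi_0)$ I would expand~\eqref{EQ:EDW on morphisms}: the homotopy pullback $\lambda_*^{-1}[\psi_0]|_{(\widehat\varphi_b,h_b)}$ again reduces to the fibre $\BunD(M)|_{\varphi_2}$ of Section~\ref{sec:quantumDWtheories}, the factor $\langle \widehat\psi^*\widehat\omega,\sigma_M\rangle$ becomes $\langle\psi^*\omega,\sigma_M\rangle$, and the boundary correction $\langle \widehat h^*\widehat\omega,[0,1]\times\sigma_b\rangle$ collapses to $\langle h^*\omega,[0,1]\times\sigma_b\rangle$ since the $G$-part of the transition is trivialised. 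This is precisely~\eqref{Definition on Morphisms}.

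The substantive step is matching the $G$-action. A constant gauge transformation of the trivial $G$-bundle by $g\in G$ acts on $\lambda_*^{-1}[\varphi_0]$ by permuting trivialisations; under the identification with $\BunD(\Sigma)$ this is realised by conjugation by a lift $\widehat g\in\widehat G$ of $g$, which on the fibre $BD$ induces exactly the automorphism $\alpha(g)$ of the non-abelian cocycle, up to the twist $\sigma_\alpha$. Since conjugation by $\widehat g$ is inner, the induced self-map of $B\widehat G$ is homotopic to the identity; restricting $\widehat\omega$ along a chosen homotopy and integrating over $[0,1]$ produces a primitive whose restriction to $BD$ is an $n{-}1$-cochain $\Phi_g$ with $\delta\Phi_g = \omega - \alpha(g^{-1})^*\omega$. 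I would check that this $\Phi_g$ is the homotopy fixed point datum of Remark~\ref{Rem: Preserved} used in Proposition~\ref{Prop: Classical symmetry}, so that evaluating~\eqref{EQ:EDW on morphisms} on the cylinder carrying this gauge transformation yields the weight $\langle\varphi_0^*\Phi_g,\sigma_\Sigma\rangle$ together with the relabelling $\mathbf f \mapsto \mathbf f\circ\mathcal{R}(\alpha(g^{-1}))$ — precisely~\eqref{Eq: Concrete form of the quantum symmetry}.

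I expect the hard part to be this last verification: tracking how the boundary term $\langle \widehat h^*\widehat\omega,[0,1]\times\sigma_b\rangle$ produces the transgressed cochain $\Phi_g$, and confirming that the $\sigma_\alpha$-twists in the non-abelian cocycle~\eqref{Eq: Non abelian cocycle} assemble into the coherence relation of Remark~\ref{Rem: Preserved} rather than an anomalous correction. Functoriality of the whole assignment then follows from~\cite[Proposition~4.2]{OFK}, as already invoked after~\eqref{EQ:EDW on morphisms}, so no independent check of the field-theory axioms is needed.
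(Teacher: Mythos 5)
Your proposal is correct and follows essentially the same route as the paper's proof: identifying the homotopy fibre over the trivial $G$-bundle with $\BunD$ via the fibration $BD\to B\widehat G\to BG$, lifting the gauge transformation $g$ to conjugation by an element of $\widehat G$ (whose effect on the $D$-part is $\alpha(g^{-1})$ once the $\sigma_\alpha$-twists are tracked), and producing $\Phi_g$ by integrating $\widehat\omega$ over the cylinder carrying $\widehat g$. The step you flag as the hard part is precisely the explicit computation the paper carries out, so no change of strategy is needed.
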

\begin{proof}
First we show that the trivial sector of $\lambda_* L_{\widehat{\omega}}$ is $Z_{\omega}$. From the exact sequence of groups we get a fibration
\begin{align}
BD \overset{\iota}{\longrightarrow} B\widehat{G} \overset{\lambda}{\longrightarrow} BG
\end{align}
of classifying spaces. Let $N$ be a manifold of dimension $n$ or $n-1$. We have to evaluate the homotopy fibre $\lambda_*^{-1}[\star]$ of the trivial bundle $\star \colon N \longrightarrow BG$. Using the homotopy lifting property we can restrict ourselves to the full subgroupoid with objects $(\widehat{\varphi} \colon N \longrightarrow \lambda^{-1}(*)=\iota (BD), \id)$, where $\lambda^{-1}(*)$ is the preimage of the base point of $BG$. Morphisms in this subgroupoid are homotopies which are trivial after applying $\lambda$. Using again the homotopy lifting property of a fibration we see that they must be homotopic to a homotopy supported in $BD$. This shows that we can replace limits and integration over the homotopy fibre of the trivial bundle with limits and integration over $\BunD(N)$ for every manifold $N$. For this reason \eqref{EQ:EDW on Objects} and \eqref{EQ:EDW on morphisms} reduce to \eqref{Definition on Objects} and \eqref{Definition on Morphisms} in this case, since $\widehat{\omega}$ pulls back to $\omega$.

Next we show that this gauges the symmetry, see~\eqref{Eq: Concrete form of the quantum symmetry}:
\begin{align}
\lambda_*L_{\widehat{\omega}}(\Sigma,g)f(\iota_* \varphi_D,\id)=f\big(\iota_* \mathcal{R}(\alpha(g^{-1}))\varphi_D, \id\big) \, \langle \varphi_D^*\Phi_g , \sigma_\Sigma \rangle
\end{align}  
for all closed $n{-}1$-dimensional manifolds $\Sigma$, $f(\,\cdot\,) \in \lambda_*L_{\widehat{\omega}}(\Sigma, \star \colon \Sigma \longrightarrow BG)$ and $\varphi_D \colon \Sigma \longrightarrow BD$, where we interpret $g\in G$ as a homotopy from the constant map $\star$ to itself. 
By~\cite[Proposition~4.2~(b)]{OFK} we have 
\begin{align}
\lambda_*L_{\widehat{\omega}}(\Sigma,g)f(\iota_* \varphi_D,\id)= f(\iota_* \varphi_D,g^{-1}) \ .
\end{align}  
We have to calculate a lift for the homotopy $g^{-1}$, i.e. a gauge transformation $\widehat{g}^{\,-1} \colon \iota_* \varphi_D \longrightarrow \iota_* \varphi'_D$ such that $\lambda (\widehat{g}^{\,-1}) = g^{-1}$. We use the concrete description of $\widehat{G}$-bundles as elements of the functor category $[\Pi_1(\Sigma),* \Ds \widehat G\,]$, where $\Pi_1(\Sigma)$ is the fundamental groupoid of $\Sigma$. A lift of the gauge transformation is then given by conjugation with $(g^{-1},1)\in \widehat G$. We calculate its action on the image $d\in D \subset \widehat{G}$ of a path in $\Sigma$. The inverse is given by \cite{Blanco2005}
\begin{equation}
\left( g \,,\, \sigma_\alpha(g,g^{-1})^{-1} \right ) \ .
\end{equation} 
Then
\begin{align}
\big(g^{-1},1\big)\,\big(1,d\big)\,\big( g\,,\, \sigma_\alpha(g,g^{-1})^{-1}\big) &= \big(g^{-1},\alpha(g^{-1})[d]\big)\,\big( g\,,\, \sigma_\alpha(g,g^{-1})^{-1}\big) \\[4pt]
&= \big( 1\,,\,\alpha(g^{-1})[d]\,\alpha(g^{-1})[\sigma_\alpha(g,g^{-1})^{-1}]\,\sigma_\alpha(g^{-1},g)\big) \\[4pt]
&= \big( 1\,,\,\alpha(g^{-1})[d]\, \sigma_\alpha (g^{-1},g)^{-1} \, \sigma_\alpha (g^{-1},g) \big) \\[4pt]
&= \big( 1,\alpha(g^{-1})[d]\big) \ ,
\end{align}
where in the third equality we used $\sigma_\alpha(g^{-1},g)^{-1}= \alpha(g^{-1})[\sigma_\alpha(g,g^{-1})^{-1}]$, which follows from \eqref{Eq: Non abelian cocycle} with $g_1=g^{-1}$, $g_2=g$ and $g_3= g^{-1}$ using $\sigma_\alpha({1,g})=\sigma_\alpha({g,1})=1$ for all $g\in G$. This shows that $\varphi_D'= \mathcal{R}(\alpha(g^{-1}))\varphi_D$. 
That $f(\,\cdot\,)$ is a parallel section implies
\begin{align}
f(\iota_* \varphi_D,g^{-1})= L_{\widehat{\omega}}^{-1}(\Sigma,\widehat{g})f\big(\mathcal{R}(\alpha(g^{-1}))\varphi_D , \id \big) \ . \label{Eq: Proof Main theorem}
\end{align}  

We define an $n{-}1$-cochain $\Phi_g$ on $D$ as follows: 
Let $\chi \colon \Delta^{n-1}\longrightarrow BD$ be an $n{-}1$-chain which we can include into $B\widehat{G}$ along $\iota$. Putting $\widehat{g}$ on the interval we get a map $[0,1]\times \Delta^{n-1} \longrightarrow B\widehat{G}$. Integration of the pullback of $\widehat{\omega}$ over $[0,1]\times \Delta^{n-1}$ gives the inverse of the value of $\Phi_g$ evaluated on the $n{-}1$-simplex. 
The value of $-\delta \Phi_g$ on an $n$-simplex $(d_1,\dots ,d_n) \colon \Delta^n \longrightarrow BD$ is given by 
\begin{align}
\big\langle [\widehat{g}\times(d_1,\dots , d_n)]^*\widehat{\omega} \,,\, [0,1]\times \partial \Delta^n \big\rangle & = \big\langle [\widehat{g}\times(d_1,\dots ,d_n)]^*\widehat{\omega} \,,\, \partial[0,1]\times \Delta^n - \partial([0,1]\times \Delta^n) \big\rangle \\[4pt]
&= \big\langle [\widehat{g}\times(d_1,\dots ,d_n)]^*\widehat{\omega} \,,\, (\{1\}-\{0\})\times \Delta^n \big\rangle \\[4pt]
&= \alpha(g^{-1})^*\omega(d_1, \dots ,d_n)- \omega(d_1, \dots , d_n) \ .
\end{align}  
By definition $L^{-1}_\omega(\Sigma,\widehat{g})= \langle \varphi_D^* \Phi_g, \sigma_\Sigma \rangle$. Inserting this into \eqref{Eq: Proof Main theorem} gives \eqref{Eq: Concrete form of the quantum symmetry} where $\Phi_g$ provide the homotopy fixed point structure.
\end{proof}

\begin{remark}
Theorem~\ref{Theorem: Gauging} provides a general mechanism for the gauging of
symmetries. However, we cannot show that it is impossible to gauge the
symmetry when the conditions of Theorem~\ref{Theorem: Gauging} are not satisfied, i.e.
when no such $\widehat{\omega}$ exists.\end{remark}

\begin{example}\label{Example: Gauging1}
We describe a discrete two-dimensional gauge theory with gauge group $D=\Z_N\times \Z_N$ and topological action $\omega_k \in H^2(\Z_N\times \Z_N; U(1))$ as defined in~\eqref{EQ:Def 2 cocycle}. The action of the symmetry group $G$ on $D$ can be encoded in a short exact sequence
\begin{align*}
1\longrightarrow D \longrightarrow \widehat{G} \longrightarrow G \longrightarrow 1 \ .
\end{align*}  
Set $G= \Z_M \times \Z_M$ and consider the extension
\begin{align}
(0,0)\longrightarrow \Z_N\times \Z_N \xrightarrow{(M , M)\,\cdot}
  \Z_{N\,M}\times \Z_{N\,M}\longrightarrow \Z_M \times \Z_M
  \longrightarrow (0,0) \ .
\end{align} 
In this case we can gauge the symmetry in the manner of Theorem~\ref{Theorem: Gauging} for the topological action
$\omega_k$ with $k\in \lbrace 0,1,\dots, N-1 \rbrace$ if and only if
$k$ is divisible by $M$ modulo $N$, i.e. there exists $k'\in \Z$ such that $k'\,M = k \text{ mod}\ N$. Concretely, $\widehat{\omega}\in H^2(\Z_{N\,M}\times \Z_{N\,M}; U(1))$ is given by $\omega_{k'}\in Z^2(\Z_{N\,M}\times \Z_{N\,M};U(1))$. 
This simple example already shows that we cannot gauge every symmetry using Theorem~\ref{Theorem: Gauging};
it is discussed in~\cite{Kapustin:2014gua,Gaiotto2014} in the context of
$0$-form and $1$-form global symmetries. We will discuss obstructions
to finding an appropriate lift $\widehat{\omega}$ in more detail and generality in Section~\ref{Sec: Obstructions}. 
\end{example}

\begin{example}\label{Ex: D8 Z2}
The cyclic group $\Z_2$ acts on the dihedral group $D_8$ by conjugation with the generator $a$. Since this is an action via inner automorphisms it preserves the non-trivial 2-cocycle $\omega\in H^2(BD_8;U(1))$ from Example~\ref{Ex: 2D coycles}.b. This action defines a non-abelian 2-cocycle with trivial map $\sigma$. The corresponding extension is
\begin{align}
1 \longrightarrow D_8 \longrightarrow D_8 \rtimes \Z_2 \longrightarrow \Z_2 \longrightarrow 1 \ . 
\end{align} 
The Pauli group is the subgroup 
\begin{align}
P_1 = \lbrace \pm\, \mathds{1}_2 , \pm \,\iu\, \mathds{1}_2 , \pm\, \sigma_x , \pm \,\iu\, \sigma_x , \pm\, \sigma_y , \pm \,\iu\, \sigma_y , \pm\, \sigma_z , \pm \,\iu\, \sigma_z \rbrace
\end{align}
of the unitary group $U(2)$ with the Pauli spin matrices
\begin{align}
\sigma_x=  \bigg( \begin{matrix}
0 & 1 \\ 
1 & 0
\end{matrix} 
\bigg) \ , \quad
\sigma_y= \bigg(
\begin{matrix}
0 & -\,\iu\, \\ 
\,\iu\, & 0
\end{matrix} 
\bigg) \qquad \mbox{and} \qquad
\sigma_z=  \bigg(
\begin{matrix}
1 & 0 \\ 
0 & -1
\end{matrix} 
\bigg) \ . 
\end{align}
There is an equivalence of extensions 
\begin{equation}
\begin{tikzcd}
 &  & D_8\rtimes \Z_2 \ar[dd, "\vartheta"] \ar[rd] & & \\
1 \ar[r] & D_8 \ar[rd] \ar[ru] & & \Z_2 \ar[r] & 1\\
 &  & P_1 \ar[ru] & &
\end{tikzcd}
\end{equation}
given by $\vartheta (a^i\,b^j,k)= (\,\iu\, \sigma_x)^i\,\sigma_y^j\,\sigma_x^k$, showing that this extension is non-trivial even though it comes from an inner automorphism. The intuitive reason for this is that conjugation by $a^2$ is the identity even though $a^2$ itself is not. 
We will show in Example~\ref{Ex: D8 Z2 v2} that this symmetry cannot be gauged in the manner of Theorem~\ref{Theorem: Gauging}.
\end{example}

\begin{example}\label{Example: Gauging2}
In three dimensions we can look at the extension 
\begin{align}
0\longrightarrow \Z_N \xrightarrow{M\,\cdot } \Z_{N\,M}
  \longrightarrow \Z_M \longrightarrow 0 \ .
\end{align}
The 3-cocycle $\omega_k$ defined in~\eqref{Eq: Zn 3-cocycle} can
always be gauged by the 3-cocycle $\widehat{\omega}_k \in
H^3(B\Z_{N\,M};U(1))$ corresponding to the same value of $k$. 
\end{example}

\subsection{Obstructions to gauging of symmetries}\label{Sec: Obstructions}

We shall now work with the group cohomology $H^n(G;U(1))$ which can be identified with the cohomology of $BG$ with coefficients in $U(1)$.
There are obstructions for $\widehat{\omega}$ to exist which follow
from the Lyndon-Hochschild-Serre spectral sequence associated to the extension \eqref{Group extension}. We briefly review these obstructions. For a physical perspective on these obstructions, see~\cite{Thorngren2015}.
 
There is an action of $G$ on $H^n(D;U(1))$ induced by conjugation in $\widehat G$. Every cocycle on $\widehat{G}$ is invariant under conjugation and hence the first obstruction for $\widehat \omega$ to exist is 
\begin{align}
\omega \in H^n\big(D;U(1)\big)^G \ .
\end{align}
By definition, the obstruction is always satisfied if the extension
corresponds to a kinematical symmetry. The first quadrant
Lyndon-Hochschild-Serre spectral sequence corresponding to the exact
sequence~\eqref{Group extension} takes the form  
\begin{align}
E^{p,q}_2=H^p\big(G;H^q(D;U(1))\big)\Longrightarrow H^{p+q}\big(\widehat G ; U(1)\big) 
\end{align}
with edge maps $H^{n}(\widehat G ; U(1)) \twoheadrightarrow E^{ 0,n}_\infty = E^{0,n}_{n+2} \hookrightarrow H^n(D;U(1))^G $ given by the restriction to $D$ (see e.g.~\cite[Section~6.8]{Weibel}). Hence we see that $\omega \in \im(\iota^*)=E^{0,n}_{n+2}$ if and only if 
\begin{align}\label{Eq: Obstruction}
{\rm d}^{0,n}_i \omega = 0 \ \in \ E^{i,n+1-i}_i 
\end{align}
for all $i \in \{ 2, \dots , n+1 \}$. Note that $\dd^{0,n}_i \omega$ is only well-defined if $\dd^{0,n}_{i-1}\omega = 0$ and $E^{i,n+1-i}_i$ is a sub-quotient of $H^i\big(G;H^{n+1-i}(D;U(1))\big)$. 

To understand these obstructions in more detail we introduce the algebraic model for the spectral sequence~\cite[Section 2]{HS53}. The group cohomology of $\widehat{G}$ can be computed from the normalised cochain complex $C^\bullet(\widehat{G};U(1))$:
\begin{align}
0 \longrightarrow C^0\big(\widehat{G};U(1)\big) \longrightarrow C^{1}\big(\widehat{G};U(1)\big) \longrightarrow \cdots \ .
\end{align}
We introduce a filtration 
\begin{align}
 C^\bullet\big(\widehat{G};U(1)\big)=F^0C^\bullet\big(\widehat{G};U(1)\big) \supseteq F^1 C^\bullet\big(\widehat{G};U(1)\big)\supseteq F^2 C^\bullet\big(\widehat{G};U(1)\big) \supseteq \cdots 
\end{align}
where $F^i C^n(\widehat{G};U(1))$ is $0$ for $i>n$ and otherwise consists of all normalized $n$-cochains which are $0$ as soon as $n-i+1$ entries are in the image of $D$.
This filtration is compatible with the coboundary operator $\delta$ and hence induces a spectral sequence, which is the Lyndon-Hochschild-Serre spectral sequence. Concretely we set
\begin{align*}
Z^{p,q}_r & \coloneqq \text{ker} \Big(F^p
            C^{p+q}\big(\widehat{G};U(1)\big)\overset{\delta}{\longrightarrow}
            C^{p+q+1}\big(\widehat{G};U(1)\big)\big/F^{p+r}
            C^{p+q+1}\big(\widehat{G};U(1)\big) \Big) \ , \\[4pt]
B^{p,q}_r & \coloneqq \delta\Big(
            F^{p-r+1}C^{p+q-1}\big(\widehat{G};U(1)\big)\Big)\cap F^p
            C^{p+q}\big(\widehat{G};U(1)\big) \ , \\[4pt]
E^{p,q}_r &\coloneqq Z^{p,q}_r \big/ \big(B^{p,q}_r + Z_{r-1}^{p+1,q-1}\big) \ .
\end{align*} 
The differential $\delta\colon  C^{p+q}(\widehat{G};U(1)) \longrightarrow C^{(p+r)+(q-r+1)}(\widehat{G};U(1))$ induces the corresponding differentials
\begin{align}
\dd^{p,q}_r \colon E^{p,q}_r \longrightarrow E^{p+r,q-r+1}_r 
\end{align}
in the spectral sequence.

We consider the two-dimensional case as a warm-up. We fix $\omega \in H^2(D;U(1))$. 
The corresponding element in $E^{0,2}_2$ is the 2-cochain 
\begin{align}
\tilde{\omega} \colon \widehat{G}\times \widehat{G} &\longrightarrow U(1) \\
\big((d,g)\,,\, (d',g')\big) & \longmapsto \omega(d,d') \ .
\end{align}  
This is not generally a cocycle, since the multiplication in $\widehat{G}$ is twisted by the corresponding non-abelian 2-cocycle. This cochain obviously pulls back to $\omega$. The ensuing calculation can be understood as trying to find a 2-cochain on $\widehat{G}$ which is $0$ when pulled back to $D$ such that its sum with $\tilde{\omega}$ is closed. 

The first obstruction $\dd^{0,2}_2 \tilde{\omega} = 0$ is equivalent to $\delta \tilde{\omega} \in B^{2,1}_2 + Z^{3,0}_1$. This implies that there exists $\gamma_1 \in F^1C^2(\widehat{G};U(1))$ such that 
\begin{align}
\delta \gamma_1 \in F^2 C^3\big(\widehat{G};U(1)\big) \qquad \mbox{and} \qquad \delta (\tilde{\omega}-\gamma_1) \in Z^{3,0}_1  \ . 
\end{align} 
This means that we can consider $\tilde{\omega}$ as an element of
$E^{0,2}_3 \cong \ker\,\dd^{0,2}_2$. 
Note that the identification is not the identity, rather we have to map $\tilde{\omega}$ to $\tilde{\omega}-\gamma_1$. 
We have thus shown that if the first obstruction vanishes, then there exists $\theta\in Z^{3,0}_1 = Z^{3}(G;U(1))$ and a cochain $\omega'=\tilde{\omega}-\gamma_1$ such that $\delta \omega' = \lambda^* \theta$ and $\iota^* \omega'= \omega$. 
 
The next obstruction is $\dd^{0,2}_3 \tilde{\omega}= 0$. This is equivalent to $\delta (\tilde{\omega}-\gamma_1) \in B^{3,0}_3$, hence there exists $\gamma_2\in F^1C^2(\widehat{G};U(1))$ such that $\delta \gamma_2= \delta(\tilde{\omega}-\gamma_1)\in F^3C^2(\widehat{G};U(1))$. This implies $\delta(\tilde{\omega}-\gamma_1 -\gamma_2)= 0$ and $\iota^*(\tilde{\omega}-\gamma_1 -\gamma_2) = \omega$, since $\gamma_1$ and $\gamma_2$ are elements in $F^1C^2(\widehat{G};U(1))$. This gives the desired 2-cocycle $\widehat{\omega}=\tilde{\omega}-\gamma_1 -\gamma_2$.
 
The discussion above readily generalises to arbitrary dimension $n$. If the first obstruction vanishes then there exists $\gamma_1 \in  F^1C^n(\widehat{G};U(1))$ such that
\begin{align}
\delta(\tilde{\omega}-\gamma_1)\in F^3C^{n+1}\big(\widehat{G};U(1)\big) \ .
\end{align}
More generally if the first $m\leq n$ obstructions vanish, there are elements $\gamma_1,\dots , \gamma_m \in F^1C^n(\widehat{G};U(1)) $ such that 
\begin{align}
\delta \gamma_i \ &\in \ F^i C^{n+1}\big(\widehat{G};U(1)\big) \ , \\[4pt]
\delta\Big(\tilde{\omega} -\sum_{i=1}^k\, \gamma_i\Big) \  &\in \ F^{k+2} C^{n+1}\big(\widehat{G};U(1)\big) \ , 
\end{align} 
for all $i,k=1,\dots,m$.
In particular, if all obstructions vanish then
\begin{align}
\delta\Big(\tilde{\omega} -\sum_{i=1}^n\, \gamma_i\Big) = 0 
\end{align}
and 
\begin{align}
\iota^* \Big(\tilde{\omega} -\sum_{i=1}^n\, \gamma_i\Big) = \omega \ \in \ H^n\big(D;U(1)\big) \ . 
\end{align}

We are mostly interested in the case when all obstructions except the last one vanish. In this case 
\begin{align}
\delta\Big(\tilde{\omega} -\sum_{i=1}^{n-1}\, \gamma_i\Big) = \lambda^* \theta  
\end{align}
with $\theta \in Z^{n+1}(G;U(1))$, since closed elements of $F^{n+1}C^{n+1}(\widehat{G};U(1))$ are in one-to-one correspondence with $Z^{n+1}(G;U(1))$. 
We summarize the present discussion in 

\begin{proposition}\label{Prop: Obstructions}
Let 
\begin{align}
1 \longrightarrow D \overset{\iota}{\longrightarrow} \widehat{G}
  \overset{\lambda}{\longrightarrow} G \longrightarrow 1
\end{align}
be a short exact sequence of groups, $n$ a natural number and $\omega$ an $n$-cocycle on $D$ with values in $U(1)$. 
\begin{itemize}
\item[{\rm (a)}]
When all obstructions in \eqref{Eq: Obstruction} vanish, then there exists $\widehat{\omega}\in Z^n(\widehat{G};U(1))$ satisfying $\iota^* \widehat{\omega}= \omega$. 

\item[{\rm (b)}]
When the first $n-1$ obstructions in \eqref{Eq: Obstruction} vanish, then there exist $\omega'\in C^n(\widehat{G};U(1))$ and $\theta\in Z^{n+1}(G;U(1))$ satisfying $\iota^* \omega'= \omega$ and $\delta \omega'=\lambda^*\theta$. 
\end{itemize} 
\end{proposition}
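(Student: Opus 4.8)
The plan is to run the algebraic Lyndon–Hochschild–Serre spectral sequence of the extension \eqref{Group extension} step by step, beginning from the canonical lift $\tilde{\omega}\in F^0 C^n(\widehat{G};U(1))$ of $\omega$ defined by $\big((d_1,g_1),\dots,(d_n,g_n)\big)\longmapsto \omega(d_1,\dots,d_n)$, and successively correcting it by cochains in $F^1 C^n(\widehat{G};U(1))$ until either it becomes closed (for (a)) or its coboundary lands in the top filtration step (for (b)). Since $\omega$ lies in $H^n(D;U(1))^G$ — which is implicit in the obstructions \eqref{Eq: Obstruction} being defined at all — the cochain $\tilde{\omega}$ represents a class in $E^{0,n}_2$, and at the cochain level this reads $\delta\tilde{\omega}\in F^2 C^{n+1}(\widehat{G};U(1))$. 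This is the base of the induction.

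The core is an inductive claim on the number $m$ of vanishing differentials: if $\dd^{0,n}_j\tilde{\omega}=0$ for all $j\in\{2,\dots,m+1\}$, then there exist $\gamma_1,\dots,\gamma_m\in F^1 C^n(\widehat{G};U(1))$ with $\delta\gamma_i\in F^i C^{n+1}(\widehat{G};U(1))$ and
\[
\delta\Big(\tilde{\omega}-\sum_{i=1}^{m}\gamma_i\Big)\in F^{m+2}C^{n+1}\big(\widehat{G};U(1)\big)\ .
\]
For the inductive step I would read the condition $\dd^{0,n}_{m+2}\tilde{\omega}=0$ through the definitions of $Z^{p,q}_r$, $B^{p,q}_r$ and $E^{p,q}_r$: it asserts precisely that the class of $\delta\big(\tilde{\omega}-\sum_{i\le m}\gamma_i\big)$ in $E^{m+2,n-m-1}_{m+2}$ vanishes, i.e. that this coboundary lies in $B^{m+2,n-m-1}_{m+2}+Z^{m+3,n-m-2}_{m+1}$. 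Unwinding $B^{m+2,n-m-1}_{m+2}=\delta\big(F^1 C^n\big)\cap F^{m+2}C^{n+1}$ produces the required $\gamma_{m+1}\in F^1 C^n(\widehat{G};U(1))$ whose coboundary cancels the offending part, leaving $\delta\big(\tilde{\omega}-\sum_{i\le m+1}\gamma_i\big)$ inside $Z^{m+3,n-m-2}_{m+1}\subseteq F^{m+3}C^{n+1}(\widehat{G};U(1))$, which advances the induction.

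With the inductive claim in hand, both assertions follow by specialization together with one elementary observation: every $\gamma_i$ lies in $F^1 C^n$ and hence vanishes as soon as all $n$ arguments come from $D$, so $\iota^*\gamma_i=0$ and $\iota^*\big(\tilde{\omega}-\sum_i\gamma_i\big)=\iota^*\tilde{\omega}=\omega$ throughout. For (a), taking $m=n$ yields $\delta\big(\tilde{\omega}-\sum_{i=1}^{n}\gamma_i\big)\in F^{n+2}C^{n+1}=0$, since $F^p C^q=0$ for $p>q$; hence $\widehat{\omega}:=\tilde{\omega}-\sum_{i=1}^{n}\gamma_i$ is the desired cocycle with $\iota^*\widehat{\omega}=\omega$. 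For (b), taking $m=n-1$ gives $\omega':=\tilde{\omega}-\sum_{i=1}^{n-1}\gamma_i$ with $\delta\omega'\in F^{n+1}C^{n+1}(\widehat{G};U(1))$; as $\delta\omega'$ is closed, the correspondence recorded just before the proposition between closed elements of $F^{n+1}C^{n+1}(\widehat{G};U(1))$ and $Z^{n+1}(G;U(1))$ produces a unique $\theta\in Z^{n+1}(G;U(1))$ with $\delta\omega'=\lambda^*\theta$, and together with $\iota^*\omega'=\omega$ this is the claim.

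The main obstacle is the inductive step: faithfully translating the abstract vanishing $\dd^{0,n}_{m+2}\tilde{\omega}=0$ into the cochain-level existence of $\gamma_{m+1}$, while keeping careful track of which filtration degree each coboundary occupies. All the conceptual content is already packaged in the filtered complex set up before the proposition; what remains is the bookkeeping of the $F^\bullet$-degrees and the verification that the successive corrections can always be chosen inside $F^1 C^n$, so that the restriction $\iota^*$ is never disturbed. The two-dimensional warm-up carried out above is exactly the $n=2$ instance of this induction and serves as a template for the general case.
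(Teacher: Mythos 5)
Your proposal is correct and follows essentially the same route as the paper: the same filtration $F^\bullet C^\bullet(\widehat{G};U(1))$ and algebraic model of the Lyndon--Hochschild--Serre spectral sequence, the same canonical lift $\tilde{\omega}$, and the same successive corrections $\gamma_i\in F^1C^n(\widehat{G};U(1))$ killing the filtration degree of $\delta\tilde\omega$ one step at a time, with $\iota^*\gamma_i=0$ preserving the restriction to $D$. The only cosmetic difference is that you package as a formal induction on $m$ what the paper presents via the two-dimensional warm-up followed by the remark that the argument ``readily generalises'' to arbitrary $n$.
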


\begin{remark}
If the first $n-1$ obstructions vanish we can realize the anomalous
field theory as a boundary state of a classical
$n{+}1$-dimensional Dijkgraaf-Witten theory with topological action
$\theta$. In Section~\ref{Sec: Realisation on boundary} we will
explain this point in more detail. 
\end{remark}

\begin{example}
We have seen in Example~\ref{Example: Gauging2} that for the extension 
\begin{align}
0\longrightarrow \Z_N \longrightarrow \Z_{N\,M}\longrightarrow \Z_M \longrightarrow 0
\end{align}
all 3-cocycles on $\Z_N$ arise as pullbacks of 3-cocycles on $\Z_{N\,M}$, hence all obstructions vanish in this case.   
\end{example}

\begin{example}\label{Ex: D8 Z2 v2}
Following up on Example~\ref{Ex: D8 Z2} we show that for the symmetry described by 
\begin{align}\label{Eq: Exact sequence D8 Z2}
1 \longrightarrow D_8 \longrightarrow P_1 \longrightarrow \Z_2 \longrightarrow 1
\end{align} 
the non-trivial 2-cocycle $\omega\in H^2(D_8;U(1))$ cannot be
gauged. The cohomology groups of the Pauli group $P_1$ can be computed using a
computer algebra package such as {\tt GAP}~\cite{Joyner08aprimer} and
the universal coefficient theorem to get
\begin{align}
H^0\big(P_1; U(1)\big)&= U(1) \ , \\[4pt]
H^1\big(P_1; U(1)\big)&= \Z_2 \times \Z_2 \times \Z_2 \ , \\[4pt]
H^2\big(P_1; U(1)\big)&= \Z_2\times \Z_2 \ , \\[4pt]
H^3\big(P_1; U(1)\big)&= \Z_2\times \Z_2\times \Z_8  \ .
\end{align}  
The $E_2$ page of the corresponding spectral sequence is 
\begin{center}
\begin{tikzpicture}
  \matrix (m) [matrix of math nodes,
    nodes in empty cells,nodes={minimum width=5ex,
    minimum height=5ex,outer sep=-5pt},
    column sep=1ex,row sep=1ex]{
                &         &          &         &           & \\
          2     &     \Z_2 &  \Z_2   &   \Z_2  &   \Z_2    &  \\
          1     &  \Z_2\times \Z_2 &   \Z_2\times \Z_2  &  \Z_2\times \Z_2 &  \Z_2\times \Z_2       &  \\
          0     &  U(1)  & \Z_2 &  0  &  \Z_2     &  \\
    \quad\strut &   0  &  1  &  2  &   3   & \strut \\};
  \draw[-stealth] (m-2-2.south east) -- (m-3-4.north west);
  \draw[-stealth] (m-3-3.south east) -- (m-4-5.north west);
\draw[thick] (m-1-1.east) -- (m-5-1.east) ;
\draw[thick] (m-5-1.north) -- (m-5-6.north) ;
\end{tikzpicture}
\end{center}
The two differentials drawn are $0$ as can be checked by using the concrete description of the differentials in~\cite{HUEBSCHMANN1981296} and the fact that \eqref{Eq: Exact sequence D8 Z2} is the extension of $\Z_2$ by $D_8$ corresponding to the inner automorphism of $D_8$ given by conjugation with $a\in D_8$.  
Hence the $E_3$ page is given by 
\begin{center}
\begin{tikzpicture}
  \matrix (m) [matrix of math nodes,
    nodes in empty cells,nodes={minimum width=5ex,
    minimum height=5ex,outer sep=-5pt},
    column sep=1ex,row sep=1ex]{
                &         &          &         &           & \\
          2     &     \Z_2 &     &     &       &  \\
          1     &  \Z_2\times \Z_2 &   \Z_2\times \Z_2  &   &   &  \\
          0     &  U(1)  & \Z_2 &  0  &  \Z_2     &  \\
    \quad\strut &   0  &  1  &  2  &   3   & \strut \\};
\draw[thick] (m-1-1.east) -- (m-5-1.east) ;
\draw[thick] (m-5-1.north) -- (m-5-6.north) ;
\end{tikzpicture}
\end{center}
From $E_3^{1,1}=\Z_2\times \Z_2=E_\infty^{1,1}$ and $H^2(P_1;U(1))=
\Z_2\times \Z_2$ we deduce that the differential $\dd_3^{0,2}\colon
\Z_2 \longrightarrow \Z_2$ is an isomorphism. 
This implies that the symmetry corresponding to \eqref{Eq: Exact sequence D8 Z2} of the non-trivial topological action $\omega \in H^2(D_8;U(1))$ cannot be gauged using Theorem~\ref{Theorem: Gauging}, since the second obstruction corresponding to $\dd^{0,2}_3$ does not vanish. However, since the first obstruction vanishes we can gauge the symmetry using the relative field theory constructed in Section~\ref{Sec: Realisation on boundary}.   
\end{example}

\begin{example}
We have seen in Example~\ref{Example: Gauging1} that for the extension 
\begin{align}
(0,0) \longrightarrow \Z_2\times \Z_2 \longrightarrow \Z_4\times \Z_4
  \longrightarrow \Z_2\times \Z_2 \longrightarrow (0,0) 
\end{align}
the 2-cocycle $\omega_1 \in H^2(\Z_2\times \Z_2;U(1))$ cannot be obtained as the pullback of a 2-cocycle on $\Z_4\times \Z_4$. 
The corresponding 2-cochain is given by 
\begin{align}
\tilde{\omega}_1 \colon (\Z_4\times \Z_4)^2 & \longrightarrow U(1)\\
\big((a_1,b_1)\,,\, (a_2,b_2)\big) & \longmapsto \exp\Big(\pi \,\iu\,
                                     \Big\lfloor\frac{a_1}{2}\Big\rfloor\,
                                     \Big\lfloor\frac{b_2}{2}\Big\rfloor\Big)
                                     \ .
\end{align}
To find the corresponding obstructions we calculate using $\big\lfloor
\frac{a+b}{2} \big\rfloor = a\,b +\big\lfloor \frac{a}{2}
\big\rfloor+\big\lfloor \frac{b}{2} \big\rfloor$ mod~$2$ to get
\begin{align}
\delta \tilde{\omega}_1 \big((a_1,b_1)\,,\,
  (a_2,b_2)\,,\,(a_3,b_3)\big) = \exp\bigg(\pi \,\iu\, \Big(
  a_1\,a_2\,\Big\lfloor\frac{b_3}{2}\Big\rfloor
  +\Big\lfloor\frac{a_1}{2}\Big\rfloor \, b_2\,b_3 \Big) \bigg) \ .
\end{align}
Using the computer algebra program Maple~\cite{Maple} we verified 
by checking all possibilities that there are no solutions to the equation
\begin{align}
\delta (\tilde{\omega}_1-\gamma_1)=\lambda^*\theta
\end{align}
with $\gamma_1 \in F^1C^2(\Z_4\times\Z_4;U(1))$ and $\theta\in Z^3(\Z_2\times\Z_2;U(1))$. Hence the first obstruction $\dd^{0,2}_2\omega_1$ does not vanish. 
\end{example}

\section{Bulk-boundary correspondence}\label{Sec: Realisation on boundary}

In this section we realise anomalous gauged Dijkgraaf-Witten theories
on the boundaries of higher-dimensional Dijkgraaf-Witten theories. We
start by reviewing the general description of boundary field theories
and anomaly inflow through the bulk-boundary correspondence in the functorial framework. Afterwards we make the general theory explicit in the case of Dijkgraaf-Witten theories. 

\subsection{Anomaly inflow in functorial field theories}\label{Sec: Boundary theories}

We work with the geometric bicategory $G\text{-}\Cob_{n,n-1,n-2}$ for definiteness, but the description generalises to arbitrary background fields.
The description of boundary field theories and anomalies in the
framework of extended field theories is worked out
in~\cite{FreedAnomalies,RelativeQFT,
  MonnierHamiltionianAnomalies,Parity} (see also~\cite{Johnson-Freyd:2017ykw,BoundaryConditionsTQFT} for an
$(\infty,n)$-categorical discussion in the case of topological field
theories). This description is closely related to the twisted field
theories discussed in~\cite{Stolz:2011zj}. Here we follow~\cite[Section 3.3]{Parity}. 
Let $E\colon \EGCob \longrightarrow \Tvs$ be an extended field theory. We define its {truncation} $\text{tr}\, E$ to be the restriction of $E$ to the sub-bicategory of $\EGCob$ containing only invertible 2-morphisms.

\begin{definition}\label{Def: E. anomaly}
An $n{-}1$-dimensional $G$-equivariant anomalous field theory with anomaly
described by an invertible $n$-dimensional extended field theory $E\colon
\EGCob\longrightarrow \Tvs$ is a natural symmetric monoidal 2-transformation
\begin{align}
Z \colon 1 \Longrightarrow \tr\, E \ ,
\end{align} 
where $1$ is the trivial theory assigning $\fvs$ to every object, the identity functor to 1-morphisms and the identity natural transformation to 2-morphisms.
\end{definition} 
\begin{remark}
There are different definitions for natural symmetric monoidal 2-transformations
corresponding to different levels of strictness. Here we use
\cite[Definition~B.13]{Parity}, which seems to be best suited for
physical applications.
\end{remark}
Concretely $Z \colon 1 \Longrightarrow \tr\, E$ consists of a linear functor $Z(S, \xi)\colon \fvs \longrightarrow E(S,\xi) $ for all objects $(S,\xi)\in \EGCob$, and a natural transformation $Z(\Sigma, \varphi)\colon E(\Sigma, \varphi)\circ Z(S_1,\xi_1) \Longrightarrow Z(S_2,\xi_2) $ for all 1-morphisms $\big((\Sigma,\varphi) \colon (S_1,\xi_1) \longrightarrow (S_2,\xi_2)\big) \in \EGCob$.
The functor $Z(S,\xi)\colon \fvs \longrightarrow E(S,\xi)$ can be
described by an object $Z(S,\xi)[\C]\in E(S,\xi)$ which by a slight
abuse of notation we denote again by $Z(S,\xi)$. The natural
transformation $Z(\Sigma,\varphi)\colon E(\Sigma,\varphi) \circ Z(S_1,
\xi_1) \Longrightarrow Z(S_2, \xi_2)$ can be described by a morphism
$Z(\Sigma,\varphi)\colon
E(\Sigma,\varphi)[Z(S_1,\xi_1)]\longrightarrow Z(S_2,\xi_2)$ in
$E(S_2,\xi_2)$. Requiring $Z$ to be a natural 2-transformation reduces explicitly
to the following:
Let $(S,\xi)$, $(S_1,\xi_1)$, $(S_2,\xi_2)$ and $(S_3,\xi_3)$ be
objects of $\EGCob$, and $(\Sigma_a,\varphi_a)\colon (S_1,\xi_1)
\longrightarrow (S_2, \xi_2)$ and $(\Sigma_b,\varphi_b) \colon (S_2,
\xi_2) \longrightarrow (S_3,\xi_3)$ 1-morphisms in $\EGCob$. Then the diagrams
\begin{equation}\label{Condition twisted functoriality}
\begin{tikzcd}
E(\Sigma_b,\varphi_b)\circ E(\Sigma_a,\varphi_a)[Z(S_1,\xi_1)] \ar[dd,
"{E(\Sigma_b,\varphi_b)[Z(\Sigma_a,\varphi_a)]}",swap] \ar[r] &
E(\Sigma_b \circ \Sigma_a, \varphi_b \cup \varphi_a)[Z(S_1,\xi_1)]
\ar[dd, "{Z(\Sigma_b \circ \Sigma_a, \varphi_b \cup \varphi_a)}"] \\ & \\
E(\Sigma_b, \varphi_b)[Z(S_2,\xi_2)] \ar[r, swap,"{Z(\Sigma_b, \varphi_b)}"] & Z(S_3,\xi_3)
\end{tikzcd} 
\end{equation}
and
\begin{equation}\label{Condition twisted preservation of identities}
\begin{tikzcd}[column sep=tiny]
Z(S,\xi)\ar[rr] \ar[rd, "\id", swap] &  & E(\id_{(S,\xi)})[Z(S,\xi)] \ar[ld, "{Z(\id_{(S,\xi)})}"] \\
 &\ \ \ Z(S,\xi)
\end{tikzcd}
\end{equation}
commute, where the unlabelled morphisms are part of the structure of the extended field theory $E$. 
The symmetric monoidal structure on $Z$ is described explicitly by specifying natural morphisms 
\[ 
M^{-1}\colon Z(\varnothing) \longrightarrow \iota_E (\C)  
\] 
in $E (\varnothing)$ and 
\[
\Piit_{Z}\big((S_1,\xi_1)\,,\,(S_2,\xi_2\big)\colon \chi_{E}[ Z(S_1,\xi_1)\boxtimes Z(S_2,\xi_2)] \longrightarrow Z(S_1 \sqcup S_2,\xi_1\sqcup\xi_2) \ 
\]
in $E(S_1 \sqcup S_2,\xi_1\sqcup\xi_2)$, where $\iota_E\colon \fvs \longrightarrow
E(\varnothing)$ and $\chi_{E}\colon E(\,\cdot\,)\boxtimes
E(\,\cdot\,)\Longrightarrow E(\,\cdot\, \sqcup \,\cdot\,)$ are part of the
structure corresponding to a symmetric monoidal
2-functor~\cite[Definition B.12]{Parity}. There is a long list of
coherence and compatibility conditions that these morphisms have to satisfy, see~\cite[Proposition~3.14]{Parity} for details. 

\begin{remark}
If the extended field theory $E$ is trivial, it follows from this discussion that a field theory relative to $E$ is the same as an ordinary $n{-}1$-dimensional field theory.
\end{remark}
It is instructive to decategorify Definition \ref{Def: E. anomaly}~\cite[Section 2]{Parity}.
We denote by $\tr\, \GCob$ the maximal subgroupoid of $\GCob$. For a functor $L\colon \GCob \longrightarrow \fvs$ we denote by $\tr\, L \colon \tr\, \GCob \longrightarrow \fvs$ its restriction to $\tr\, \GCob$.
\begin{definition}\label{Def: nE. anomaly}
A {partition function} $Z$ with anomaly described by an invertible
$n$-dimensional field theory
$L\colon \GCob \longrightarrow \fvs$ is a natural symmetric monoidal transformation 
\begin{align}
Z\colon 1 \Longrightarrow \tr\, L \ . 
\end{align}
\end{definition}  

\begin{remark}
Restricting Definition~\ref{Def: E. anomaly} to the endomorphisms of
$\varnothing$ induces a natural transformation $\tr\, L \Longrightarrow
1$. Since in most physically relevant examples all vector spaces are
Hilbert spaces, this discrepancy can be resolved by taking the
adjoint. Here we stick to Definition~\ref{Def: nE. anomaly}, because it has a natural geometric interpretation in terms of line bundles.     
\end{remark}
Unpacking Definition~\ref{Def: nE. anomaly}, we get for every object $(\Sigma,\varphi)\in \GCob$ a linear map $Z(\Sigma,\varphi)\colon \C
 \longrightarrow L(\Sigma,\varphi) $, such that for all invertible
 morphisms $\phi \colon (\Sigma_a,\varphi_a) \longrightarrow
 (\Sigma_b,\varphi_b) $ in $ \GCob$ the diagram
\begin{equation}
\begin{tikzcd}
\C \ar[d, swap, "{\id}"] \ar{rrr}{Z(\Sigma_a,\varphi_a)} & & & L(\Sigma_a,\varphi_a) \ar{d}{L(\phi)} \\
 \C \ar[rrr, swap, "{Z(\Sigma_b,\varphi_b)}"] & & & L(\Sigma_b,\varphi_b)
\end{tikzcd}
\label{Equation: Decategorified case}
\end{equation} 
commutes. We think of $\phi$ as a symmetry of the classical background fields. Since
$L$ is an invertible field theory, $L(\Sigma, \varphi)$ is
a one-dimensional vector space isomorphic to $\C$, though not
necessarily in a canonical way.
Picking such an isomorphism for all $(\Sigma, \varphi)$ induces a $\C^\times$-valued 1-cocycle of the groupoid of symmetries $\tr\, \GCob$,
since we can then identify the linear map $L(\phi)$ with a non-zero complex number.    

Definition~\ref{Def: nE. anomaly} naturally encodes properties of field theories with anomalies.
For example, in physically relevant theories one should also require that the vector spaces $L(\Sigma, \,\cdot\,)$ form
a line bundle over the space of field configurations as in the case of smooth field theories \cite{Stolz:2011zj}. 
In the case of $\GCob$ the space of field configurations is a discrete groupoid and every invertible functorial field theory gives a flat line bundle over the groupoid of field configurations
in the sense of \cite{TwistedDWandGerbs}. 
The partition function is now a parallel section of this line bundle. 
This reproduces the more geometric description of anomalies as the non-triviality of a line bundle over the space
of field configurations~\cite{NashBook}. 

In the extended framework of Definition \ref{Def: E. anomaly}, evaluating $E$ on manifolds of dimension $n-2$ can be regarded as a 2-line bundle over the 
groupoid of field configurations as defined in
e.g.~\cite{SWParallelSections}. In particular, an anomalous field theory defines a section of this 2-line bundle. This reproduces the description of Hamiltonian anomalies in terms of line bundle gerbes~\cite{Carey1997}.
The extended functorial framework thus naturally combines the Hamiltonian and Lagrangian descriptions of anomalies. 
It further formulates the field theory on all possible spacetimes
simultaneously, and also requires compatibility with gluing and
cutting of manifolds. Hence it is a mathematically concrete formulation
of the requirements from~\cite{WittenString15} of local quantum field theory. 

Of particular interest in the Hamiltonian description of anomalies is the projective representation of the symmetry group
on the Hilbert space of the field theory. 
To describe these in the functorial framework 
we briefly recall the 2-categorical description of projective representations. 
An element of $H^2(BG; \C^\times)$ 
corresponds to a homotopy class of maps from $BG$ 
to the Eilenberg-MacLane space $K(\C^\times ,2)$. The classifying space
$BG$ is the nerve of a category, namely the action groupoid $* \DS G$. The
Eilenberg-MacLane space $K(\C^\times ,2)$ 
is the nerve of a 2-category $B^2\C^\times $ with one object, 
one 1-morphism and the group $\C^\times$ as the set of 2-morphisms. 
Hence we can describe 2-cocycles alternatively as 2-functors 
up to natural isomorphisms.

\begin{definition}
Let $\Gscr$ be a groupoid and $A$ an abelian group. A {2-cocycle on $\Gscr$ with values in $A$} is a 2-functor 
\begin{align}
\alpha \colon \underline{\Gscr} \longrightarrow B^2 A \ .
\end{align} 
\end{definition}  

\begin{remark}
Spelling out the coherence isomorphisms corresponding to a 2-functor reproduces the usual definition
of groupoid cohomology~\cite[Section 3.4]{Parity}. 
\end{remark}

There is a natural 2-functor $B^2\C^\times \hookrightarrow \Tvs$ sending the unique object to the category of 
vector spaces, the unique 1-morphism to the identity functor and a non-zero complex number $\lambda$ to the natural transformation of the
identity functor induced by scalar multiplication with $\lambda$. 
Using this embedding a 2-cocycle $\alpha$ on $\Gscr$ with values in
$\C^\times$ induces a 2-functor $\alpha \colon \underline{\Gscr} \longrightarrow \Tvs$. Using this 
2-functor we can give a categorical definition of projective representations (see \cite{BoundaryConditionsTQFT} for the higher categorical framework).

\begin{definition}\label{Def: Projective representation}
A {projective representation of a groupoid $\Gscr$ twisted by a 2-cocycle $\alpha$ with values in $\C^\times$} is a natural
2-transformation 
\begin{equation}
\begin{tikzcd} & \  & \\
\underline{\Gscr} \ar[rr, "\alpha" ,bend left=50] \ar[rr,"1" ,bend right=50, swap]& & \Tvs \\
&\  \ar[uu,Rightarrow ,"\,\rho", swap, shorten <= 15, shorten >= 15] &  
\end{tikzcd}
\end{equation}
\end{definition}

\begin{remark}
Spelling out the definition reproduces the usual definition
of a projective representation~\cite[Section 3.4]{Parity}. 
\end{remark}      
The assumption that the anomaly field theory $E$ is invertible implies
that for every $n{-}2$-dimensional manifold $S$ there exists a
2-cocycle $\alpha \colon \underline{\BunG(S)}\longrightarrow B^2 \C^\times$ such
that the diagram
\begin{equation}
\begin{tikzcd}
\underline{\BunG(S)} \ar[rd, swap,"\alpha"]\ar[rr, "E"]& & \Tvs \\
 & B^2\C^\times \ar[ru]&
\end{tikzcd}
\end{equation}
commutes up to a non-canonical isomorphism. A quantum field theory with anomaly $Z$ induces a projective representation
of $\BunG(S)$ with respect to the 2-cocycle $\alpha$. Morphisms in
$\BunG(S)$ are gauge transformations, which are symmetries of the
field theory. Hence we shall
sometimes refer to $\BunG(S)$ as the symmetry groupoid. 
Projective representations corresponding to 
different choices of 
$\alpha \colon \underline{\BunG(S)}\longrightarrow B^2 \C^\times$ are equivalent. 
The projective representation can be concretely 
constructed by picking trivialisations of the categories that $E$ assigns to field configurations. 
The 2-cocycle twisting the projective representation is completely determined by $E$ and independent of the explicit form of~$Z$.

We shall now discuss in more detail how to couple the bulk field
theory $(E,L)$ and boundary field theory $Z$ to construct an anomaly-free theory. 
We start with the unextended framework corresponding to Definition~\ref{Def: nE. anomaly}.
This involves the full quantum field theory $L\colon \GCob \longrightarrow \fvs$ and not just its truncation. 
Let $M$ be an $n$-dimensional manifold 
with boundary $\partial M = -\Sigma $, and $\psi \colon M \longrightarrow BG$ a $G$-bundle. 
The field theory with anomaly defines an element $Z(\Sigma,\psi|_\Sigma)\in L(\Sigma, \psi|_\Sigma)$. 
We can interpret
$(M,\psi)$ as a morphism $(M,\psi) \colon (\Sigma, \psi|_\Sigma) \longrightarrow  \varnothing$ in $\GCob$. The partition function of the composite system can now be defined as
\begin{align}\label{Def: Combined partition function}
Z_{\rm bb}(M,\psi, \Sigma)= L(M,\psi)[Z(\Sigma,\psi|_\Sigma)] \ \in \ L(\varnothing)\cong \C \ .
\end{align}     
This definition does not depend on any additional choices. Let $\psi'\colon M \longrightarrow BG$ be a principal $G$-bundle and $\nu\colon \psi \longrightarrow \psi'$ a gauge transformation.
We then calculate
\begin{align}
Z_{\rm bb}(M,\psi',\Sigma )&= L(M,\psi'\,)[Z(\Sigma,\psi'|_\Sigma)] \\[4pt]
&= L(M, \psi'\,)\circ L([0,1]\times \Sigma, \nu|_{\Sigma})[Z(\Sigma,\psi|_{\Sigma})]\\[4pt]
&=L(M, \psi)[Z(\Sigma,\psi|_{\Sigma})]\\[4pt]
&=Z_{\rm bb}(M,\psi, \Sigma)\ , 
\end{align}
where we used~\eqref{Equation: Decategorified case} in the second
equality and in the third equality the fact that $L$ is invariant
under gauge transformations relative to the boundary. 
This shows that the composite partition function is gauge-invariant. 

Definition \ref{Def: E. anomaly} also allows us to formulate the
composite system at the level of state spaces. Let $E\colon \EGCob
\longrightarrow \Tvs$ be an invertible extended field theory. Consider
an $n{-}1$-dimensional manifold $\Sigma$ with boundary $\partial
\Sigma = -S$ and a principal $G$-bundle $\varphi\colon \Sigma
\longrightarrow BG$. The anomalous field theory $Z$ defines an element $Z(S,\varphi|_S)\in E(S,\varphi|_S)$. The composite state space is given by
\begin{align}
Z_{\rm bb}(\Sigma,\varphi , S)= E(\Sigma,\varphi )[Z(S,\varphi|_S)] \
  \in \ E(\varnothing) \cong \fvs \ . \label{Eq: Def combined state space}
\end{align} 
This vector space does not depend on any additional choices. Let $\nu
\colon \varphi \longrightarrow \varphi'$ be a gauge
transformation. Then there is an induced linear map
\begin{align}
Z_{\rm bb}(\Sigma,\varphi , S)= E(\Sigma,\varphi )[Z(S,\varphi|_S)]&\xrightarrow{E(\nu)} E\big((\Sigma, \varphi'\,)\circ ([0,1]\times S,\nu|_S)\big)[Z(S,\varphi|_S)]\\[4pt]
&\xrightarrow{\hspace{0.65cm}} E(\Sigma, \varphi'\,)\circ E([0,1]\times S,\nu|_S)[Z(S,\varphi|_S)]\\[4pt]
&\xrightarrow{Z([0,1]\times S,\nu|_S)} E(\Sigma, \varphi'\,)[Z(S,
  \varphi'|_S)]=Z_{\rm bb}(\Sigma,\varphi' , S) \ . \label{Eq: Def action combined state space}
\end{align}
It follows from the coherence conditions that this defines an honest representation of the symmetry groupoid. Hence we have described a way of coupling bulk and boundary degrees of freedom to an anomaly-free state space. In condensed matter physics applications the invertible field theory $E$ arises as the low-energy effective theory of the bulk system. 

\subsection{Anomalous Dijkgraaf-Witten theories as boundary states}\label{Sec: State space}

We will now illuminate this construction in the case of anomalies of Dijkgraaf-Witten theories.  
 Let $Z_\omega\colon \Cob_{n-1}\longrightarrow \fvs$ be a Dijkgraaf-Witten 
theory with gauge group $D$, topological action 
$\omega \in Z^{n-1}(BD;U(1))$ and kinematical $G$-symmetry described by an extension
\begin{align}
1\longrightarrow D \overset{\iota}{\longrightarrow} \widehat{G}\overset{\lambda}{\longrightarrow} G \longrightarrow 1 \ .
\end{align}
Recall from Proposition~\ref{Prop: Obstructions} that when the first $n-2$ 
obstructions in the spectral sequence vanish,
a cochain $\omega' \in C^{n-1}(B\widehat{G}; U(1))$ and a cocycle 
$\theta \in Z^{n}(BG; U(1))$ exist such that $\omega= \iota^* \omega'$ and 
$\delta \omega' = \lambda^* \theta$. 
We realise the anomalous gauged theory as a field theory $Z_{\omega'}$ living on 
the boundary of a classical $n$-dimensional Dijkgraaf-Witten theory 
$E_\theta \colon \EGCob \longrightarrow \Tvs$ with 
topological action $\theta$.   

\subsubsection*{Partition function}

We shall first construct the state as an unextended relative field theory, i.e. a natural transformation $Z_{\omega'} \colon 1 \Longrightarrow \text{tr}\, L_\theta$ (see Definition~\ref{Def: nE. anomaly}).
The following construction is similar to the one 
in~\cite[Section~3.3]{Witten:2016cio}. However, we use the language of functorial field theories and homotopy fibres to describe the construction. The approach to boundary field theories in the present
paper is to some extent the reverse of the approach in 
\cite{Witten:2016cio}, where anomalous boundary field theories are
constructed starting from a bulk
Dijkgraaf-Witten theory. 
Instead we start from a field theory with anomaly and show how 
to realize this theory as a boundary field theory.
The construction of the state space below
is not given in~\cite{Witten:2016cio}. 

Following the general theory outlined in Section~\ref{Sec: Boundary theories} we have to specify an element $Z_{\omega'}(\Sigma, \varphi\colon M \longrightarrow BG)$ of $L_\theta(\Sigma,\varphi)$ for all objects $(\Sigma,\varphi)\in \GCob$. Let $\sigma_\Sigma$ be a representative for the fundamental class of $\Sigma$.
We set
\begin{align}\label{Eq: Partition function relative field theory}
Z_{\omega'}(\Sigma,\varphi)= \Big(\int_{(\widehat{\varphi},h)\in \lambda_*^{-1}[\varphi]} \, \langle {\widehat{\varphi}}^{\,*} \omega', \sigma_\Sigma \rangle \, \langle  h^* \theta, [0,1] \times \sigma_\Sigma \rangle\Big) \ [\sigma_\Sigma] \ \in \ L_\theta(\Sigma,\varphi) \ .
\end{align}
\begin{proposition}
$Z_{\omega'}$ is a partition function with anomaly
$L_\theta:\GCob\longrightarrow \fvs$.
\end{proposition}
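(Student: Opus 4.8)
The plan is to verify directly that the assignment $(\Sigma,\varphi)\mapsto Z_{\omega'}(\Sigma,\varphi)$ satisfies the unpacked form of Definition~\ref{Def: nE. anomaly}. Concretely, I must check three things: (i) that the formula~\eqref{Eq: Partition function relative field theory} is a well-defined element of $L_\theta(\Sigma,\varphi)=\C[\Fund(\Sigma)]/{\sim_\varphi}$, independent both of the representative $\sigma_\Sigma$ of the fundamental class and of the representatives used in the groupoid integral over $\lambda_*^{-1}[\varphi]$; (ii) that for every invertible morphism $\phi$ in $\GCob$ the naturality square~\eqref{Equation: Decategorified case} commutes; and (iii) that the natural morphisms implementing the symmetric monoidal structure from Section~\ref{Sec: Boundary theories} exist and are coherent. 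Throughout, the two defining relations from Proposition~\ref{Prop: Obstructions}, namely $\iota^*\omega'=\omega$ and $\delta\omega'=\lambda^*\theta$, together with the fact that $\theta$ is a genuine cocycle, are what make the two factors of the integrand conspire correctly.

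The heart of the argument is the well-definedness of the groupoid integral, i.e. gauge-invariance of the integrand
\[
\ell(\widehat\varphi,h)={\langle{\widehat{\varphi}}^{\,*}\omega',\sigma_\Sigma\rangle}\,{\langle h^*\theta,[0,1]\times\sigma_\Sigma\rangle}
\]
under morphisms of $\lambda_*^{-1}[\varphi]$. Given $\widehat h\colon(\widehat\varphi,h)\longrightarrow(\widehat\varphi',h')$, I would view $\widehat h$ as a homotopy $[0,1]\times\Sigma\longrightarrow B\widehat G$ and compute, using $\partial([0,1]\times\sigma_\Sigma)=\{1\}\times\sigma_\Sigma-\{0\}\times\sigma_\Sigma$ (as $\sigma_\Sigma$ is a cycle) and $\delta\omega'=\lambda^*\theta$, that
\[
{\langle{\widehat{\varphi}'}^{\,*}\omega',\sigma_\Sigma\rangle}={\langle{\widehat{\varphi}}^{\,*}\omega',\sigma_\Sigma\rangle}\ {\langle(\lambda_*\widehat h)^*\theta,[0,1]\times\sigma_\Sigma\rangle}\ .
\]
On the other hand the compatibility $h=h'\circ\lambda_*\widehat h$ exhibits $h$ as a concatenation of homotopies, so the gluing property of the closed cocycle $\theta$ gives ${\langle h^*\theta,[0,1]\times\sigma_\Sigma\rangle}={\langle(\lambda_*\widehat h)^*\theta,[0,1]\times\sigma_\Sigma\rangle}\,{\langle{h'}^*\theta,[0,1]\times\sigma_\Sigma\rangle}$. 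Multiplying the two lines, the $(\lambda_*\widehat h)^*\theta$ contributions cancel and $\ell(\widehat\varphi,h)=\ell(\widehat\varphi',h')$, so $\int_{\lambda_*^{-1}[\varphi]}\ell$ is well-defined by the groupoid integration of Section~\ref{Sec: DW}. This is exactly the cancellation between the non-closedness of $\omega'$ and the bulk action $\theta$ anticipated in the overview.

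Independence of $\sigma_\Sigma$ I would handle by a parallel Stokes computation: if $\partial\Lambda=\sigma_\Sigma-\sigma_\Sigma'$ with $\Lambda\in C_n(\Sigma)$, then using $\delta\omega'=\lambda^*\theta$ for the first factor and $\partial([0,1]\times\Lambda)=(\{1\}-\{0\})\times\Lambda-[0,1]\times\partial\Lambda$ with $\delta\theta=0$ for the second, the coefficient gets multiplied by ${\langle\varphi^*\theta,\Lambda\rangle}^{-1}$, precisely compensating the relation $[\sigma_\Sigma']={\langle\varphi^*\theta,\Lambda\rangle}\,[\sigma_\Sigma]$ in $L_\theta(\Sigma,\varphi)$. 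For naturality it suffices to treat an invertible morphism as a cylinder $([0,1]\times\Sigma,\nu)$ for a gauge transformation $\nu\colon\varphi_a\longrightarrow\varphi_b$ (the general case only adds an underlying diffeomorphism, handled by transport of fundamental cycles). Postcomposition with $\nu$ induces an equivalence of groupoids $\lambda_*^{-1}[\varphi_a]\xrightarrow{\ \simeq\ }\lambda_*^{-1}[\varphi_b]$, $(\widehat\varphi,h)\mapsto(\widehat\varphi,\nu\circ h)$; since groupoid integration is invariant under equivalences and the $\theta$-factor splits off a constant ${\langle\nu^*\theta,[0,1]\times\sigma_\Sigma\rangle}$ under the concatenation $\nu\circ h$, the integral over $\lambda_*^{-1}[\varphi_b]$ equals ${\langle\nu^*\theta,[0,1]\times\sigma_\Sigma\rangle}$ times that over $\lambda_*^{-1}[\varphi_a]$. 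This is precisely the scalar produced by $L_\theta([0,1]\times\Sigma,\nu)$, so the square~\eqref{Equation: Decategorified case} commutes.

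Finally, the symmetric monoidal structure follows from the factorization $\lambda_*^{-1}[\varphi_1\sqcup\varphi_2]\simeq\lambda_*^{-1}[\varphi_1]\times\lambda_*^{-1}[\varphi_2]$, the multiplicativity of groupoid cardinality over products, and the additivity $\sigma_{\Sigma_1\sqcup\Sigma_2}=\sigma_{\Sigma_1}+\sigma_{\Sigma_2}$, which together yield the natural isomorphism $Z_{\omega'}(\Sigma_1\sqcup\Sigma_2)\cong Z_{\omega'}(\Sigma_1)\otimes Z_{\omega'}(\Sigma_2)$ compatible with that of $L_\theta$; the coherence conditions are then routine. I expect the main obstacle to be step (i): keeping the boundary-orientation and sign conventions for the cylinders $[0,1]\times\sigma_\Sigma$ consistent between the $\omega'$-factor and the $\theta$-factor, so that the cancellation is exact rather than merely up to sign. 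Once the gluing and Stokes bookkeeping for the closed cocycle $\theta$ is pinned down, steps (ii) and (iii) reuse essentially the same computation.
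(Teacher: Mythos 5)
Your proposal is correct and follows essentially the same route as the paper: the two key computations --- gauge-invariance of the integrand via the chain-homotopy/Stokes argument combining $\delta\omega'=\lambda^*\theta$ with the compatibility $h=h'\circ\lambda_*\widehat{h}$, and the change of fundamental cycle producing exactly the factor $\langle\varphi^*\theta,\chi\rangle$ required by the relation $\sim_\varphi$ defining $L_\theta(\Sigma,\varphi)$ --- are precisely those presented in the paper's proof. The only difference is organisational: the paper formally deduces the proposition from Theorem~\ref{Theorem: Relative field theory} and leaves naturality to the reader, whereas you verify naturality and the monoidal structure directly via the equivalences $\lambda_*^{-1}[\varphi_a]\simeq\lambda_*^{-1}[\varphi_b]$ induced by gauge transformations and the factorisation of homotopy fibres over disjoint unions, which is a sound completion.
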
     
\begin{proof}
We have to show that $Z_{\omega'}$ is a well-defined natural transformation. This is an immediate consequence of Theorem~\ref{Theorem: Relative field theory} below. To become acquainted with the constructions involved, we present here part of the proof.
We start by showing that $\langle  {\widehat{\varphi}}\,^* \omega', \sigma_\Sigma \rangle \, \langle  h^* \theta, [0,1] \times \sigma_\Sigma \rangle$ is well-defined on isomorphism classes of $\lambda_*^{-1}[\varphi]$.
Let $\widehat{h}\colon (\widehat{\varphi}_1,h_1) \longrightarrow (\widehat{\varphi}_2,h_2)$ be a morphism in $\lambda_*^{-1}[\varphi]$,
i.e.~a homotopy $\widehat{h}\colon \widehat{\varphi}_1 \longrightarrow \widehat{\varphi}_2$ such that the diagram
\begin{equation}
\begin{tikzcd}
\lambda_* \widehat{\varphi}_1 \ar[rd,"h_1", swap] \ar[rr,"\lambda_* \widehat{h}"] & & \lambda_*\widehat{\varphi}_2 \ar[ld,"h_2"] \\
 & \varphi & 
\end{tikzcd}
\end{equation}
commutes.  The homotopy induces a chain homotopy $H\colon \widehat{\varphi}_{1*} \longrightarrow \widehat{\varphi}_{2*}$ between the induced maps on singular chains given by $H(c)= \widehat{h}_* ( [0,1] \times c)$ for all chains $c \in C_\bullet (\Sigma)$. Hence, writing $U(1)=\R/\Z$ additively for the calculation, we find
\begin{align}
\big\langle {\widehat{\varphi}_2}^{\,*} \omega', \sigma_\Sigma \big\rangle - \big\langle  {\widehat{\varphi}_1}^{\,*} \omega', \sigma_\Sigma \big\rangle &= \big\langle \omega', \partial H(\sigma_\Sigma) - H (\partial \sigma_\Sigma) \big\rangle\\[4pt]
& = \big\langle \omega' , \partial H (\sigma_\Sigma) \rangle\\[4pt]
& = \big\langle \, \widehat{h}^* \lambda^* \theta, [0,1]  \times \sigma_\Sigma \big\rangle \\[4pt]
&= \big\langle  h_1^* \theta - h_2^* \theta, [0,1] \times \sigma_\Sigma \big\rangle \ .
\end{align} 
This shows that the integration in \eqref{Eq: Partition function relative field theory} is well-defined. 

Let $\sigma_\Sigma'$ be a different representative for the fundamental class of $\Sigma$ and $\chi$ an $n$-chain satisfying $\partial \chi = \sigma_\Sigma' - \sigma_\Sigma $.
To show that \eqref{Eq: Partition function relative field theory} is an element of $L_\theta(\Sigma, \varphi)$ we calculate 
\begin{align}
\langle {\widehat{\varphi}}^{\,*} \omega',\sigma_\Sigma' - \sigma_\Sigma \rangle \, \langle {h}^* \theta, [0,1] \times (\sigma_\Sigma' - \sigma_\Sigma) \rangle &= \langle  {\widehat{\varphi}}^{\,*} \omega',\partial \chi \rangle \, \langle {h}^* \theta, [0,1] \times \partial \chi \rangle \\[4pt]
& = \langle {\widehat{\varphi}}^{\,*} \lambda^* \theta, \chi  \rangle \, \langle {h}^* \theta, -\lbrace 0 \rbrace \times  \chi + \lbrace 1 \rbrace \times \chi   \rangle \\[4pt]
&= \langle  \varphi^* \theta, \chi    \rangle \ .
\end{align}
This is exactly the required transformation behaviour. We leave the verification of naturality to the reader. 
\end{proof}

\begin{remark}
Before extending the field theory we give the precise form of the composite partition function \eqref{Def: Combined partition function}. We fix an $n$-dimensional manifold $M$ with boundary $\partial M = -\Sigma$ and a principal $G$-bundle $\psi \colon M \longrightarrow BG$. Evaluating $L_\theta$ on $(M,\psi)$ gives a linear map $L_\theta(M,\psi)\colon L_\theta(\Sigma,\psi|_\Sigma)\longrightarrow \C$. The composite partition function is then
\begin{align}
Z_{\omega'\,{\rm bb}}(M,\psi, \Sigma)&= L_\theta(M,
                                       \psi)[Z_{\omega'}(\Sigma,\psi|_\Sigma)]
  \\[4pt] &= \Big(\int_{(\widehat{\varphi},h)\in
            \lambda_*^{-1}[\psi|_\Sigma]} \, \langle
            {\widehat{\varphi}}^{\,*} \omega', \partial \sigma_M
            \rangle^{-1} \, \langle  h^* \theta, [0,1] \times \partial
            \sigma_M \rangle^{-1}\Big) \ \langle \psi^* \theta ,
            \sigma_M \rangle \ ,
\end{align}  
which is gauge-invariant according to the 
general theory outlined in Section~\ref{Sec: Boundary theories}. 
\end{remark}

\subsubsection*{State space}

Let $(S,\xi)$ be an object of $\EGCob$ and denote by
$\Fund_\theta(S,\xi)$ the full subcategory of simple objects of
$E_\theta(S,\xi)$. Recall from Section~\ref{Sec: Calssical gauge
  theory} that objects of $\Fund_\theta(S,\xi)$ are representatives $\sigma_S$ of the fundamental class of $S$ and morphism spaces are given by 
\[
\Hom_{E_\theta(S,\xi)}(\sigma_S,\sigma_S')= \C\big[\lbrace \Lambda \in C_{n-1}(S) \mid \partial \Lambda = \sigma_S'-\sigma_S \rbrace\big] \, \big/ \sim_\theta \ .
\]
We construct a functor 
\begin{align}
\tilde{Z}_{\omega'}^{(S,\xi)}\colon \Fund_\theta(S,\xi)^{\text{op}} \longrightarrow \fvs \ .
\end{align} 
To this end we first construct a functor $L_{\xi, \omega'}(\sigma_S)
\colon \lambda_*^{-1}[\xi] \longrightarrow \fvs$ as follows: To every
object $(\widehat{\xi},h)$ we assign $\C$ and to a homotopy $\widehat{h}\colon (\widehat\xi,h)\longrightarrow (\widehat\xi\,',h')$ we assign the complex number $\langle\,\widehat{h}^* \omega', [0,1] \times \sigma_S \rangle$. 
Let $\sigma_S$ and $\sigma_S'$ be representatives for the fundamental class of $S$ and $\Lambda \in C_{n-1}(S) $ an $n{-}1$-chain satisfying $\partial \Lambda = \sigma_S'-\sigma_S$, i.e. a morphism in $\Fund_\theta(S,\xi)$. 
We construct a natural transformation
\begin{align}
\label{Def: L natural}
L_{\xi, \omega'}(\Lambda) \colon L_{\xi, \omega'}(\sigma_S') &\Longrightarrow L_{\xi, \omega'}(\sigma_S) \\
{L_{\xi, \omega'}}(\Lambda)_{ (\widehat\xi,h)} \colon \C
                                                             &\longrightarrow
                                                               \C \ ,
                                                               \quad 1
                                                               \longmapsto
                                                               \big\langle
                                                               {\widehat{\xi}}\,^*
                                                               \omega',\Lambda
                                                               \big\rangle^{-1}
                                                               \,
                                                               \big\langle
                                                               h^*
                                                               \theta
                                                               , [0,1]
                                                               \times
                                                               \Lambda
                                                               \big\rangle^{-1}
                                                               \ .
\end{align}
We show in Appendix~A that $L_{\xi, \omega'} \colon \Fund_\theta(S,\xi)^{\text{op}} \longrightarrow \big[\lambda_*^{-1}[\xi], \fvs\big]$
is a well-defined functor to the category $\big[\lambda_*^{-1}[\xi], \fvs\big]$ of functors from $\lambda_*^{-1}[\xi]$ to $\fvs$ (Lemma~\ref{Lemma well-defined on morphisms}).
Composition with the limit functor $\int_{\lambda_*^{-1}[\xi]} \colon
\big[\lambda_*^{-1}[\xi], \fvs\big] \longrightarrow \fvs$ then constructs the desired functor
\begin{align} 
\tilde{Z}^{(S,\xi)}_{\omega'}= \int_{\lambda_*^{-1}[\xi]} \, L_{\xi, \omega'} \ .
\end{align}

The limit can again be realised by parallel sections.
This allows us to define 
\begin{align}\label{Def on Objects}
Z_{\omega'}(S,\xi) = \int^{\sigma\in \Fund_\theta(S,\xi)} \,
  \tilde{Z}^{(S,\xi)}_{\omega'}(\sigma)*\sigma = \int^{\sigma\in
  \Fund_\theta(S,\xi)} \ \int_{\lambda_*^{-1}[\xi]} \, L_{\xi,
  \omega'}(\sigma)*\sigma \ \in \ E_\theta(S,\xi) \ . 
\end{align}
Let $(\Sigma, \varphi) \colon (S_1,\xi_1) \longrightarrow (S_2,\xi_2)$ be a 1-morphism in $\EGCob$. We construct a morphism 
\begin{align}
Z_{\omega'}(\Sigma,\varphi) \colon E_\theta(\Sigma,\varphi)[Z_{\omega'}(S_1,\xi_1)] & = 
 \int^{\sigma_2\in \Fund_\theta(S_2,\xi_2)} \Big(\int^{\sigma_1\in
                                                                                      \Fund_\theta(S_1,\xi_1)}
                                                                                      \,
                                                                                      \Sigma^\varphi(\sigma_2,\sigma_1)\otimes_\C
                                                                                      \tilde{Z}^{(S_1,\xi_1)}_{\omega'}(\sigma_1)\Big)*\sigma_2\\
                                                                                    &
                                                                                      \longrightarrow \int^{\sigma_2\in \Fund_\theta(S_2,\xi_2)} \, \tilde{Z}^{(S_2,\xi_2)}_{\omega'}(\sigma_2)*\sigma_2 
= Z_{\omega'}(S_2,\xi_2)
\end{align}
in $E_\theta(S_2,\xi_2)$ from the universal property of the coend by realising $\tilde{Z}^{(S_2,\xi_2)}_{\omega'}(\sigma_2)$ as a cowedge. We define the required linear maps for the concrete description of the limit as parallel sections by
\begin{align}
Z_{\omega'}(\Sigma,\varphi)_{\sigma_1} \colon \Sigma^\varphi(\sigma_2,\sigma_1)\otimes_\C \tilde Z_{\omega'}^{(S_1,\xi_1)}(\sigma_1) & \longrightarrow \tilde Z_{\omega'}^{(S_2,\xi_2)}(\sigma_2) \\
\Lambda \otimes_\C f( \,\cdot\, ) & \longmapsto
                                    Z_{\omega'}(\Sigma,\varphi)_{\sigma_1}(f,\Lambda)(\,\cdot\,)
                                    \ ,  
\end{align}
with 
\begin{align}
Z_{\omega'}(\Sigma,\varphi)_{\sigma_1}(f,\Lambda
  )\big(\widehat{\xi}_2,h_2\big) &=
                           \int_{(\widehat{\varphi},g,\widehat{h}\,)\in
                           \lambda_*^{-1}[\varphi]|_{(\widehat{\xi}_2,h_2)}}
                           \, \big\langle \widehat{\varphi}^{\,*} \omega'
                           , \Lambda \big\rangle \, \big\langle g^* \theta ,
                           [0,1] \times \Lambda \big\rangle \, \big\langle\,\widehat{h}^* \omega' , [0,1] \times
  \sigma_2 \big\rangle \\ &
  \hspace{7cm} \cdot f(\widehat{\varphi}\,|_{S_1},g|_{S_1}) \ \in \ \C
  \ .
\label{Def: Action on invariant functions} 
\end{align}
The domain of integration here is the groupoid with objects consisting of triples of a map
\begin{align}
\widehat{\varphi} \colon \Sigma \longrightarrow B\widehat{G} \ ,
\end{align}
a gauge transformation
\begin{align}
g \colon \lambda_* \widehat{\varphi} \longrightarrow \varphi \ ,
\end{align}
and a gauge transformation
\begin{align}
\widehat{h}\colon \widehat{\varphi}\,|_{S_2} \longrightarrow \widehat{\xi}_2
\end{align}
such that the diagram
\begin{equation}
\begin{tikzcd}
\lambda_*\widehat{\varphi}\,|_{S_2} \ar[rr, "\lambda_*\widehat{h}"] \ar[rd,"g|_{S_2}",swap] & & \lambda_*\widehat{\xi}_2 \ar[ld, "h_2"] \\
 & \xi_2 &
\end{tikzcd}
\end{equation}
commutes. We show in Appendix~A that this induces the 
desired morphism (Lemma~\ref{Lem: Def on Mor}).

We can now formulate and prove the main result of this section.

\begin{theorem}\label{Theorem: Relative field theory}
Let $Z_\omega \colon \Cob_{n-1}\longrightarrow \fvs $ be an $n{-}1$-dimensional discrete gauge theory with gauge group $D$, topological action $\omega \in Z^{n-1}(BD; U(1))$ and kinematical $G$-symmetry described by an extension
\begin{align}
1\longrightarrow D \overset{\iota}{\longrightarrow} \widehat{G}\overset{\lambda}{\longrightarrow} G \longrightarrow 1 \ .
\end{align}
Let $\omega'$ be an $n{-}1$-chain on $B\widehat{G}$ and $\theta\in Z^n(BG;U(1))$ an $n$-cocycle on $BG$ satisfying $\iota^*\omega'=\omega$ and $\delta \omega'= \lambda^* \theta$.
Then $Z_{\omega'}$ defined in~\eqref{Def on Objects} and \eqref{Def: Action on invariant functions} is an anomalous field theory with anomaly $E_\theta\colon \EGCob \longrightarrow \Tvs$. 
\end{theorem}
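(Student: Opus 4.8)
The plan is to verify directly that the data collected in \eqref{Def on Objects} and \eqref{Def: Action on invariant functions} constitute a natural symmetric monoidal 2-transformation $Z_{\omega'}\colon 1 \Longrightarrow \tr\, E_\theta$ in the sense of Definition~\ref{Def: E. anomaly}, following the explicit unpacking given in Section~\ref{Sec: Boundary theories}. That $Z_{\omega'}(S,\xi)$ is a genuine object of $E_\theta(S,\xi)$ and that $Z_{\omega'}(\Sigma,\varphi)$ is a well-defined morphism is precisely the content of the Appendix~A lemmas (Lemma~\ref{Lemma well-defined on morphisms} and Lemma~\ref{Lem: Def on Mor}), so the substance of the theorem is the verification of the coherence requirements: compatibility of the naturality cells with the composition of 1-morphisms \eqref{Condition twisted functoriality} and with identities \eqref{Condition twisted preservation of identities}, their transformation behaviour under the invertible 2-morphisms of $\tr\,\EGCob$, and the symmetric monoidal structure.

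The common technical input throughout is the relation $\delta\omega'=\lambda^*\theta$. First I would use it to control the behaviour of $Z_{\omega'}(\Sigma,\varphi)$ under a 2-morphism $(M,\psi)$: comparing $Z_{\omega'}$ on the two boundary 1-morphisms with the natural transformation $E_\theta(M,\psi)$ assigned in Section~\ref{Sec: Calssical gauge theory}, the weights differ by $\langle\psi^*\theta,\nu\rangle$ for a bulk fundamental cycle $\nu$, and the failure of $\omega'$ to be closed is exactly compensated by this $\theta$-contribution once the boundary compatibility of $\partial\nu$ is inserted. This is the categorified analogue of the two computations carried out for the partition function above, where the same relation forced well-definedness on isomorphism classes of $\lambda_*^{-1}[\varphi]$ and produced the transformation weight $\langle\varphi^*\theta,\chi\rangle$ under a change of fundamental cycle.

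Next I would treat the composition condition \eqref{Condition twisted functoriality}. For composable 1-morphisms $(\Sigma_a,\varphi_a)\colon(S_1,\xi_1)\to(S_2,\xi_2)$ and $(\Sigma_b,\varphi_b)\colon(S_2,\xi_2)\to(S_3,\xi_3)$, the two routes around the square differ only in the order of integration over the homotopy fibres and in the way fundamental cycles are glued. A Fubini-type argument for integration over essentially finite groupoids identifies the iterated integral with a single integral over the composite homotopy fibre $\lambda_*^{-1}[\varphi_b\cup\varphi_a]$, while the additive coherence morphism of $E_\theta$, induced by $\sigma_{\Sigma_b}\otimes\sigma_{\Sigma_a}\mapsto\sigma_{\Sigma_b}+\sigma_{\Sigma_a}$, matches the additivity of the weights $\langle\widehat\varphi^{\,*}\omega',\cdot\rangle$ and $\langle h^*\theta,\cdot\rangle$ under concatenation of chains. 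The identity condition \eqref{Condition twisted preservation of identities} is lighter: on the cylinder $S\times[0,1]$ the projection $p_S$ appearing in the coherence isomorphism \eqref{eqnnatisocohuni} collapses the relevant chain, so both weights evaluate to $1$ and the triangle commutes.

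Finally, for the symmetric monoidal structure I would use that representatives of fundamental classes, groupoids of $G$-bundles, homotopy fibres and groupoid integrals all factorise over disjoint unions. The empty integral gives $Z_{\omega'}(\varnothing)\cong\C$, furnishing $M^{-1}$, and the product decomposition $\BunG(S_1\sqcup S_2)\simeq\BunG(S_1)\times\BunG(S_2)$ together with the additivity of the integrands yields the isomorphism $\Piit_Z$; the ensuing coherence and compatibility conditions then reduce to those already satisfied by $E_\theta$. I expect the composition condition \eqref{Condition twisted functoriality} to be the main obstacle, as it is the single place where the groupoid Fubini theorem, the gluing of fundamental cycles, and the relation $\delta\omega'=\lambda^*\theta$ must all be coordinated at once; once it is established the remaining conditions follow by the same bookkeeping.
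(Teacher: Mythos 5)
Your proposal is correct and follows essentially the same route as the paper's proof: a direct verification of the coherence data for a symmetric monoidal 2-transformation, with well-definedness and gauge invariance delegated to the Appendix lemmas, the identity triangle handled via contractibility of the relevant homotopy fibre together with a boundary/dimension argument, and the monoidal structure obtained from factorization of fundamental cycles and homotopy fibres over disjoint unions. The only point where the paper is more specific than your sketch is the composition square, where your ``Fubini-type argument'' is implemented by exhibiting $\lambda_*^{-1}[\varphi_b\cup\varphi_a]\big|_{(\widehat{\xi}_3,h_3)}$ as a homotopy pullback via descent for $\BunG$ and then invoking the ordinary and generalized Cavalieri Principles for groupoid integration from \cite{OFK} --- but you correctly single this out as the crux and name the right ingredients.
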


\begin{proof}
Let $(S_1,\xi_1)$ and $(S_2,\xi_2)$ be objects of $\EGCob$.
We first construct the missing structure corresponding to the compatibility with the monoidal structures. 
This involves a morphism 
\[ 
M^{-1}\colon Z_{\omega '}(\varnothing)= \sigma_\varnothing \longrightarrow \sigma_\varnothing  
\] 
in $E_\theta (\varnothing)$ which we choose to be the identity, and a natural isomorphism 
\begin{align}\label{Eq: Monoidal structure}
\Piit_{Z_{\omega '}}\big((S_1,\xi_1)\,,\,(S_2,\xi_2)\big)\colon \chi_{E_\theta} [ Z_{\omega '}(S_1,\xi_1)\boxtimes Z_{\omega '}(S_2,\xi_2)] \longrightarrow Z_{\omega '}(S_1 \sqcup S_2,\xi_1 \sqcup \xi_2) \ .
\end{align}
Spelling out \eqref{Def on Objects} we find that $\chi_{E_\theta}[ Z_{\omega '}(S_1,\xi_1)\boxtimes Z_{\omega '}(S_2,\xi_2)]$ is given explicitly by
\begin{align}
& \int^{\sigma_2 \in \Fund_\theta(S_2,\xi_2)} \ \int^{\sigma_1 \in \Fund_\theta(S_1,\xi_1)} \, \Big( \tilde{Z}^{(S_2,\xi_2)}_{\omega'}(\sigma_2) \otimes_\C \tilde{Z}^{(S_1,\xi_1)}_{\omega'}(\sigma_1) \Big) * [\sigma_{1}\sqcup \sigma_{2}] \\ & \hspace{4cm}
\cong \int^{\sigma_1 \sqcup \sigma_2 \in \Fund_\theta(S_1 \sqcup S_2,\xi_1 \sqcup \xi_2)} \, \Big( \tilde{Z}^{(S_2,\xi_2)}_{\omega'}(\sigma_2) \otimes_\C \tilde{Z}_{\omega'}^{(S_1,\xi_1)}(\sigma_1) \Big) * [\sigma_{1}\sqcup \sigma_{2}] \ ,
\end{align}
where we used Fubini's Theorem for coends together with the fact that we can naturally identify $\Fund_\theta(S_1,\xi_1) \times \Fund_\theta(S_2,\xi_2)$ with $\Fund_\theta(S_1 \sqcup S_2,\xi_1 \sqcup \xi_2)$. Concretely, $Z_{\omega '}(S_1 \sqcup S_2,\xi_1 \sqcup \xi_2)$ is given by
\begin{align}
\int^{\sigma_1 \sqcup \sigma_2 \in \Fund_\theta(S_1 \sqcup S_2,\xi_1 \sqcup \xi_2)}  \, \tilde{Z}^{(S_2 \sqcup S_1,\xi_2 \sqcup \xi_1)}_{\omega'}(\sigma_2 \sqcup \sigma_1)  * [\sigma_{1}\sqcup \sigma_{2}] \ .
\end{align} 
The isomorphism \eqref{Eq: Monoidal structure} is induced by the collection of natural linear isomorphisms 
\begin{align}
\tilde{Z}_{\omega'}^{(S_2,\xi_2)}(\sigma_2) \otimes_\C \tilde{Z}_{\omega'}^{(S_1,\xi_1)}(\sigma_1) \longrightarrow \tilde{Z}^{(S_2 \sqcup S_1,\xi_2 \sqcup \xi_1)}_{\omega'}(\sigma_2 \sqcup \sigma_1) 
\end{align}
given on parallel sections by
\begin{align}
f_2(\,\cdot\,) \otimes_\C f_1(\,\cdot\,) \longmapsto f_2(\,\cdot\, |_{S_2}) \, f_1(\,\cdot\, |_{S_1}) \ .
\end{align} 
A straightforward but tedious calculation shows that these definitions satisfy \cite[eqs.~(3.18)--(3.22)]{Parity}.

Next we show that $Z_{\omega '}$ is compatible with identities, i.e. that the diagram \eqref{Condition twisted preservation of identities} commutes. Since all constructions are natural it is enough to check this on the terms appearing in the coends of \eqref{Def on Objects} and \eqref{eqnnatisocohuni}. Explicitly, we have to show that the diagram
\begin{equation}
\begin{tikzcd}
\tilde{Z}^{(S,\xi)}_{\omega'}(\sigma) \ar[rr, "{(-1)^{n}\,\sigma \times [0,1] \otimes_\C \,\cdot\,} "] \ar[rd, "\id", swap] & & (S\times [0,1])^{\xi \times \id_{[0,1]}}(\sigma,\sigma) \otimes_\C \tilde{Z}^{(S,\xi)}_{\omega'}(\sigma) \ar[ld] \\
 & \tilde{Z}^{(S,\xi)}_{\omega'}(\sigma) &
\end{tikzcd}
\end{equation} 
commutes for all objects $(S,\xi) \in \EGCob$ and representatives $\sigma$ of the fundamental class of $S$.
Let $(\widehat{\xi},h)$ be an object of $ \lambda_*^{-1}[\xi]$. The groupoid $\lambda_*^{-1}[\xi \times \id_{[0,1]}]|_{(\widehat{\xi},h)}$ is contractible. For this reason we may fix the object $(\widehat{\xi}\times \id_{[0,1]}, h\times \id_{[0,1]}, \id) \in \lambda_*^{-1}[\xi \times \id_{[0,1]}]|_{(\widehat{\xi},h)} $ without loss of generality.
The upper composition evaluated on a parallel section $f$ at $(\widehat{\xi},h)$ is
\begin{equation}
\begin{aligned}
&  \big\langle (\widehat{\xi}\times \id_{[0,1]})^* \omega' \,,\, (-1)^{n}\,\sigma \times [0,1] \big\rangle \, \big\langle (\id_{[0,1]}\times \xi \times \id_{[0,1]})^* \theta \,,\, [0,1] \times (-1)^{n}\, (\sigma \times [0,1]) \big\rangle \\
& \hspace{1cm} \cdot \big\langle (\id_{[0,1]}\times \widehat{\xi} \ )^* \omega' \,,\, [0,1] \times \sigma \big\rangle \ f(\widehat{\xi},h) \\[4pt]
& \hspace{2cm} = \big\langle (\id_{[0,1]}\times \xi \times \id_{[0,1]})^* \theta \,,\, [0,1] \times (-1)^{n}\,(\sigma \times [0,1]) \big\rangle \ f(\widehat{\xi},h) \\[4pt]
& \hspace{3cm} = \big\langle \xi ^* \theta \,,\, \pr_2 \big([0,1] \times (-1)^{n}\,(\sigma \times [0,1])\big) \big\rangle \ f(\widehat{\xi},h) \\[4pt]
& \hspace{4cm} = f(\widehat{\xi},\id_{[0,1]}\times \xi) \ ,
\end{aligned}
\end{equation}
where in the first equality we used the fact that $f$ is a parallel section and in the last equality that the projection to the second factor $\pr_2 \big([0,1] \times (-1)^{n}\,(\sigma \times [0,1])\big)$ is a boundary, which follows from 
\[
\partial\,\pr_2 \big([0,1] \times (-1)^{n}\,(\sigma \times [0,1])\big) = 0
\]
and by dimensional reasons.

We now show that the diagram \eqref{Condition twisted functoriality} commutes. Again it is enough to check the commutativity on elements of the coends. Let $(\Sigma_a,\varphi_a)\colon (S_1,\xi_1)\longrightarrow (S_2,\xi_2)$ and $(\Sigma_b,\varphi_b)\colon (S_2,\xi_2)\longrightarrow (S_3,\xi_3)$ be 1-morphisms in $\EGCob$. We fix representatives $\sigma_1$, $\sigma_2$ and $\sigma_3$ for the fundamental classes of $S_1$, $S_2$ and $S_3$, respectively. 
The upper composition in \eqref{Condition twisted functoriality} corresponds to the linear map
\begin{align}
\Sigma_b^{\varphi_b}(\sigma_2,\sigma_3)\otimes_\C \Sigma_a^{\varphi_a}(\sigma_1,\sigma_2)\otimes_\C \tilde{Z}^{(S_1,\xi_1)}_{\omega'}(\sigma_1) & \longrightarrow \tilde{Z}^{(S_3,\xi_3)}_{\omega'}(\sigma_3) \\
\Lambda_b\otimes_\C \Lambda_a \otimes_\C f(\,\cdot\,) & \longmapsto \tilde{f}(\,\cdot\,)
\end{align}
with 
\begin{align}
\tilde{f}(\widehat{\xi}_3,h_3) &= \int_{(\widehat{\varphi},g,\widehat{h}\,)\in \lambda_*^{-1}[\varphi_b \cup \varphi_a]|_{(\widehat{\xi}_3,h_3)}} \, \big\langle  \widehat{\varphi}^{\,*} \omega' \,,\, \Lambda_a +\Lambda_b \big\rangle \, \big\langle g^* \theta \,,\, [0,1]\times (\Lambda_a+\Lambda_b) \big\rangle \, \big\langle\,\widehat{h}^* \omega' \,,\, [0,1] \times \sigma_3 \big\rangle \\ & \hspace{11cm} \cdot  f(\widehat{\varphi}\,|_{S_1},\widehat{h}|_{S_1}) \ . \label{Eq: Uper composition}
\end{align}
The lower composition gives 
\begin{align}
\Sigma_b^{\varphi_b}(\sigma_2,\sigma_3)\otimes_\C \Sigma_a^{\varphi_a}(\sigma_1,\sigma_2)\otimes_\C \tilde{Z}^{(S_1,\xi_1)}_{\omega'}(\sigma_1) & \longrightarrow \tilde{Z}^{(S_3,\xi_3)}_{\omega'}(\sigma_3) \\
\Lambda_b\otimes_\C \Lambda_a \otimes_\C f (\,\cdot\,) & \longmapsto \underline{f}(\,\cdot\,)
\end{align}
with 
\begin{align}
\underline{f}(\widehat{\xi}_3,h_3) &= \int_{(\widehat{\varphi}_b,g_b,\widehat{h}_b)\in \lambda_*^{-1}[\varphi_b]|_{(\widehat{\xi}_3,h_3)}} \ \int_{(\widehat{\varphi}_a,g_a,\widehat{h}_a)\in \lambda_*^{-1}[\varphi_a]|_{(\widehat{\varphi}_b,\widehat{h}_b)|_{S_2}}} \,
\big\langle \widehat{\varphi}_b^{\,*} \omega' \,,\, \Lambda_b \big\rangle \, \big\langle g_b^* \theta \,,\, [0,1]\times \Lambda_b \big\rangle \label{Eq: Lower composition} \\
& \hspace{1cm} \cdot 
\big\langle\, \widehat{h}^* \omega' \,,\, [0,1] \times \sigma_3 \big\rangle \, \big\langle \widehat{\varphi}_a^{\,*} \omega' \,,\, \Lambda_a \big\rangle \, \big\langle g_a^* \theta \,,\, [0,1]\times \Lambda_a \big\rangle \, \big\langle\, \widehat{h}^* \omega' \,,\, [0,1] \times \sigma_2 \big\rangle \ f(\widehat{\varphi}_a|_{S_1},\widehat{h}_a|_{S_1}) \ . 
\end{align}
Using the descent property for the stack $\BunG$ of $G$-bundles we can write the domain of the first integral as a homotopy pullback\footnote{Here we use the homotopy invariance of the stack $\BunG$ to identify the evaluation on an open neighbourhood of $S_2$ with the evaluation on $S_2$.}
\begin{equation}
\begin{tikzcd}
\lambda_*^{-1}[\varphi_b \cup  \varphi_a]|_{(\widehat{\xi}_3,h_3)} \ar[d, "\pr_{\Sigma_a}",swap] \ar[r, "\pr_{\Sigma_b}"] & \lambda_*^{-1}[\varphi_b]|_{(\widehat{\xi}_3,h_3)}\ar[d,"\pr_{S_2}"] \\
\lambda_*^{-1}[\varphi_a] \ar[r,swap,"\pr_{S_2}"]& \lambda_*^{-1}[\varphi_a|_{S_2}]
\end{tikzcd}
\end{equation}
where $\pr_{\,\cdot\,}$ denotes the pullback functor over the
indicated submanifold. 
From the fibrewise characterization of homotopy
pullbacks~\cite{HomotopyPullback} it follows that 
\begin{equation}
\begin{tikzcd}
\pr_{\Sigma_b}^{-1}\big[(\widehat{\varphi}_b,\widehat{h}_b)\big]\ar[r] \ar[d, "\Xi", swap]& \lambda_*^{-1}[\varphi_b \cup  \varphi_a]|_{(\widehat{\xi}_3,h_3)} \ar[d, "\pr_{\Sigma_a}",swap] \ar[r, "\pr_{\Sigma_b}"] & \lambda_*^{-1}[\varphi_b]|_{(\widehat{\xi}_3,h_3)}\ar[d,"\pr_{S_2}"] \\
\pr_{S_2}^{-1}\big[(\widehat{\varphi}_b,\widehat{h}_b)|_{S_2}\big]\ar[r] & \lambda_*^{-1}[\varphi_a] \ar[r,swap,"\pr_{S_2}"]& \lambda_*^{-1}[\varphi_a|_{S_2}]
\end{tikzcd}
\end{equation}
is a homotopy commuting diagram containing an equivalence $\Xi$. The
equality of the integrals~\eqref{Eq: Uper composition} and \eqref{Eq:
  Lower composition} now follows from the generalized Cavalieri Principle \cite[Proposition~A.15]{OFK} applied to the functor 
\[\pr_{\Sigma_b} \colon \lambda_*^{-1}[\varphi_b \cup  \varphi_a]\big|_{(\widehat{\xi}_3,h_3)} \longrightarrow \lambda_*^{-1}[\varphi_b]\big|_{(\widehat{\xi}_3,h_3)} 
\] 
and the ordinary Cavalieri Principle~\cite[Proposition~A.14]{OFK} for $\Xi$.

Finally, the invariance with respect to gauge
transformations~\cite[eq.~(3.17)]{Parity} follows directly from the
invariances of the integrands which is part of the statement of Lemma~\ref{Lemma well-defined on morphisms}. 
\end{proof}
Similarly to the proof of Theorem~\ref{Theorem: Gauging}, it should be possible to show that the relative field theory gauges the $G$-symmetry. 
Let us explain in more detail what this means: The pullback
$i^*E_\theta$ along the inclusion $i\colon \ECob \longrightarrow \EGCob$ is naturally isomorphic to
the trivial theory $1\colon \ECob\longrightarrow \Tvs$. The pullback $i^* Z_{\omega'}\colon 1 
\Longrightarrow \tr i^*E_\theta\cong 1$ is a field theory
relative to the trivial theory. From \cite[Proposition 3.14]{Parity} it follows that $i^*Z_{\omega'}$ 
is an $n-1$ dimensional topological quantum field theory. This field theory comes with an internal
$G$ symmetry from the evaluation of $Z_{\omega'}$ on gauge transformations of the trivial bundle. 
Gauging the symmetry means that the field theory $i^*Z_{\omega'}$ recovers the Dijkgraaf-Witten theory
$Z_\omega$ together with its internal symmetry induced by the group extensions \begin{align}
1\longrightarrow D \overset{\iota}{\longrightarrow} \widehat{G}\overset{\lambda}{\longrightarrow} G \longrightarrow 1 \ .
\end{align}  

\begin{remark}
Let $S$ be a closed oriented $n{-}2$-dimensional manifold and $\sigma_S$ a representative of its fundamental class.
The general theory outlined in Section~\ref{Sec: Boundary theories}
implies that the vector spaces
$\tilde{Z}_\omega^{(S,\,\cdot\,)}(\sigma_S)$ form a projective
representation of $\BunG(S)$. The 2-cocycle $\alpha$ twisting the
projective representation is completely described by the coherence
isomorphisms for $E_\theta$. In~\cite[Theorem~4.5]{EHQFT} it is shown
that the class of this 2-cocycle is given by the transgression of
$\theta\in Z^n(BG;U(1))$ to the groupoid $\BunG(S)$, i.e. $\alpha$ is induced by the 2-cocycle $\tau_S\theta$ on the underlying mapping
space $|\BunG(S)|$ (with the compact-open topology) given by
\begin{align}
(\tau_S\theta)(\chi):=({\rm ev}^*\theta)(\chi\times\sigma_S)
\end{align}
for any 2-simplex $\chi:\Delta^2\longrightarrow |\BunG(S)|$, where
${\rm ev}:|\BunG(S)|\times S\longrightarrow BG$ is the evaluation
map. This generalizes the low-dimensional descriptions of anomalies
and projective representations on state spaces discussed
in~\cite[Section~2.1]{Tachikawa2016}: In the simplest
$n=1$ case, with $S=\{\ast\}$ the 2-cocycles $\alpha$ and $\theta$ may be identified,
and describe the same 2-cocycle specifying both the two-dimensional bulk $G$-symmetry protected
phase and the class of the
projective $G$-representation on the one-dimensional boundary state,
whereas for $n=3$ with $S=\mathbb{S}^1$ transgression induces a homomorphism
$H^3(BG;U(1))\longrightarrow H^2(BG;U(1))$ specifying the
two-dimensional $G$-symmetry protected phase on the
boundary of the three-dimensional $G$-symmetry protected phase.

In a more geometric language this means that the state spaces of the
gauged theory form a section of the transgression 2-line bundle of the
flat $n{-}1$-gerbe on the classifying space $BG$ described by $\theta$, as the classical
gauge theory corresponding to $\theta$ describes the parallel
transport for the $n{-}1$-gerbe. This 2-line bundle is trivial if and only if the corresponding 2-cocycle is a boundary. Hence the obstruction for the projective representation to form an honest representation is the non-triviality of the transgression 2-line bundle. 
\end{remark}

\begin{remark}
Let $(\Sigma,\varphi)\colon (S,\xi)\longrightarrow \varnothing$ be a 1-morphism in $\EGCob$. According to \eqref{Eq: Def combined state space} the state space of the composite system is given by
\begin{align}
Z_{\omega'\,{\rm bb}}(\Sigma,\varphi, S)= E_\theta(\Sigma,\varphi
  )[Z_{\omega'}(S,\varphi|_S)] \cong \Sigma^\varphi (\varnothing,
  \sigma_S)\otimes_\C \tilde{Z}^{(S,\varphi|_S)}_{\omega'}(\sigma_S) \ .
\end{align}  
It is independent of the choice of $\sigma_S$ up to unique isomorphism corresponding to the choice of a representative of the coend. The composite state space carries an honest representation of the gauge group $G$ described in~\eqref{Eq: Def action combined state space}. 
\end{remark}  

\appendix

\section{Lemmata}

\begin{lemma}\label{Lemma well-defined on morphisms}
$L_{\xi, \omega'} \colon \Fund_\theta(S,\xi)^{{\rm op}} \longrightarrow \big[\lambda_*^{-1}[\xi], \fvs\big]$ is a well-defined functor. 
\end{lemma}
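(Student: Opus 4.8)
The plan is to unwind what it means for $L_{\xi,\omega'}$ to be a well-defined functor and to reduce every requirement to a Stokes-type pairing identity built from the two defining relations $\delta\omega'=\lambda^*\theta$ and $\delta\theta=0$. Concretely, I must check three things: (i) for each representative $\sigma_S$ of the fundamental class the assignment $L_{\xi,\omega'}(\sigma_S)$ really is a functor $\lambda_*^{-1}[\xi]\longrightarrow\fvs$; (ii) for each morphism $\Lambda$ of $\Fund_\theta(S,\xi)$ the components $L_{\xi,\omega'}(\Lambda)_{(\widehat\xi,h)}$ assemble into a natural transformation that moreover depends only on the class of $\Lambda$ modulo $\sim_\theta$; and (iii) these assignments respect identities and composition in $\Fund_\theta(S,\xi)^{\rm op}$. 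Throughout I would write $U(1)=\R/\Z$ additively, so that each scalar becomes a pairing $\langle-,-\rangle$ and the verifications become chain-level identities.

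For (i), functoriality of $L_{\xi,\omega'}(\sigma_S)$ on morphisms of $\lambda_*^{-1}[\xi]$ has an easy part and a subtle part. The easy part is multiplicativity and unitality: a composite gauge transformation is a concatenation of homotopies, so $[0,1]\times\sigma_S$ splits as a sum of the two reparametrised cylinders and $\langle\widehat h^*\omega',[0,1]\times\sigma_S\rangle$ is additive, while an identity morphism is represented by a constant homotopy whose cylinder is degenerate and hence pairs to $0$. The subtle part is that morphisms of $\lambda_*^{-1}[\xi]$ are homotopy classes of gauge transformations, so I must show that $\langle\widehat h^*\omega',[0,1]\times\sigma_S\rangle$ is independent of the chosen representative homotopy $\widehat h$. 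Given two representatives joined by a rel-endpoints homotopy $K\colon[0,1]^2\times S\longrightarrow B\widehat G$, Stokes together with $\delta\omega'=\lambda^*\theta$ turns their difference into $\langle(\lambda_*K)^*\theta,[0,1]^2\times\sigma_S\rangle$, the two faces on which the endpoints of the homotopies are held fixed dropping out as degenerate cylinders. I expect this to be the main obstacle.

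To kill this last term I would use the asphericity of $BG$: since $\pi_2(BG)=0$, the map $\lambda_*K$ is homotopic rel $\partial[0,1]^2\times S$ to a map factoring through the projection $[0,1]^2\times S\longrightarrow[0,1]\times S$ that collapses one interval factor, and because $\theta$ is closed the pairing is invariant under such rel-boundary homotopies; on the collapsed map the chain $[0,1]^2\times\sigma_S$ pushes forward to a degenerate chain, so the pairing vanishes. This is exactly the point where the hypothesis that $B\widehat G$ and $BG$ are $1$-types enters, and it is the reason the construction is consistent even though $\omega'$ is not closed.

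The remaining checks (ii) and (iii) are then routine instances of the same machinery. For naturality of $L_{\xi,\omega'}(\Lambda)$ I would feed the chain homotopy $H(c)=\widehat h_*([0,1]\times c)$ induced by a morphism $\widehat h$ into the naturality square, rewrite $\langle(\widehat\xi_2^*-\widehat\xi_1^*)\omega',\Lambda\rangle$ using $\delta\omega'=\lambda^*\theta$, and watch the square collapse to the concatenation identity $h_1=h_2\circ\lambda_*\widehat h$, that is, to additivity of $\theta$-transport under composition of the connecting gauge transformations. For $\sim_\theta$-invariance, if $\Lambda-\Lambda'=\partial\Xi$ then the scalar $\langle\xi^*\theta,\Xi\rangle$ produced by the relation is matched by applying $\delta\omega'=\lambda^*\theta$ to $\langle\widehat\xi^*\omega',\partial\Xi\rangle$ and transporting $\theta$ along $h$ via $\delta\theta=0$, with the precise signs fixed by the orientation and chain-homotopy conventions already in force. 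Finally (iii) is immediate, since the defining scalars are additive in $\Lambda$, so identities map to identities and composition in $\Fund_\theta(S,\xi)^{\rm op}$ maps to composition of natural transformations, completing the verification that $L_{\xi,\omega'}$ is a functor.
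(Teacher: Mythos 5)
Your items (ii) and (iii) --- the naturality square for $L_{\xi,\omega'}(\Lambda)$, compatibility with $\sim_\theta$, and additivity in $\Lambda$ --- cover exactly what the paper's proof actually verifies, and your method for the naturality square is the paper's: feed in the chain homotopy $H(c)=\widehat h_*([0,1]\times c)$, convert $\langle\widehat h^*\omega',[0,1]\times\partial\Lambda\rangle$ via the product rule and $\delta\omega'=\lambda^*\theta$, and close the square with the triangle identity $\lambda_*\widehat h=h'^{-1}\,h$. Note, however, that the paper singles out precisely this square as ``the only subtle part,'' whereas you declare it routine and put the weight of your proof elsewhere.

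The step you promote to ``the main obstacle'' --- independence of $\langle\widehat h^*\omega',[0,1]\times\sigma_S\rangle$ from the representative homotopy --- is where your argument does not go through. After Stokes you are left with $\langle(\lambda_*K)^*\theta,[0,1]^2\times\sigma_S\rangle$, and you propose to kill it by homotoping $\lambda_*K$ rel $\partial[0,1]^2\times S$ to a map factoring through a collapse $[0,1]^2\times S\longrightarrow[0,1]\times S$. Such a factorization is compatible with the prescribed boundary values only if $\lambda\circ\widehat h$ and $\lambda\circ\widehat h''$ coincide as maps, not merely rel-endpoints homotopic; what $\pi_2(BG)=0$ actually gives you is uniqueness of the rel-boundary homotopy class of the filling, not the existence of a degenerate filling. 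The residual term is a transgression-type holonomy of the closed cocycle $\theta$ over a $2$-parameter family and has no reason to vanish for arbitrary singular representatives (this is the same quantity that twists the projective representation in Section~4.2, so it is generically nontrivial). The reason the paper never confronts this is that morphisms of $\lambda_*^{-1}[\xi]$ are gauge transformations of bundles over a finite group, which carry canonical simplicial (prism) homotopies between classifying maps; the pairing is evaluated on these canonical representatives, so no independence statement is required. As it stands, your proof of the representative-independence claim is incorrect, even though the claim is tangential to what the lemma needs; the portion of your argument that addresses the lemma's actual content is sound and agrees with the paper.
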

\begin{proof}
Let $(S,\xi)$ be an object of $\EGCob$, $\sigma_S$ and $\sigma_S'$ representatives of the fundamental class of $S$, $\Lambda \in C_{n-1}(S) $ an $n{-}1$-chain satisfying $\partial \Lambda = \sigma_S'-\sigma_S$ and $\widehat{h}\colon (\widehat{\xi},h)\longrightarrow (\widehat{\xi}\,',h')$ a morphism in $\lambda_*^{-1}[\xi]$.
The only subtle part is the naturality of $L_{\xi, \omega'}(\Lambda)$, i.e. commutativity of the diagram
\begin{equation}
\begin{tikzcd}
 L_{\xi, \omega'}(\sigma'_S)(\widehat{\xi},h)=\C \ar[rr, "{\langle \, \widehat{h}^* \omega', [0,1]\times \sigma_S' \rangle\,\cdot}"] \ar[dd, "{\langle  \widehat{\xi}\,^* \omega', \Lambda \rangle^{-1}\, \langle h^*\theta , [0,1]\times \Lambda \rangle^{-1}\,\cdot} ", swap]& \ \ \ \ \ &  L_{\xi, \omega'}(\sigma'_S)(\widehat{\xi}\,',h')=\C \ar[dd, "{\langle  \widehat{\xi}\,'{}^* \omega',\Lambda \rangle^{-1}\, \langle h'^*\theta, [0,1]\times \Lambda \rangle^{-1}\,\cdot} "] \\
  & & \\
 L_{\xi, \omega'}(\sigma_S)(\widehat{\xi},h)=\C \ar[rr,swap, "{\langle \, \widehat{h}^* \omega', [0,1]\times \sigma_S \rangle\,\cdot}"]& \ \ \ \ \ &  L_{\xi, \omega'}(\sigma_S)(\widehat{\xi}\,',h')=\C
\end{tikzcd}
\end{equation}
We check this by calculating the lower path to get
\begin{equation}
\begin{aligned}
& \langle \widehat{\xi}\,^* \omega',\Lambda  \rangle^{-1} \, \langle
h^*\theta , [0,1]\times \Lambda \rangle^{-1}\, \langle \, \widehat{h}^* \omega', [0,1]\times \sigma_S \rangle \\[4pt]
&\qquad = 
\langle \widehat{\xi}\,^* \omega' ,  \Lambda \rangle^{-1}\, \langle h^*\theta , [0,1]\times \Lambda \rangle^{-1}\, \langle \, \widehat{h}^* \omega' ,[0,1]\times \sigma_S' \rangle\, \langle \, \widehat{h}^* \omega' , [0,1]\times \partial \Lambda \rangle^{-1} \\[4pt]
&\qquad = 
\langle \widehat{\xi}\,^* \omega',\Lambda \rangle^{-1}\, \langle h^*\theta , [0,1]\times \Lambda \rangle^{-1}\, \langle \, \widehat{h}^* \omega' ,[0,1]\times \sigma_S' \rangle\, \langle \, \widehat{h}^* \omega', -\partial([0,1]\times \Lambda )-\lbrace 0 \rbrace \times \Lambda + \lbrace 1 \rbrace \times \Lambda \rangle^{-1}\\[4pt]
&\qquad = 
\langle h^*\theta , [0,1]\times \Lambda \rangle^{-1}\, \langle \,
\widehat{h}^* \omega' , [0,1]\times \sigma_S' \rangle\, \langle \,
\widehat{h}^* \omega' ,  -\partial([0,1]\times \Lambda )
\rangle^{-1}\, \langle \widehat{\xi}\,'{}^* \omega' , \Lambda \rangle^{-1}\\[4pt]
&\qquad = 
\langle \widehat{\xi}\,'{}^* \omega',\Lambda \rangle^{-1}\, \langle h'^*\theta , [0,1]\times \Lambda \rangle^{-1}\, \langle \, \widehat{h}^* \omega' , [0,1]\times \sigma_S' \rangle \ ,
\end{aligned}
\end{equation}
where in the first step we used $\sigma_S=\sigma_S'-\partial \Lambda$,
in the second step the graded product rule for $\times$, in the third step that $\widehat{h}$ is a homotopy from $\widehat{\xi}$ to $\widehat{\xi}\,'$, and in the last step $\delta \omega'=\lambda^*\theta$ and $\lambda_*\widehat{h}= h'{}^{-1}\,h$.
\end{proof}

\begin{lemma}\label{Lem: Def on Mor}
$Z_{\omega'}(\Sigma,\varphi)_{\sigma_1}(f,\Lambda)(\,\cdot\,)$ is a well-defined parallel section and the collection $
Z_{\omega'}(\Sigma,\varphi)_{\sigma_1}$ defines a cowedge inducing the
desired maps, which again form a cowedge defining $Z_{\omega'}(\Sigma,\varphi)$.
\end{lemma}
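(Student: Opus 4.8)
The plan is to establish the three assertions of the lemma in sequence --- well-definedness of the groupoid integral and of the resulting parallel section, the cowedge property in $\sigma_1$, and the cowedge property in $\sigma_2$ --- reusing throughout the bookkeeping of Lemma~\ref{Lemma well-defined on morphisms} and the defining relations $\iota^*\omega'=\omega$ and $\delta\omega'=\lambda^*\theta$, together with the invariance of integration over essentially finite groupoids under equivalences~\cite[Appendix~A]{OFK}.

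First I would verify that the integrand in~\eqref{Def: Action on invariant functions} descends to isomorphism classes of the homotopy fibre $\lambda_*^{-1}[\varphi]\big|_{(\widehat\xi_2,h_2)}$, so that the groupoid integral is defined. A morphism $\widehat k\colon(\widehat\varphi,g,\widehat h)\to(\widehat\varphi\,',g',\widehat h'\,)$ of this fibre induces a chain homotopy $H(c)=\widehat k_*([0,1]\times c)$ relating the pullbacks along $\widehat\varphi$ and $\widehat\varphi\,'$; using $\delta\omega'=\lambda^*\theta$ to trade the variation of $\langle\widehat\varphi^{\,*}\omega',\Lambda\rangle$ for a $\theta$-pairing, and the compatibility of $\widehat k$ with $g,g'$ and with $\widehat h,\widehat h'$, the product of $\langle\widehat\varphi^{\,*}\omega',\Lambda\rangle$, $\langle g^*\theta,[0,1]\times\Lambda\rangle$ and $\langle\widehat h^*\omega',[0,1]\times\sigma_2\rangle$ with $f(\widehat\varphi|_{S_1},g|_{S_1})$ is seen to be invariant. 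Next I would check that $(\widehat\xi_2,h_2)\mapsto Z_{\omega'}(\Sigma,\varphi)_{\sigma_1}(f,\Lambda)(\widehat\xi_2,h_2)$ is a parallel section of $L_{\xi_2,\omega'}(\sigma_2)$. A morphism $\widehat h_2\colon(\widehat\xi_2,h_2)\to(\widehat\xi_2',h_2')$ induces an equivalence $\lambda_*^{-1}[\varphi]\big|_{(\widehat\xi_2,h_2)}\simeq\lambda_*^{-1}[\varphi]\big|_{(\widehat\xi_2',h_2')}$ sending $(\widehat\varphi,g,\widehat h)$ to $(\widehat\varphi,g,\widehat h_2\circ\widehat h)$; since concatenation of homotopies gives $\langle(\widehat h_2\circ\widehat h)^*\omega',[0,1]\times\sigma_2\rangle=\langle\widehat h^*\omega',[0,1]\times\sigma_2\rangle\,\langle\widehat h_2^*\omega',[0,1]\times\sigma_2\rangle$ and the second factor is constant on the fibre, invariance of the integral under this equivalence yields the required transformation law $s(\widehat\xi_2',h_2')=\langle\widehat h_2^*\omega',[0,1]\times\sigma_2\rangle\,s(\widehat\xi_2,h_2)$.

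I would then turn to the cowedge conditions. For the inner coend, a morphism $\Lambda_1\colon\sigma_1\to\sigma_1'$ in $\Fund_\theta(S_1,\xi_1)$ requires comparing $Z_{\omega'}(\Sigma,\varphi)_{\sigma_1}\big(\tilde Z^{(S_1,\xi_1)}_{\omega'}(\Lambda_1)f',\Lambda\big)$ with $Z_{\omega'}(\Sigma,\varphi)_{\sigma_1'}(f',\Lambda-\Lambda_1)$, using that the covariant action on $\Sigma^\varphi(\sigma_2,-)$ sends $\Lambda\mapsto\Lambda-\Lambda_1$. Expanding the right-hand side produces the extra factors $\langle\widehat\varphi^{\,*}\omega',\Lambda_1\rangle^{-1}\langle g^*\theta,[0,1]\times\Lambda_1\rangle^{-1}$, while the left-hand side produces exactly $\langle(\widehat\varphi|_{S_1})^{\,*}\omega',\Lambda_1\rangle^{-1}\langle(g|_{S_1})^*\theta,[0,1]\times\Lambda_1\rangle^{-1}$ from the definition of $L_{\xi_1,\omega'}(\Lambda_1)$ applied at $(\widehat\varphi|_{S_1},g|_{S_1})$; as $\Lambda_1$ is supported in $S_1$ these agree, so the family descends to a map out of $\int^{\sigma_1}\Sigma^\varphi(\sigma_2,\sigma_1)\otimes_\C\tilde Z^{(S_1,\xi_1)}_{\omega'}(\sigma_1)$. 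For the outer coend I would run the analogous comparison for a morphism $\Lambda_2\colon\sigma_2\to\sigma_2'$ in $\Fund_\theta(S_2,\xi_2)$, where the covariant action $\zeta\mapsto\zeta+\Lambda_2$ on $\Sigma^\varphi(-,\sigma_1)$ is matched against $\tilde Z^{(S_2,\xi_2)}_{\omega'}(\Lambda_2)$ on the target; the identity $\delta\omega'=\lambda^*\theta$ balances the boundary pairings on $\sigma_2$ exactly as in the naturality step of Lemma~\ref{Lemma well-defined on morphisms}. The universal property of the two nested coends then produces the morphism $Z_{\omega'}(\Sigma,\varphi)$.

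The main obstacle is the first step: invariance of the integrand under the automorphisms of $\lambda_*^{-1}[\varphi]\big|_{(\widehat\xi_2,h_2)}$, since this is where the cocycle relation $\delta\omega'=\lambda^*\theta$ does genuine work and where the three distinct pairing contributions --- the representative-dependence $\langle\widehat\varphi^{\,*}\omega',\Lambda\rangle$, the bulk term $\langle g^*\theta,[0,1]\times\Lambda\rangle$ and the boundary term $\langle\widehat h^*\omega',[0,1]\times\sigma_2\rangle$ --- must be balanced simultaneously against the transformation of $f$. Once this invariance and the parallel-section law are in hand, the two cowedge verifications and the appeal to invariance of groupoid integration under equivalences~\cite[Appendix~A]{OFK} for the change of integration domain are formally the same manipulations carried out in Lemma~\ref{Lemma well-defined on morphisms} and in the proof of Theorem~\ref{Theorem: Relative field theory}.
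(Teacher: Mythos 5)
Your proposal is correct and follows essentially the same route as the paper's proof: gauge-invariance of the integrand via the induced chain homotopy together with $\delta\omega'=\lambda^*\theta$ and the two compatibility triangles, the parallel-section law via the equivalence of homotopy fibres induced by a morphism in $\lambda_*^{-1}[\xi_2]$ and reparametrization invariance of the groupoid integral, and the cowedge condition in $\sigma_1$ by matching the extra factors $\langle\widehat\varphi^{\,*}\omega',\zeta\rangle^{-1}\langle g^*\theta,[0,1]\times\zeta\rangle^{-1}$ restricted to $S_1$ against $L_{\xi_1,\omega'}(\zeta)$. The paper likewise dispatches the remaining naturality by observing it is the same calculation with the sign of $\zeta$ reversed, so your treatment of the outer coend is consistent with theirs.
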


\begin{proof}
We first check that the integrand of \eqref{Def: Action on invariant functions} is gauge-invariant, i.e. the integral is well-defined.
Let $\widehat{H}\colon (\widehat{\varphi},g,\widehat{h}\,)
\longrightarrow (\widehat{\varphi}\,',g',\widehat{h}') $ be an
isomorphism in $\lambda_*^{-1}[\varphi]|_{(\widehat{\xi}_2,h_2)}$,
i.e. a homotopy $\widehat{H}\colon \widehat{\varphi} \longrightarrow
\widehat{\varphi}\,'$ such that the diagrams
\begin{equation}
\begin{tikzcd}
\lambda_*\widehat{\varphi}\ar[rd,"g",swap] \ar[rr,"\lambda_*\widehat{H}"] & & \lambda_*\widehat{\varphi}\,' \ar[ld,"g'"] \\
 & \varphi & 
\end{tikzcd}\label{EQ: Condition1}
\end{equation}
and 
\begin{equation}
\begin{tikzcd}
\widehat{\varphi}\,|_{S_2}\ar[rd,"\widehat{h}",swap] \ar[rr,"\widehat{H}|_{S_2}"] & & \widehat{\varphi}\,'|_{S_2} \ar[ld,"\widehat{h}'"] \\
 & \widehat{\xi_2} & 
\end{tikzcd}\label{EQ: Condition2}
\end{equation}
commute.
We study the transformation of the first term in \eqref{Def: Action on invariant functions}.
The homotopy $\widehat{H}\colon [0,1]\times \Sigma \longrightarrow B\widehat{G}$ induces a chain homotopy between the chain maps $\widehat{\varphi}_*$ and $\widehat{\varphi}_*^{\,\prime}$ given by
\begin{align} 
{{\widehat{\varphi}}^{\,\prime}_*}{}_{p}- {\widehat{\varphi}_*}{}_{p}
  = \partial \circ H_p + H_{p-1} \circ \partial
\end{align}
for all $p\in\mathbb{Z}$, where $H_p := {\widehat{H}_*}{}_{p+1} \circ D_p$ and 
\begin{align} 
D_p : C_p(\Sigma) \longrightarrow C_{p+1}([0,1]\times \Sigma) \ , \quad c \longmapsto  [0,1] \times c \ .
\end{align} 
Hence
\begin{align}
 \langle  {\widehat{\varphi}\,'}{}^* \omega' ,\Lambda \rangle \, \langle  {\widehat{\varphi}}\,^* \omega',\Lambda \rangle^{-1} & = \langle  \omega' , \partial H_{n-1}(\Lambda) + H_{n-2}(\partial \Lambda) \rangle \\[4pt]
& = \langle (\lambda_*\widehat{H})^* \theta , [0,1]\times \Lambda \rangle \, \langle   \widehat{H}^* \omega' , [0,1]\times (\sigma_2-\sigma_1) \rangle \\[4pt]
&= \langle g^* \theta,  [0,1]\times \Lambda \rangle \, \langle g'^*
\theta , [0,1]\times \Lambda \rangle^{-1} \, \langle \, \widehat{h}^* \omega' , [0,1]\times \sigma_2 \rangle \, \langle \, {\widehat{h}'}{}^* \omega',[0,1]\times \sigma_2 \rangle^{-1} \\
& \hspace{6cm}  \cdot \langle\widehat{H}|_{S_1}^* \omega'  , [0,1]\times \sigma_1 \rangle^{-1} \\[4pt]
&= \langle  g^* \theta,  [0,1]\times \Lambda \rangle \, \langle   g'^*
\theta , [0,1]\times \Lambda \rangle^{-1} \, \langle \, \widehat{h}^* \omega' , [0,1]\times \sigma_2 \rangle \, \langle\, {\widehat{h}'}{}^* \omega',[0,1]\times \sigma_2 \rangle^{-1} \\
& \hspace{6cm} \cdot f(\widehat{\varphi}\,'|_{S_1},g'|_{S_1}) \, f(\widehat{\varphi}\,|_{S_1},g|_{S_1})^{-1} \ ,
\label{EQ: First Term}
\end{align}
where we used \eqref{EQ: Condition1} and \eqref{EQ: Condition2} in the
third step, and that $f(\,\cdot\,)$ is a parallel section in the last
step. Comparing \eqref{EQ: First Term} with the transformation
behaviours of the other terms in \eqref{Def: Action on invariant functions} shows that the integrand is gauge-invariant.

We now check that \eqref{Def: Action on invariant functions} is a
parallel section of $L_{\xi_2,\omega'}(\sigma_2)\colon
\lambda_*^{-1}[\xi_2]\longrightarrow \fvs$. Let $\widehat{h}' \colon
(\widehat{\xi}_2,h_2) \longrightarrow
(\widehat{\xi}_2^{\,\prime},h_2')$ be a morphism in
$\lambda_*^{-1}[\xi_2]$. The map $\widehat{h}'$ induces an equivalence of groupoids 
\begin{align}
\widehat{h}'_* \colon \lambda_*^{-1}[\varphi]|_{(\widehat{\xi}_2,h_2)} &\longrightarrow \lambda_*^{-1}[\varphi]|_{(\widehat{\xi}_2^{\,\prime},h_2')} \\
(\widehat{\varphi},g,\widehat{h}\,) & \longmapsto (\widehat{\varphi},g,\widehat{h}'\circ \widehat{h}\,) \ .
\end{align}
Pulling back the integrand of \eqref{Def: Action on invariant
  functions} along this equivalence, reparametrization invariance of
the integral over groupoids and $ \big\langle (\,\widehat{h}' \circ
\widehat{h}\,)^* \omega', [0,1]\times\sigma_2 \big\rangle = \big\langle\,
\widehat{h}^* \omega' , [0,1]\times \sigma_2 \big\rangle + \big\langle\,
{\widehat{h}'}{}^* \omega' , [0,1]\times \sigma_2 \big\rangle$ then shows that \eqref{Def: Action on invariant functions} is a parallel section.  

Next we show that this defines a cowedge with respect to $\sigma_1$,
i.e. the diagram
\begin{equation}
\begin{tikzcd}
\Sigma^\varphi(\sigma_2,\sigma_1)\otimes_\C \tilde{Z}^{(S_1,\xi_1)}_{\omega'}(\sigma_1') \ar[rr,"\Sigma^\varphi(\zeta)\otimes_\C \id"] \ar[dd,"\id \otimes_\C \tilde{Z}^{(S_1,\xi_1)}_{\omega'}(\zeta)", swap] & &
\Sigma^\varphi(\sigma_2,\sigma_1')\otimes_\C
\tilde{Z}^{(S_1,\xi_1)}_{\omega'}(\sigma_1') \ar[dd] \\ & & \\
\Sigma^\varphi(\sigma_2,\sigma_1)\otimes_\C \tilde{Z}^{(S_1,\xi_1)}_{\omega'}(\sigma_1)  \ar[rr]& &  \tilde{Z}^{(S_2,\xi_2)}_{\omega'}(\sigma_2) 
\end{tikzcd}
\end{equation}
commutes for all morphisms $(\zeta\colon \sigma_1 \longrightarrow
\sigma_1') \in \Fund_\theta(S_1,\xi_1)$. The upper composition
evaluated on an element $\Lambda \otimes_\C f( \,\cdot\, ) \in
\Sigma^\varphi(\sigma_2,\sigma_1)\otimes_\C
\tilde{Z}^{(S_1,\xi_1)}_{\omega'}(\sigma_1')$ gives
\begin{align*}
&\int_{(\widehat{\varphi},g,\widehat{h}\,)\in \lambda_*^{-1}[\varphi]|_{(\widehat{\xi}_2,h_2)}} \, \langle  \widehat{\varphi}\,^* \omega',  \Lambda-\zeta \rangle \, \langle  g^* \theta, [0,1] \times (\Lambda-\zeta) \rangle \, \langle\, \widehat{h}^* \omega', [0,1] \times \sigma_2 \rangle \ f(\widehat \varphi\,|_{S_1}, g|_{S_1}) \\[4pt]
& \quad =
\int_{(\widehat{\varphi},g,\widehat{h}\,)\in \lambda_*^{-1}[\varphi]|_{(\widehat{\xi}_2,h_2)}} \, \langle g^* \theta , [0,1]\times \zeta \rangle^{-1} \, \langle \widehat{\varphi}\,^* \omega' ,  \zeta  \rangle^{-1} \, \langle  \widehat{\varphi}\,^* \omega' , \Lambda \rangle \, \langle g^* \theta , [0,1]\times \Lambda  \rangle \, \langle\, \widehat{h}^* \omega' , [0,1] \times \sigma_2 \rangle\\
& \hspace{14cm} \cdot f(\widehat \varphi\,|_{S_1}, g|_{S_1})
\\[4pt]
& \quad =
\int_{(\widehat{\varphi},g,\widehat{h}\,)\in \lambda_*^{-1}[\varphi]|_{(\widehat{\xi}_2,h_2)}} \, { \langle g|_{S_1}^* \theta , [0,1] \times \zeta \rangle^{-1} \, \langle \widehat{\varphi}\,|_{S_1}^* \omega' , \zeta \rangle^{-1}} \, \langle \widehat{\varphi}\,^* \omega' ,\Lambda \rangle \, \langle g^* \theta , [0,1]\times \Lambda  \rangle \, \langle\, \widehat{h}^* \omega' , [0,1] \times \sigma_2 \rangle \\
& \hspace{14cm} \cdot f(\widehat \varphi\,|_{S_1}, g|_{S_1}) \ .
\end{align*}
Comparing the term $\langle g|_{S_1}^* \theta , [0,1] \times \zeta \rangle^{-1} \, \langle \widehat{\varphi}\,|_{S_1}^* \omega' , \zeta \rangle^{-1}$ here with \eqref{Def: L natural} makes it clear that this is the same as the lower composition. 

Finally, we have to show naturality with respect to $\sigma_1$. This follows from the same calculation as above if we replace $\zeta$ with $-\zeta$ everywhere.
\end{proof}



\begin{thebibliography}{BDSPV15}

\bibitem[Ati88]{Atiyah1988}
M.~F. Atiyah.
\newblock Topological quantum field theories.
\newblock {\em Publ. Math. IHÉS}, 68:175--186, 1988.

\bibitem[BBF05]{Blanco2005}
V.~Blanco, M.~Bullejos, and E.~Faro.
\newblock {Categorical non-abelian cohomology and the Schreier theory of
  groupoids}.
\newblock {\em Math. Z.}, 251(1):41--59, 2005,
  \href{https://arxiv.org/abs/math/0410202}{arXiv:math.CT/0410202}.

\bibitem[BCH19]{Benini2018}
F.~Benini, C.~C{\'o}rdova, and P.-S. Hsin.
\newblock {On 2-group global symmetries and their anomalies}.
\newblock {\em J. High Energy Phys.}, 03:118, 2019, \href{https://arxiv.org/abs/1803.09336}{arXiv:1803.09336
  [hep-th]}.

\bibitem[BDSPV15]{bartlett2015modular}
B.~Bartlett, C.~L. Douglas, C.~J. Schommer-Pries, and J.~Vicary.
\newblock Modular categories as representations of the three-dimensional
  bordism 2-category.
\newblock 2015, \href{https://arxiv.org/abs/1509.06811}{arXiv:1509.06811
  [math.AT]}.

\bibitem[BHW10]{groupoidfication}
J.~C. Baez, A.~E. Hoffnung, and C.~D. Walker.
\newblock {Higher-dimensional algebra VII: Groupoidification}.
\newblock {\em Theory Appl. Categ.}, 24(18):489--553, 2010,
  \href{https://arxiv.org/abs/0908.4305}{arXiv:0908.4305 [math.QA]}.

\bibitem[BS06]{nCat=Cohomology}
J.{\,}C. {Baez} and M.~{Shulman}.
\newblock {Lectures on $n$-categories and cohomology}.
\newblock In {\em Towards Higher Categories}, pages 1--68. Springer, New York,
  2006, \href{https://arxiv.org/abs/math/0608420}{arXiv:math.CT/0608420}.

\bibitem[BS17]{Bunk2017}
S.~Bunk and R.~J. Szabo.
\newblock {Topological insulators and the Kane-Mele invariant: Obstruction and
  localization theory}.
\newblock 2017, \href{https://arxiv.org/abs/1712.02991}{arXiv:1712.02991
  [math-ph]}.

\bibitem[CCW17a]{Cong2017}
I.~Cong, M.~Cheng, and Z.~Wang.
\newblock {Defects between gapped boundaries in two-dimensional topological
  phases of matter}.
\newblock {\em Phys. Rev. B}, 96(19):195129, 2017,
  \href{https://arxiv.org/abs/1703.03564}{arXiv:1703.03564 [cond-mat.str-el]}.

\bibitem[CCW17b]{Cong:2017ffh}
I.~Cong, M.~Cheng, and Z.~Wang.
\newblock {Hamiltonian and algebraic theories of gapped boundaries in
  topological phases of matter}.
\newblock {\em Commun. Math. Phys.}, 355:645--689, 2017,
  \href{https://arxiv.org/abs/1707.04564}{arXiv:1707.04564 [cond-mat.str-el]}.

\bibitem[CGLW13]{Chen:2011pg}
X.~Chen, Z.-C. Gu, Z.-X. Liu, and X.-G. Wen.
\newblock {Symmetry-protected topological orders and the group cohomology of
  their symmetry group}.
\newblock {\em Phys. Rev. B}, 87(15):155114, 2013,
  \href{https://arxiv.org/abs/1106.4772}{arXiv:1106.4772 [cond-mat.str-el]}.

\bibitem[CGW10]{Chen:2010gda}
X.~Chen, Z.-C. Gu, and X.-G. Wen.
\newblock {Local unitary transformation, long-range quantum entanglement,
  wavefunction renormalization, and topological order}.
\newblock {\em Phys. Rev. B}, 82:155138, 2010,
  \href{https://arxiv.org/abs/1004.3835}{arXiv:1004.3835 [cond-mat.str-el]}.

\bibitem[CMM97]{Carey1997}
A.~L. Carey, M.~K. Murray, and J.~Mickelsson.
\newblock {Index theory, gerbes, and Hamiltonian quantization}.
\newblock {\em Commun. Math. Phys.}, 183(3):707--722, 1997,
  \href{https://arxiv.org/abs/hep-th/9511151}{arXiv:hep-th/9511151}.

\bibitem[CPS06]{HomotopyPullback}
W.~{Chacholski}, W.~{Pitsch}, and J.~{Scherer}.
\newblock {Homotopy pullback squares up to localization}.
\newblock {\em Contemp. Math.}, 399:55--72, 2006,
  \href{https://arxiv.org/abs/math/0501250}{arXiv:math.AT/0501250}.

\bibitem[Del17]{Delcamp2017}
C.~Delcamp.
\newblock {Excitation basis for $(3{+}1)D$ topological phases}.
\newblock {\em J. High Energy Phys.}, 12:128, 2017,
  \href{https://arxiv.org/abs/1709.04924}{arXiv:1709.04924 [hep-th]}.

\bibitem[DPR90]{Twisted_Drinfeld_double}
R.~Dijkgraaf, V.~Pasquier, and P.~Roche.
\newblock {Quasi-Hopf algebras, group cohomology and orbifold models}.
\newblock {\em Nucl. Phys. B Proc. Suppl.}, 18:60--72, 1990.

\bibitem[DT18]{Delcamp2018}
C.~Delcamp and A.~Tiwari.
\newblock {From gauge to higher gauge models of topological phases}.
\newblock {\em J. High Energy Phys.}, 10:049, 2018,
  \href{https://arxiv.org/abs/1802.10104}{arXiv:1802.10104 [cond-mat.str-el]}.

\bibitem[DW90]{DijkgraafWitten}
R.~Dijkgraaf and E.~Witten.
\newblock Topological gauge theories and group cohomology.
\newblock {\em Commun. Math. Phys.}, 129(2):393--429, 1990.

\bibitem[ENOM10]{2009arXiv0909.3140E}
P.~I. {Etingof}, D.~{Nikshych}, V.~{Ostrik}, and E.~{Meir}.
\newblock {Fusion categories and homotopy theory}.
\newblock {\em Quantum Topol.}, 1(3):209--273, 2010,
  \href{https://arxiv.org/abs/0909.3140}{arXiv:0909.3140 [math.QA]}.

\bibitem[FH16]{Freed:2016rqq}
D.~S. Freed and M.~J. Hopkins.
\newblock {Reflection positivity and invertible topological phases}.
\newblock 2016, \href{https://arxiv.org/abs/1604.06527}{arXiv:1604.06527
  [hep-th]}.

\bibitem[FM13]{FreedMoore}
D.~S. Freed and G.~W. Moore.
\newblock Twisted equivariant matter.
\newblock {\em Ann. Henri Poincar{\'e}}, 14(8):1927--2023, 2013,
  \href{https://arxiv.org/abs/1208.5055}{arXiv:1208.5055 [hep-th]}.

\bibitem[FPSV15]{BrauerGroup}
J.~Fuchs, J.~Priel, C.~Schweigert, and A.~Valentino.
\newblock {On the Brauer groups of symmetries of abelian Dijkgraaf-Witten
  theories}.
\newblock {\em Commun. Math. Phys.}, 339(2):385--405, 2015,
  \href{https://arxiv.org/abs/1404.6646}{arXiv:1404.6646 [hep-th]}.

\bibitem[FQ93]{FreedQuinn}
D.~S. Freed and F.~Quinn.
\newblock {Chern-Simons theory with finite gauge group}.
\newblock {\em Commun. Math. Phys.}, 156(3):435--472, 1993,
  \href{https://arxiv.org/abs/hep-th/9111004}{arXiv:hep-th/9111004}.

\bibitem[Fre95]{FreedCS}
D.~S. Freed.
\newblock {Classical Chern-Simons theory 1}.
\newblock {\em Adv. Math.}, 113:237--303, 1995,
  \href{https://arxiv.org/abs/hep-th/9206021}{arXiv:hep-th/9206021}.

\bibitem[Fre14a]{FreedAnomalies}
D.~S. Freed.
\newblock Anomalies and invertible field theories.
\newblock {\em Proc. Symp. Pure Math.}, 88:25--46, 2014,
  \href{https://arxiv.org/abs/1404.7224}{arXiv:1404.7224 [hep-th]}.

\bibitem[Fre14b]{Freed:2014eja}
D.~S. Freed.
\newblock {Short-range entanglement and invertible field theories}.
\newblock 2014, \href{https://arxiv.org/abs/1406.7278}{ arXiv:1406.7278
  [cond-mat.str-el]}.

\bibitem[FT14]{RelativeQFT}
D.~S. Freed and C.~Teleman.
\newblock Relative quantum field theory.
\newblock {\em Commun. Math. Phys.}, 326(2):459--476, 2014,
  \href{https://arxiv.org/abs/1212.1692}{arXiv:1212.1692 [hep-th]}.

\bibitem[FT18]{Freed:2018cec}
D.~S. Freed and C.~Teleman.
\newblock {Topological dualities in the Ising model}.
\newblock 2018, \href{https://arxiv.org/abs/1806.00008}{arXiv:1806.00008
  [math.AT]}.

\bibitem[FV15]{BoundaryConditionsTQFT}
D.~Fiorenza and A.~Valentino.
\newblock Boundary conditions for topological quantum field theories, anomalies
  and projective modular functors.
\newblock {\em Commun. Math. Phys.}, 338(3):1043--1074, 2015,
  \href{https://arxiv.org/abs/1409.5723}{arXiv:1409.5723 [math.QA]}.

\bibitem[GJF19]{Gaiotto:2017zba}
D.~Gaiotto and T.~Johnson-Freyd.
\newblock {Symmetry-protected topological phases and generalized cohomology}.
\newblock  {\em J. High Energy Phys.}, 05:007, 2019, \href{https://arxiv.org/abs/1712.07950}{arXiv:1712.07950
  [hep-th]}.

\bibitem[GKSW15]{Gaiotto2014}
D.~Gaiotto, A.~Kapustin, N.~Seiberg, and B.~Willett.
\newblock {Generalized global symmetries}.
\newblock {\em J. High Energy Phys.}, 02:172, 2015,
  \href{https://arxiv.org/1412.5148}{arXiv:1412.5148 [hep-th]}.

\bibitem[HLY14]{Huang2014}
H.-L. Huang, G.~Liu, and Y.~Ye.
\newblock The braided monoidal structures on a class of linear {Gr}-categories.
\newblock {\em Algebr. Represent. Theor.}, 17(4):1249--1265, 2014,
  \href{https://arxiv.org/abs/1206.5402}{arXiv:1206.5402 [math.QA]}.

\bibitem[HS53]{HS53}
G.~Hochschild and J.-P. Serre.
\newblock {Cohomology of group extensions}.
\newblock {\em Trans. Amer. Math. Soc.}, 74(1):110--134, 1953.

\bibitem[Hue81]{HUEBSCHMANN1981296}
J.~Huebschmann.
\newblock {Automorphisms of group extensions and differentials in the
  Lyndon-Hochschild-Serre spectral sequence}.
\newblock {\em J. Algebra}, 72(2):296--334, 1981.

\bibitem[HZvK17]{He:2016xpi}
H.~He, Y.~Zheng, and C.~von Keyserlingk.
\newblock {Field theories for gauged symmetry-protected topological phases:
  Non-abelian anyons with abelian gauge group $\mathbb Z_2^{ 3}$}.
\newblock {\em Phys. Rev. B}, 95(3):035131, 2017,
  \href{https://arxiv.org/abs/1608.05393}{arXiv:1608.05393 [cond-mat.str-el]}.

\bibitem[JFS17]{Johnson-Freyd:2017ykw}
T.~Johnson-Freyd and C.~Scheimbauer.
\newblock {(Op)lax natural transformations, twisted quantum field theories, and
  “even higher” Morita categories}.
\newblock {\em Adv. Math.}, 307:147--223, 2017,
  \href{https://arxiv.org/abs/1502.06526}{arXiv:1502.06526 [math.CT]}.

\bibitem[Joy08]{Joyner08aprimer}
D.~Joyner.
\newblock {A primer on computational group homology and cohomology using {\tt
  GAP} and {\tt SAGE}}.
\newblock {\em Algebr. Discr. Math.}, 1:159--191, 2008,
  \href{https://arxiv.org/abs/0706.0549}{arXiv:0706.0549 [math.GR]}.

\bibitem[Kap14]{Kapustin2014}
A.~Kapustin.
\newblock {Symmetry-protected topological phases, anomalies, and cobordisms:
  Beyond group cohomology}.
\newblock 2014, \href{https://arxiv.org/abs/1403.1467}{arXiv:1403.1467
  [cond-mat.str-el]}.

\bibitem[Kar85]{ProjectiveRep}
G.~Karpilovsky.
\newblock {\em Projective Representations of Finite Groups}.
\newblock Marcel Dekker, 1985.

\bibitem[KS14]{Kapustin:2014gua}
A.~Kapustin and N.~Seiberg.
\newblock {Coupling a QFT to a TQFT and duality}.
\newblock {\em J. High Energy Phys.}, 04:001, 2014,
  \href{https://arxiv.org/1401.0740}{arXiv:1401.0740 [hep-th]}.

\bibitem[KT14]{Kapustin:Symmetries}
A.~Kapustin and R.~Thorngren.
\newblock {Anomalies of discrete symmetries in various dimensions and group
  cohomology}.
\newblock 2014, \href{https://arxiv.org/abs/1404.3230}{arXiv:1404.3230
  [hep-th]}.

\bibitem[KT17a]{Kapustin2013}
A.~Kapustin and R.~Thorngren.
\newblock {Higher symmetry and gapped phases of gauge theories}.
\newblock {\em {Progr. Math.}}, {324}:{177--202}, 2017,
  \href{https://arxiv.org/abs/1309.4721}{arXiv:1309.4721 [hep-th]}.

\bibitem[KT17b]{Kapustin2017}
A.~Kapustin and A.~Turzillo.
\newblock Equivariant topological quantum field theory and symmetry-protected
  topological phases.
\newblock {\em J. High Energy Phys.}, 03:006, 2017,
  \href{https://arxiv.org/abs/1504.01830}{arXiv:1504.01830 [cond-mat.str-el]}.

\bibitem[KTTW15]{Kapustin2015}
A.~Kapustin, R.~Thorngren, A.~Turzillo, and Z.~Wang.
\newblock Fermionic symmetry-protected topological phases and cobordisms.
\newblock {\em J. High Energy Phys.}, 12:052, 2015,
  \href{https://arxiv.org/abs/1406.7329}{arXiv:1406.7329 [cond-mat.str-el]}.

\bibitem[KV94]{KV94}
M.~Kapranov and V.~Voevodsky.
\newblock {Braided monoidal 2-categories and Manin-Schechtman higher braid
  groups}.
\newblock {\em J. Pure Appl. Algebra}, 92:241{\textendash}267, 1994.

\bibitem[LG12]{Levin2012}
M.~Levin and Z.-C. Gu.
\newblock {Braiding statistics approach to symmetry-protected topological
  phases}.
\newblock {\em Phys. Rev. B}, 86:115109, 2012,
  \href{https://arxiv.org/abs/1202.3120}{arXiv:1202.3120 [cond-mat.str-el]}.

\bibitem[LP17a]{Lentner:2015pla}
S.~D. Lentner and J.~Priel.
\newblock {A decomposition of the Brauer-Picard group of the representation
  category of a finite group}.
\newblock {\em J. Algebra}, 489:264--309, 2017,
  \href{https://arxiv.org/abs/1506.07832}{arXiv:1506.07832 [math.QA]}.

\bibitem[LP17b]{2017arXiv170205133L}
S.~D. {Lentner} and J.~{Priel}.
\newblock {Three natural subgroups of the Brauer-Picard group of a Hopf algebra
  with applications}.
\newblock {\em Bull. Belg. Math. Soc. Simon Stevin}, 24:1--34, 2017,
  \href{https://arxiv.org/abs/1702.05133}{arXiv:1702.05133 [math.QA]}.

\bibitem[MNS12]{NMS}
J.~{Maier}, T.~{Nikolaus}, and C.~{Schweigert}.
\newblock {Equivariant modular categories via Dijkgraaf-Witten theory}.
\newblock {\em Adv. Theor. Math. Phys.}, 16(1):289--358, 2012,
  \href{https://arxiv.org/abs/1103.2963}{arXiv:1103.2963 [math.QA]}.

\bibitem[Map]{Maple}
Maple 18. 
\newblock{\em Maplesoft, a division of Waterloo Maple Inc., Waterloo, Ontario.}


\bibitem[Mon15]{MonnierHamiltionianAnomalies}
S.~Monnier.
\newblock Hamiltonian anomalies from extended field theories.
\newblock {\em Commun. Math. Phys.}, 338(3):1327--1361, 2015,
  \href{https://arxiv.org/abs/1410.7442v2}{arXiv:1410.7442 [hep-th]}.

\bibitem[Mor15]{Morton}
J.~C. Morton.
\newblock {Cohomological twisting of 2-linearization and extended TQFT}.
\newblock {\em J. Homotopy Relat. Struct.}, 10:127--187, 2015,
  \href{https://arxiv.org/abs/1003.5603}{arXiv:1003.5603 [math.QA]}.

\bibitem[MS18]{Parity}
L.~Müller and R.~J. Szabo.
\newblock {Extended quantum field theory, index theory and the parity anomaly}.
\newblock {\em Commun. Math. Phys.}, 362(3):1049--1109, 2018,
  \href{https://arxiv.org/abs/1709.03860}{arXiv:1709.03860 [hep-th]}.

\bibitem[MW18]{EHQFT}
L.~{M{\"u}ller} and L.~Woike.
\newblock {Parallel transport of higher flat gerbes as an extended homotopy
  quantum field theory}.
\newblock 2018, \href{https://arxiv.org/abs/1802.10455}{arXiv:1802.10455
  [math.QA]}.

\bibitem[Nas91]{NashBook}
C.~Nash.
\newblock {\em Differential Topology and Quantum Field Theory}.
\newblock Academic Press, London, 1991.

\bibitem[Seg88]{Segal1988}
G.~B. Segal.
\newblock The definition of conformal field theory.
\newblock In {\em Differential Geometrical Methods in Theoretical Physics},
  pages 165--171. Springer Netherlands, 1988.

\bibitem[Seg11]{FelixKleinSegal}
G.~B. Segal.
\newblock {Three roles of quantum field theory.}
\newblock {\em {Felix Klein Lectures}}, 2011.
\newblock Available at
  {\small\href{http://www.mpim-bonn.mpg.de/node/3372/abstracts}{\tt
  http://www.mpim-bonn.mpg.de/node/3372/abstracts}}.

\bibitem[ST11]{Stolz:2011zj}
S.~Stolz and P.~Teichner.
\newblock {Supersymmetric field theories and generalized cohomology}.
\newblock {\em Proc. Symp. Pure Math.}, 83:279--340, 2011,
  \href{https://arxiv.org/abs/1108.0189}{arXiv:1108.0189 [math.AT]}.

\bibitem[SW19]{OFK}
C.~{Schweigert} and L.~{Woike}.
\newblock {Orbifold construction for topological field theories}.
\newblock {\em J. Pure Appl. Algebra}, 223:1167-1192, 2019, \href{https://arxiv.org/abs/1705.05171}{arXiv:1705.05171
  [math.QA]}.

\bibitem[SW18a]{EOFK}
C.~{Schweigert} and L.~{Woike}.
\newblock {Extended homotopy quantum field theories and their orbifoldization}.
\newblock 2018, \href{https://arxiv.org/abs/1802.08512}{arXiv:1802.08512
  [math.QA]}.

\bibitem[SW18b]{SWParallelSections}
C.~Schweigert and L.~Woike.
\newblock A parallel section functor for 2-vector bundles.
\newblock {\em Theory Appl. Categ.}, 33(23):644--690, 2018,
  \href{https://arxiv.org/abs/1711.08639}{arXiv:1711.08639 [math.CT]}.

\bibitem[Tac17]{Tachikawa:2017gyf}
Y.~Tachikawa.
\newblock {On gauging finite subgroups}.
\newblock 2017, \href{https://arxiv.org/abs/1712.09542}{arXiv:1712.09542
  [hep-th]}.

\bibitem[TCSR18]{Tiwari2017}
A.~Tiwari, X.~Chen, K.~Shiozaki, and S.~Ryu.
\newblock {Bosonic topological phases of matter: Bulk-boundary correspondence,
  symmetry-protected topological invariants, and gauging}.
\newblock {\em Phys. Rev. B}, 97(24):245133, 2018,
  \href{https://arxiv.org/abs/1710.04730}{arXiv:1710.04730 [cond-mat.str-el]}.

\bibitem[TE18]{Thorngren:2018wwt}
R.~Thorngren and D.~V. Else.
\newblock {Gauging spatial symmetries and the classification of topological
  crystalline phases}.
\newblock {\em Phys. Rev. X}, 8(1):011040, 2018,
  \href{https://arxiv.org/1612.00846}{arXiv:1612.00846 [cond-mat.str-el]}.

\bibitem[tH80]{tHooft1979}
G.~'t~Hooft.
\newblock {Naturalness, chiral symmetry, and spontaneous chiral symmetry
  breaking}.
\newblock {\em NATO Sci. Ser. B}, 59:135--157, 1980.

\bibitem[Tho17]{Thorngren:2017vzn}
R.~Thorngren.
\newblock {Topological terms and phases of sigma-models}.
\newblock 2017, \href{https://arxiv.org/abs/1710.02545}{arXiv:1710.02545
  [cond-mat.str-el]}.

\bibitem[Tur10]{turaev2010homotopy}
V.~G Turaev.
\newblock {\em {Homotopy Quantum Field Theory}}.
\newblock European Mathematical Society, 2010.

\bibitem[TV14]{TV14}
V.~G. Turaev and A.~Virelizier.
\newblock {On three-dimensional homotopy quantum field theory~{II}: The surgery
  approach}.
\newblock {\em Intern. J. Math.}, 25(04):1450027, 2014.

\bibitem[TvK15]{Thorngren2015}
R.~Thorngren and C.~von Keyserlingk.
\newblock {Higher SPT's and a generalization of anomaly inflow}.
\newblock 2015, \href{https://arxiv.org/abs/1511.02929}{arXiv:1511.02929
  [cond-mat.str-el]}.

\bibitem[TY17]{Tachikawa2016}
Y.~Tachikawa and K.~Yonekura.
\newblock {On time-reversal anomaly of $(2{+}1)D$ topological phases}.
\newblock {\em Progr. Theor. Exp. Phys.}, 2017(3):033B04, 2017,
  \href{https://arxiv.org/abs/1610.07010}{arXiv:1610.07010 [hep-th]}.

\bibitem[Wei95]{Weibel}
C.~A. Weibel.
\newblock {\em An Introduction to Homological Algebra}.
\newblock Cambridge University Press, 1995.

\bibitem[Wen13]{Wen:2013oza}
X.-G. Wen.
\newblock {Classifying gauge anomalies through symmetry-protected trivial
  orders and classifying gravitational anomalies through topological orders}.
\newblock {\em Phys. Rev. D}, 88(4):045013, 2013,
  \href{https://arxiv.org/abs/1303.1803}{arXiv:1303.1803 [hep-th]}.

\bibitem[WG18]{PhysRevX.8.011055}
Q.-R. Wang and Z.-C. Gu.
\newblock Towards a complete classification of symmetry-protected topological
  phases for interacting fermions in three dimensions and a general group
  supercohomology theory.
\newblock {\em Phys. Rev. X}, 8:011055, 2018,
  \href{https://arxiv.org/abs/1703.10937}{arXiv:1703.10937 [cond-mat.str-el]}.

\bibitem[WHT{\etalchar{+}}18]{Wen2017}
X.~Wen, H.~He, A.~Tiwari, Y.~Zheng, and P.~Ye.
\newblock {Entanglement entropy for $(3{+}1)$-dimensional topological order
  with excitations}.
\newblock {\em Phys. Rev. B}, 97(8):085147, 2018,
  \href{https://arxiv.org/abs/1710.11168}{arXiv:1710.11168 [cond-mat.str-el]}.

\bibitem[Wil08]{TwistedDWandGerbs}
S.~Willerton.
\newblock The twisted {D}rinfeld double of a finite group via gerbes and finite
  groupoids.
\newblock {\em Algebr. Geom. Topol.}, 8(3):1419--1457, 2008,
  \href{https://arxiv.org/abs/math/0503266}{arXiv:math.QA/0503266}.

\bibitem[Wit15]{WittenString15}
E.~Witten.
\newblock Anomalies revisited.
\newblock {\em Lecture At Strings 2015}, 2015.
\newblock {Available at
  {\small\href{https://strings2015.icts.res.in/talkDocuments/6-2.00-2.30-Edward-Witten.pdf}{\tt
  https://strings2015.icts.res.in/talkDocuments/6-2.00-2.30-Edward-Witten.pdf}}}.

\bibitem[Wit16]{Witten:2016cio}
E.~Witten.
\newblock {The ``parity" anomaly on an unorientable manifold}.
\newblock {\em Phys. Rev. B}, 94(19):195150, 2016,
  \href{https://arxiv.org/abs/1605.02391}{arXiv:1605.02391 [hep-th]}.

\bibitem[WLHW18]{Wang2018}
H.~Wang, Y.~Li, Y.~Hu, and Y.~Wan.
\newblock {Gapped boundary theory of the twisted gauge theory model of
  three-dimensional topological orders}.
\newblock {\em J. High Energy Phys.}, 10:114, 2018,
  \href{https://arxiv.org/abs/1807.11083}{arXiv:1807.11083 [cond-mat.str-el]}.

\bibitem[WSW15]{Wang:2014tia}
J.~C. Wang, L.~H. Santos, and X.-G. Wen.
\newblock {Bosonic anomalies, induced fractional quantum numbers and degenerate
  zero modes: The anomalous edge physics of symmetry-protected topological
  states}.
\newblock {\em Phys. Rev. B}, 91(19):195134, 2015,
  \href{https://arxiv.org/1403.5256}{arXiv:1403.5256 [cond-mat.str-el]}.

\bibitem[WW15]{Wang:2014oya}
J.~C. Wang and X.-G. Wen.
\newblock {Non-abelian string and particle braiding in topological order:
  Modular $SL(3,\mathbb{Z})$ representation and $(3{+}1)$-dimensional twisted
  gauge theory}.
\newblock {\em Phys. Rev. B}, 91(3):035134, 2015,
  \href{https://arxiv.org/abs/1404.7854}{arXiv:1404.7854 [cond-mat.str-el]}.

\bibitem[WWH15]{PhysRevB.92.045101}
Y.~Wan, J.~C. Wang, and H.~He.
\newblock Twisted gauge theory model of topological phases in three dimensions.
\newblock {\em Phys. Rev. B}, 92:045101, 2015,
  \href{https://arxiv.org/abs/1409.3216}{arXiv:1409.3216 [cond-mat.str-el]}.

\bibitem[WWW18]{Wang:2017loc}
J.~C. Wang, X.-G. Wen, and E.~Witten.
\newblock {Symmetric gapped interfaces of SPT and SET states: Systematic
  constructions}.
\newblock {\em Phys. Rev. X}, 8(3):031048, 2018,
  \href{https://arxiv.org/1705.06728}{arXiv:1705.06728 [cond-mat.str-el]}.

\bibitem[Yos17]{Yoshida:2017xqa}
B.~Yoshida.
\newblock {Gapped boundaries, group cohomology and fault-tolerant logical
  gates}.
\newblock {\em Ann. Phys.}, 377:387--413, 2017,
  \href{https://arxiv.org/1509.03626}{arXiv:1509.03626 [cond-mat.str-el]}.

\bibitem[You18]{Young2018}
M.~B. Young.
\newblock {Orientation twisted homotopy field theories and twisted unoriented
  Dijkgraaf-Witten theory}.
\newblock 2018, \href{https://arxiv.org/abs/1810.04612}{arXiv:1810.04612
  [math.QA]}.

\end{thebibliography}

\newcommand{\etalchar}[1]{$^{#1}$}

\end{document}